\DeclareMathOperator{\tr}{\mathrm{Tr}}
\newtheorem{theorem}{Theorem}
\newtheorem{definition}{Definition}
\newtheorem{lemma}[theorem]{Lemma}%
\newtheorem{remark}{Remark}%
\newtheorem{proposition}[theorem]{Proposition}%
\def\E{{\mathbb E}}
\def\cE{{\mathcal{E}}}
\def\QED{\mbox{\rule[0pt]{1.5ex}{1.5ex}}}
\newenvironment{proofof}[1][\proofname]{\paragraph{Proof of #1:}}{\hfill\QED}
\author[1]{Farzin Salek}
\author[2]{Patrick Hayden}
\author[3]{Masahito Hayashi}
\affil[1]{Department of Mathematics, Technical University of Munich, Munich, Germany}
\affil[2]{Department of Applied Physics, Stanford University, Stanford, CA, USA}
\affil[2]{Department of Physics, Stanford University, Stanford, CA, USA}
\affil[3]{School of Data Science, The Chinese University of Hong Kong, Shenzhen, China}
\affil[3]{International Quantum Academy (SIQA),
Futian, Shenzhen, China}
\affil[3]{Graduate School
of Mathematics, Nagoya University, Nagoya, Japan}
\begin{document}
\title{Three-Receiver Quantum Broadcast Channels:\\Classical Communication with Quantum Non-unique Decoding
}

\date{}

\maketitle

\begin{abstract}
In network communication, it is common in broadcasting scenarios for there to exist a hierarchy among receivers based on information they decode due, for example, to different physical conditions or  premium subscriptions. This hierarchy may result in varied information quality, such as higher-quality video for certain receivers. This is modeled mathematically as a degraded message set, indicating a hierarchy between messages to be decoded by different receivers, where the default quality corresponds to a common message intended for all receivers, a higher quality is represented by a message for a smaller subset of receivers, and so forth. We extend these considerations to quantum communication, exploring three-receiver quantum broadcasts channel with two- and three-degraded message sets. Our technical tool involves employing quantum non-unique decoding, a technique we develop by utilizing the simultaneous pinching method. We construct one-shot codes for various scenarios and find achievable rate regions relying on various quantum Rényi mutual information error exponents. Our investigation includes a comprehensive study of pinching across tensor product spaces, presenting our findings as the asymptotic counterpart to our one-shot codes. By employing the non-unique decoding, we also establish a simpler proof to Marton's inner bound for two-receiver quantum broadcast channels without the need for more involved techniques. Additionally, we derive no-go results and demonstrate their tightness in special cases.

\end{abstract}


\section{Introduction and Problem Setup}
More than five decades ago, Cover \cite{cover-broadcast} formulated the problem of simultaneous communication from a single transmitter to several receivers via a single broadcast channel.
 In classical information theory, the broadcast channel is characterized by a conditional probability distribution $p(y_1, y_2, \cdots,y_k | x)$, with a single input $x$ for the single sender and multiple disjoint outputs $(y_1,\ldots,y_k)$, one for each receiver. Thus, each marginal channel $p(y_i\vert x)$ connecting receiver $Y_i$ to the transmitter undergoes a statistically different noise process.
 The capacity region of this channel consists of all possible rate pairs $(R_1,\ldots,R_k)$ which are simultaneously attainable with arbitrary reliability, where $R_i$ denotes the rate of the $i$-th receiver. 
  Extensive research has been dedicated to understanding the capacity of this channel with only two receivers, and especially in scenarios where the channel is assumed to be memoryless and available for many independent uses \cite{elgamal-kim,1056046,1056302,6145471}. 
Although this problem has been resolved in several special cases \cite{elgamal-kim,1054980,korner-marton}, the most general scenario where a sender wishes to transmit private messages to each receiver, has remained open for over five decades. Here, private does not refer to confidentiality, but rather to individualized message tailored for each receiver.

 To have an intuitive understanding of why this problem is notoriously difficult, consider giving a lecture to two groups with \textit{different backgrounds} and \textit{independent interests}; for example, the information intended for one group is about a math problem and for the other is about a piece of literature analysis (so that each group discards the information of the other even if they get to learn it). So the lecturer aims at providing independent information to the groups. One possible strategy is focusing solely on the math problem, ignoring the other group; another is to focusing solely on literature, disregarding the group interested in the math problem.  If the rate region is a portion of the positive quarter plane, these extreme cases correspond to two points on the $x$ and $y$ axes. A more practical approach would be if the lecturer divides his time between the two groups; so based on the dedicated time slot for each group, this $time-sharing$ strategy corresponds to a straight line connecting the two aforementioned extreme points on $x$ and $y$ axes. This strategy and similar ones are already established in the seminal paper \cite{cover-broadcast} to be sub-optimal. Instead, it was shown in the same paper that one should distribute high-rate information across low-rate messages, a technique known as superposition coding. In superposition coding, the messages are encoded into correlated random variables, such that one message is piggybacked on top of the other. Nonetheless, encoding private messages into correlated variables has proven to not always be effective. To address this issue, Marton proposed a technique in which random variables encoding two private messages are generated independently, but the variables become correlated through a particular binning structure \cite{1056046,1056302}. However, this approach also could not establish the capacity of the broadcast channels. Subsequently, it was shown that combining Marton's binning with superposition coding can largen the achievable rate region \cite{5673931,6809158}. Although converse bounds have come close to the latter rate region \cite{1055883,4797571}, the optimality remains an open question. 

The first scenario, where the capacity was established,
corresponds to a situation where one of the receivers is at a certain disadvantage compared to the other; for example, the lecturer's accent is more understandable to one group of pupils. Another example could be that one group is seated closer to the lecturer, so that they could hear the lecturer more clearly.
This is modeled, strictly speaking, as a degraded broadcast channel \cite{1054980,gallager_degraded,1056029}, where the Markov chain $X-Y_1-Y_2$ holds between the transmitter random variable $X$ and those of the two receivers $(Y_1,Y_2)$.

Another scenario where the capacity is established corresponds to a situation where one piece of information is intended for both groups of pupils, while some private information meant solely for one group. For example, a basic math course could be supplemented by an advanced topic, so that both groups tune in for the basic course but only one group receives the advanced course. (The other group may or may not learn this advanced course. That is not a concern in this setting.) Another example could be TV information broadcasting, where a default quality is guaranteed for all receivers. However, a higher-quality is guaranteed for receivers with a premium subscription. This scenario is modeled with degraded message sets: A common message will be received by all receivers, a private message is meant for a subset of receivers, another private message for a smaller subset of receiver, and so forth. For two-receiver broadcast channels, a two-degraded message set is relevant, where a common message is decoded by both receivers and a private message solely by one receiver \cite{korner-marton}. 

The situation becomes more challenging for three-receiver broadcast channels ${p(y_1,y_2,y_3\vert x)}$:
Transmitting private messages to each receiver becomes out of the question. Additionally, based on the relative statistical noise levels, one can define different hierarchies among receivers. Furthermore, one can define various two- and three-degraded message sets, giving rise to different scenarios. Among various emerging scenarios, there is only one situation where the capacity of the three-receiver broadcast channel has been established \cite{Nair-Elgamal}. This scenario pertains to a two-degraded message set, where a common message is intended for all three receivers and a private message solely to one receiver. Moreover, one of the receivers is assumed to be degradable with respect to another receiver, i.e. $X-Y_1-Y_2$; this model is referred to as three-receiver multilevel broadcast channel. The primary technique used by \cite{Nair-Elgamal} is called non-unique or indirect decoding. In this decoding scheme, a receiver tries to decode a certain message which is not intended for it. Therefore, in the case of an erroneous detection, there will be no further contribution to the overall error probability. The paper also suggests that achieving the capacity region without non-unique decoding is possible. However, the rate region without non-unique decoding becomes more complex, and establishing a matching outer bound may become very challenging. The technique of non-unique decoding was subsequently used in several other scenarios, and it was shown that the same rate regions can be attained without non-unique decoding \cite{6774863}. However, the question of whether this is generally the case remains unresolved.
The paper \cite{Nair-Elgamal} also investigates more involved models with two- and three-degraded message sets. However, there is no matching no-go theorem to establish the capacity region.

\medskip
Nearly four decades after Cover's seminal work on broadcast channels \cite{cover-broadcast}, quantum broadcast channels emerged, incorporating the principles of quantum mechanics \cite{q-yard}. A classical-quantum (cq-) broadcast channel describes a physical scenario where the transmitter is able to prepare any quantum state (described by a density matrix) in the laboratories of receivers conditioned on its classical letter $x$. In other words, the cq-broadcast channel pertains to a scenario where, upon sending a classical letter $x$ down the channel, each of the receivers obtains its respective quantum system, resulting in a collective state of the receivers in an entangled mixed state depending on $x$.
Due to the non-commutativity of operators in quantum mechanics, the problem of communication over quantum broadcast channels only becomes more challenging than the classical case. Therefore, a logical approach is to start by exploring quantum broadcast scenarios where their classical counterparts are resolved. Specifically,  one can attempt to develop codes or prove no-go results for situations where the classical counterpart is well-studied or its capacity is known.
This was also the approach taken by \cite{q-yard}. Utilizing the idea of superposition coding from classical information, the latter paper investigated two-receiver cq-broadcast channels when one of the receivers is degradable with respect to the other \cite{Devetak2005-kw} in addition to the scenario with one common and one private message (two-degraded message set). However, unlike the classical counterparts where the capacities are known, the no-go results for these two scenarios of cq-broadcast channels do not match the achievable rate region in general. These converse bounds are single-letter and present the same structure for the rate region, meaning that the number of equations and involved variables are similar. However, the difference with the achievable region stems from conditioning on quantum systems \cite{6634255,8370123}: The converse results in information theory usually involve identification of new random variables which depend on inputs and outputs. In the context of broadcast channels, all known identifications of random variables involve output systems, which are quantum. This implies that the no-go results presented in \cite{q-yard}, are, as of today, only known to coincide with the achievable region essentially in the classical setting. 

Naturally, the next step in this study was to generalize Marton's inner bound for quantum broadcast channels~\cite{savov-wilde}. However, unlike the superposition coding inner bound, generalization of Marton's code posed significant challenges. 
Initial attempts to generalize Marton's code to quantum information~\cite{savov-wilde-isit}, though promising, left some gaps. However, those challenges were overcome in \cite{7412732} by introducing the new idea of ``overcounting''. Subsequently, the gaps in the initial work were rectified in \cite{savov-wilde} by using overcounting idea. We discuss this further in Sec. \ref{marton-code}. Despite the difficulties faced, the quantum broadcast channels have remained an interesting topic over the years. For recent contributions and other important works, see \cite{5466521,Bäuml_2017,7438836,PhysRevD.109.065020,9153029,2023arXiv230412114C,cheng2023quantum}, as well as references within these papers for additional insights.

\medskip

In this work, our aim is to investigate three-receiver quantum broadcast channels, where an input alphabet is transmitted to three receivers. Each receiver observes a subsystem of a tripartite mixed state, which is determined by the letter transmitted through the channel.  
Formally, a discrete memoryless three-receiver classical-quantum broadcast channel consists of an input alphabet $\mathcal{X}$ and a set of density matrices $\{\rho_x^{B_1B_2B_3}\}_{x\in\mathcal{X}}$ on quantum systems $B_1B_2B_3$, where receiver $i$ obtains $\rho_x^{B_i}$ as output when the sender inputs classical symbol $x$ into the channel. We denote this channel as  $x\to\rho_x^{B_1B_2B_3}$. The goal is to communicate bits of information from the sender to the receivers. 
In the most general setting, the sender has three independent messages and each message is meant to be sent to a particular receiver. These messages are private or individualized in the sense that they only contain useful information for their intended receivers; otherwise, there are no secrecy requirements imposed on the messages. 

For the three-receiver quantum broadcast channel, we naturally study scenarios where the capacity of the classical counterpart is either known or is reasonably estimated \cite{Nair-Elgamal}. Specifically, since the capacity for the transmission of the private messages remains open even for the two-receiver broadcast channel, we instead prioritize studying the capacity for the degraded-message sets. 
A message set is called degraded when there exists a hierarchical structure among the messages intended for different receivers; that is, the smallest set of messages intended for all receivers, followed by additional sets that include the preceding ones in a nested fashion. For the three-receiver quantum broadcast channel, among various two- and three-degraded message sets, in this work, we focus on the following two scenarios:

\begin{itemize}
\item A two-degraded message set where there are two independent messages $(M_0,M_1)$ such that $M_0$ is a common message intended for all receivers and $M_1$ is intended solely for receiver $B_1$
    \item A three-degraded message set where there are three independent messages $(M_0,M_1,M_2)$ such that $M_0$ is a common message intended for all receivers, $M_1$ is intended for receivers $B_1$ and $B_2$ and $M_2$ is intended solely for receiver $B_1$.
\end{itemize}

\medskip

In the study of broadcast channels, it is natural to consider the situation where some receivers get better quality signals than other receivers. (This should not be confused with degraded message sets as the noise that each channel experiences is independent of the messages, although naturally a receiver with better conditions is can generally achieve higher rates.) A three-receiver multilevel quantum broadcast channel refers to a scenario where one of the receivers is degradable with respect to another receiver. (See the next section for the formal definition of degradable channels.) Here, we assume that receiver $B_2$ is degradable with respect to the receiver $B_1$, i.e. there exists a cptp map $\mathcal{M}^{B_1\to B_2}$ such that $\rho_x^{B_2}=\mathcal{M}(\rho_x^{B_1})$ for any $x\in\mathcal{X}$ (see Fig. \ref{multi-level.two-degraded}). The three-receiver multilevel quantum channel will be pivotal in our examination of the three-receiver quantum broadcast channel. 
We examine this channel with two-degraded message set in Sec. \ref{multilevel-two-degraded}. 
When there is no hierarchy among the quality of the received signals, we refer to the channel as the general three-receiver quantum broadcast channel. We study the general three-receiver quantum broadcast channel with two- and three-degraded message sets in Secs. \ref{general-two-degraded} and \ref{general-three-degraded}, respectively. When the hierarchy in signal quality is not relevant, we may inclusively say three-receiver quantum broadcast channel.

A one-shot code assumes that the channel is available for a single use. However, in the asymptotic regime, we assume that the three-receiver quantum broadcast channel is available for many independent uses. In this context, the channel is memoryless, meaning that the output at each use of the channel depends only on the input at that specific use and is conditionally independent of inputs and outputs at other times. In light of this, a one-shot codebook of rate $\widetilde{R}$, for example, 
$\{z(1),z(2),\ldots,z(2^{\widetilde{R}})\}$, drawn from a distribution $p(z)$, will be substituted with a codebook $\{z^n(1),z^n(2),\ldots,z^n(2^{nR})\}$. In this codebook, each codeword $z^n(i)$ is generated by sampling $n$ times from the distribution $p(z)$, resulting in a total of $2^{nR}$ codewords, and $R$ is the rate of the code. So in order to transmit the codeword $x^n(i)=(x_i(1),x_i(2),\ldots,x_i(2^{nR}))$, the channel is used $n$ times independently, the output will be $\rho_{x_i(1)}^{B_1B_2B_3}\otimes\rho_{x_i(2)}^{B_1B_2B_3}\otimes\cdots\otimes \rho_{x_i(2^{nR})}^{B_1B_2B_3}$. The decoding POVMs are constructed from pinching maps defined with respect to the spectral decomposition of the receiver states. As we will see later, the expectation of the one-shot average error probability scales linearly with the number of distinct eigenvalues of the corresponding pinching map. Therefore, if the number of distinct eigenvalues of a state grows exponentially with the number of tensor product Hilbert spaces, the asymptotic error probability may not vanish. We have demonstrated in Proposition \ref{pinching-asymptotic} that the number of eigenvalues grows only polynomially with the number of tensor product spaces. We will present our results in the asymptotic setting, considering the scenario where the channel can be utilized $n$ times, with $n\to\infty$. For this reason, we don't define a one-shot explicitly. Instead, we define an asymptotic code for the three-receiver quantum broadcast channel.

\medskip
\textbf{Results:} We summarize our contributions briefly as follows. We define the three-receiver multilevel quantum broadcast channel, a quantum analogue of the model studied in \cite{Nair-Elgamal}. We generalize the non-unique decoding from classical context to quantum information, and establish single-letter bounds on the capacity region of the three-receiver multilevel quantum broadcast channel. Our bounds are optimal up to the usual conditioning on quantum systems. We use our non-unique decoding to find an independent proof for Marton's inner bound with common message, without relying on previous ideas. Next, we study general three-receiver quantum broadcast channels with two- and three-degraded message sets. Our results are based on error exponents of one-shot codes which we derive using pinching techniques. Along the way, we present a comprehensive study of pinching over tensor product spaces, revealing novel insights. Ultimately, we present our principal findings as the asymptotic limit of our one-shot codes.

\medskip

The remainder of the paper is structured as follows. In the next section, we provide notations, definitions and briefly present our pinching techniques. For pedagogical reasons, we begin by presenting our novel proof of Marton's code in Sec. \ref{marton-code}. We then explore the three-receiver multilevel quantum broadcast channel with two-degraded message set in Sec. \ref{multilevel-two-degraded}. Following that, we study the scenario with two-degraded message set for the general three-receiver quantum broadcast channel in Sec. \ref{general-two-degraded}. Lastly, we investigate the case of general three-receiver quantum broadcast channel with three-degraded message set in Sec. \ref{general-three-degraded}. Sec. \ref{appendix} serves as an appendix, housing comprehensive discussion and proofs pertaining to pinching maps and hypothesis testing lemmas.

 \medskip

\section{Foundations and Methodology}\label{notation-definition}
%
For any positive integer $m$, we use the notation $[1:m]=\{1,...,m\}$.
Throughout the paper, $\log$ denotes by default the binary logarithm. 
The capital letters $U, V, X$, etc., denote random variables, whose realizations and the alphabets are shown by the corresponding lowercase ($u$, $x$, etc.) and calligraphic letters ($\mathcal{U}$, $\mathcal{X}$, etc.), respectively. We also use $d_U,d_X$, etc. for the cardinalities of these random variables, respectively. (This convention remains consistent except in the proof of no-go results, where it is explicitly specified which systems are quantum.) A general distribution is denoted by $(U,X)\sim p(u,x)$; we may also denote the latter by $p(U=u,X=x)$. Quantum systems $A$, $B$, etc., are associated with (finite-dimensional) Hilbert spaces and are denoted with the same letter, whose dimensions are denoted by $d_A$, $d_B$, etc. A quantum state is a Hermitian positive semidefinite operator in a Hilbert space whose trace equals one; this is also referred to as a density matrix. A density matrix corresponds to a pure state if and only if its ranks equals one. We use superscripts to denote the system on which an operator acts. Trace and partial trace are exceptions to this rule, there we use subscripts to denote the systems on which they act. Multipartite systems $AB\ldots Z$ are described by tensor product Hilbert spaces $A\otimes B\otimes \cdots \otimes Z$.  
The identity operator on Hilbert space $A$ is denoted by $\mathbbm{1}^A$.
 The purified distance between quantum state $\rho$ and $\sigma$ is defined as $P(\rho,\sigma)=\sqrt{1-F(\rho,\sigma)^2}$, with the fidelity $F(\rho,\sigma)=\|\sqrt{\rho}\sqrt{\sigma}\|_1$, where $\|\omega\|_1=\max_{0\leq\Lambda\leq I}\tr\Lambda\omega$ is the trace norm of Hermitian operator $\omega$. Alternatively, $\|\omega \|_1 \coloneqq \tr\sqrt{\omega^\dagger \omega}$.
For Hermitian operators $T$ and $O$, we define the operator $\{T\ge O\}$ as the projector onto the positive subspace of the operator $(T-O)$. 

 A quantum channel $\mathcal{N}^{A\to B}$ is a completely positive and trace-preserving (cptp) linear map (or superoperator) taking operators in $A$ to those in $B$. Consider two quantum channels $\mathcal{N}^{A\to B_1}$ and $\mathcal{M}^{A\to B_2}$. The channel $\mathcal{M}^{A\to B_2}$ is regarded as degraded or degradable with respect to $\mathcal{N}^{A\to B_1}$ if there exists a channel $\mathcal{J}^{B_2\to B_1}$, such that applying it after the channel $\mathcal{M}^{A\to B_2}$ results in $\mathcal{N}^{A\to B_1}$, i.e. $\mathcal{N}^{A\to B_1}=\mathcal{J}^{B_2\to B_1}\circ \mathcal{M}^{A\to B_2}$, where $\circ$ denotes channel concatenation.

Let $\rho^{ABC}$ be a tripartite state. We write $S(A)_\rho=S(\rho^A)=-\tr\rho^A\log\rho^A$ for the von Neumann entropy of the reduced density matrix $\rho^A$, dropping the subscript if the state is clear from the context. In analogy to classical Shannon theory, we define conditional entropy, mutual information, and conditional mutual information as follows, respectively:
\begin{align*}
    S(A\vert B)&= S(AB)-S(B),\\
    I(A;B)&=S(A)-S(A\vert B),\\
    I(A;B\vert C)&=S(A\vert C)-S(A\vert B,C).
\end{align*}
We will frequently work with classical-quantum (cq-) states. The statistical dependence between different random variables, as well as between them and quantum systems, will be explicitly reflected in the superscripts. For example, consider a cq-state $\rho^{UXB}$, with classical $(U,X)$ and quantum $B$ systems. When we have a Markov chain $U-X-B$, i.e. systems $U$ and $B$ are independent condition on $X$, we denote the cq-state by $\rho^{U-X-B}$, which is written as follows:
\begin{align*}
    \rho^{U-X-B} = \sum_{u,x} p(u,x) \ketbra{u}\otimes\ketbra{x}\otimes \rho_x^{B}.
\end{align*}

 For quantum states $\rho^{A}$ and $\sigma^{A}$, quantum relative entropy is defined as $D(\rho\|\sigma)=\tr\rho(\log\rho-\log\sigma)$ whenever $\rho\subseteq\text{supp}(\sigma)$, otherwise we set $D(\rho\|\sigma)=\infty$ \cite{Wilde_2013}. Here $\text{supp}(\rho)$ is the support of $\rho$, i.e., the span of eigenvectors of $\rho$ corresponding to non-zero eigenvalues.
Moving forward, for $\alpha \in (0, 1) \cup (1, \infty)$, the Petz R\'enyi relative entropy \cite{Petz1986QuasientropiesFF} and the sandwiched Rényi relative entropy \cite{Wilde_2014,M_ller_Lennert_2013}, are defined respectively as follows: If $\rho\subseteq\text{supp}(\sigma)$,
\begin{align*}
     D_{\alpha}(\rho \| \sigma) &\coloneqq \frac{1}{\alpha - 1} \log \operatorname{tr} \left\{\rho^{\alpha} \sigma^{1 - \alpha} \right\},\\
      \widetilde{D}_{\alpha}(\rho \| \sigma) &\coloneqq \frac{1}{\alpha - 1} \log \operatorname{tr} \left\{\left( \sigma^{\frac{1 - \alpha}{2\alpha}} \rho \sigma^{\frac{1 - \alpha}{2\alpha}} \right)^{\alpha} \right\},
\end{align*}
else, we set $D_{\alpha}(\rho \| \sigma) = \widetilde{D}_{\alpha}(\rho \| \sigma) = \infty$. Both of these R\'enyi relative entropies approach the quantum relative entropy $D(\rho\|\sigma)$ as $\alpha\to 1$.
\cite{M_ller_Lennert_2013,Wilde_2014}.
We define a variety of R\'enyi mutual information quantities as
\begin{align*}
I_\alpha^{\uparrow}(A;B)_{\rho^{AB}}&\coloneqq D_\alpha(\rho^{AB}\| \rho^A \otimes \rho^B)\\
\tilde{I}_\alpha^{\uparrow}(A;B)_{\rho^{AB}}&\coloneqq\widetilde{D}_\alpha(\rho^{AB}\| \rho^A \otimes \rho^B)
\\
\tilde{I}_\alpha^{\downarrow}(X;B)_{\rho^{XB}|\rho^X}&\coloneqq
\min_{\sigma^B} \widetilde{D}_\alpha(\rho^{XB}\| \rho^X \otimes \sigma^B) \\
\tilde{I}_\alpha^{\downarrow}(X;B|U)_{\rho^{UXB}|\rho^{UX}}&\coloneqq
\min_{\sigma^{X-U-B}: \sigma^{UX}=\rho^{UX}} 
\widetilde{D}_\alpha(\rho^{UXB}\| \sigma^{X-U-B}) .
\end{align*}
It is known that the aforementioned quantum R\'enyi mutual information terms are additive for tensor product states \cite{AHW,beigi-sandwiched}: For $\alpha\in (\frac{1}{2},1)\cup (1,\infty)$, we have
\begin{equation}
    \begin{aligned}
    \label{renyi-asymp}
        \tilde{I}_\alpha^{\uparrow}(A^n;B^n)_{(\rho^{AB})^{\otimes n}}&=n\tilde{I}_\alpha^{\uparrow}(A;B)_{\rho^{AB}},\\
        \tilde{I}_\alpha^{\downarrow}(X^n;B^n|U^n)_{(\rho^{UXB})^{\otimes n}|(\rho^{UX})^{\otimes n}}&=n\tilde{I}_\alpha^{\downarrow}(X;B|U)_{\rho^{UXB}|\rho^{UX}},
    \end{aligned}
\end{equation}
Max-relative entropy is defined as \cite{4957651}
\begin{align*}
    D_{\text{max}}(\rho\|\sigma)=\log\min\{\lambda: \rho\le \lambda\sigma\},
\end{align*}
and the smooth max-relative entropy as:
\begin{align*}
    D_{\text{max}}^{\varepsilon}(\rho\|\sigma)=\min_{\tilde{\rho}\in\mathcal{B}^{\varepsilon}(\rho)}D_{\text{max}}(\tilde{\rho}\|\sigma),
\end{align*}
where the $\varepsilon$-ball is defined using 
  the purified distance as $\mathcal{B}^{\varepsilon}(\rho)=\{\tilde{\rho}: P(\tilde{\rho},\rho)\le \varepsilon\}$.
In this work, we only need smooth max-relative entropy with classical systems. For a classical states $\rho^{UVX}$ and $\rho^{U-V-X}$, conditional smooth max-mutual information is defined as follows:
\begin{align*}
    I_{\text{max}}^{\varepsilon}(U;X|V)\coloneqq D_{\text{max}}^{\varepsilon}(\rho^{UVX}\|\rho^{U-V-X}).
\end{align*} 
It is known that the limit of this quantity on tensor product Hilbert spaces converges to to the (Shannon) conditional mutual information \cite{6574274}:
\begin{align}
\label{max-asymp}
    \lim_{n\to\infty}\frac{1}{n}I_{\text{max}}^{\varepsilon}(X^n;Y^n|U^n)_{\rho^{\otimes n}}=I(X;Y|U).
\end{align}

\medskip

Our achievability proofs rely on the Hayashi-Nagaoka inequality:
\begin{lemma}[Hayashi-Nagaoka inequality \cite{1207373}]
\label{hayashi-nagaoka}
    Consider the operators $0\le S\le I$ and $T\ge 0$. The following operator inequality holds:
    \begin{align*}
        I-(S+T)^{-\frac{1}{2}}S(S+T)^{-\frac{1}{2}}\leq 2(I-S)+4T.
    \end{align*}
\end{lemma}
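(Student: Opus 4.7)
The plan is to reduce this operator inequality to a manifestly positive one via algebraic manipulation. First, I would handle invertibility of $S+T$ by perturbing to $S+T+\epsilon I$ and letting $\epsilon\downarrow 0$ at the end, using continuity of the operator functional calculus. Since $(S+T)^{-1/2}$ commutes with $S+T$, the identity $(S+T)^{-1/2}(S+T)(S+T)^{-1/2}=I$ immediately yields
\[
I-(S+T)^{-1/2}S(S+T)^{-1/2}=(S+T)^{-1/2}T(S+T)^{-1/2}.
\]
The claim therefore reduces to showing $(S+T)^{-1/2}T(S+T)^{-1/2}\le 2(I-S)+4T$. Conjugating both sides by $U:=(S+T)^{1/2}$ and using $U^{2}=S+T$ converts this into the equivalent inequality $2(USU-S)\le T+4\,UTU$.

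The core step is to expand the left-hand side using the double-commutator identity $2USU=U^{2}S+SU^{2}-[U,[U,S]]=(S+T)S+S(S+T)-[U,[U,S]]$, which gives
\[
2(USU-S)=-2S(I-S)+ST+TS-[U,[U,S]].
\]
Because $0\le S\le I$, the first summand is non-positive and can be dropped. It then suffices to bound $ST+TS-[U,[U,S]]$ by $T+4\,UTU$. I would establish this through the parametrized operator AM-GM inequality
\[
XY^{*}+YX^{*}\le\alpha XX^{*}+\alpha^{-1}YY^{*}\qquad(\alpha>0),
\]
which follows from $(\alpha^{1/2}X-\alpha^{-1/2}Y)(\alpha^{1/2}X-\alpha^{-1/2}Y)^{*}\ge 0$, applied to suitable factorizations built from $T^{1/2}$, $S^{1/2}$, and $U$. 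The weights $\alpha$ are chosen precisely so that the constants $2$ and $4$ in the statement emerge naturally.

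The main obstacle is the non-commutative bookkeeping. In the commutative case the inequality trivially collapses to the scalar estimate $0\le 1+2s+4t$; in the non-commutative case, one must carefully balance the indefinite cross terms generated by expanding $USU$ and arrange the AM-GM splittings so that the entire residue is absorbed into the positive operator $4\,UTU$. The suboptimality of the constants $2$ and $4$ relative to the commutative regime reflects exactly the cost of this non-commutative accounting. Once the correct grouping is identified, the inequality follows by adding non-negative operators, and the original statement is recovered via the initial perturbation/continuity argument.
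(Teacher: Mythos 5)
The paper does not prove this lemma; it only cites Hayashi--Nagaoka \cite{1207373}, so your proposal must stand on its own. Your preliminary reductions are correct: the perturbation argument, the identity $I-(S+T)^{-1/2}S(S+T)^{-1/2}=(S+T)^{-1/2}T(S+T)^{-1/2}$, the conjugation by $U=(S+T)^{1/2}$ yielding the equivalent statement $2(USU-S)\le T+4UTU$, and the double-commutator expansion $2(USU-S)=-2S(I-S)+ST+TS-[U,[U,S]]$ all check out. But the proof stops exactly where the lemma's content begins. The remaining claim, $ST+TS-[U,[U,S]]\le T+4UTU$ (equivalently, using $[U,[U,S]]=-[U,[U,T]]$, the inequality $2ST+2TS+2T^2\le T+6UTU$), is asserted to follow from ``AM-GM applied to suitable factorizations built from $T^{1/2}$, $S^{1/2}$ and $U$ with weights chosen so that the constants emerge naturally.'' No factorization is exhibited, and the obvious candidates do not work: for instance $ST+TS\le\alpha STS+\alpha^{-1}T$ leaves you needing to dominate $STS$ and $T^2$ by $UTU$, and neither $S\le U$ nor $T\le U^2$ gives $STS\le UTU$ or $T^2\le UTU$, because conjugation $A\mapsto AXA$ is not monotone in $A$. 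Moreover, by discarding the negative term $-2S(I-S)$ you have committed to proving a statement strictly stronger than the lemma, without verifying that this stronger statement is still true. As written, the proposal is a plan whose decisive step is missing.

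For comparison, the standard argument avoids the double commutator entirely. Writing $\Pi$ for the support projection of $S+T$ and $R$ for the generalized inverse square root, one has $I-RSR=(I-\Pi)+RTR$; the splitting $RT^{1/2}=(R-\Pi)T^{1/2}+T^{1/2}$ and a single application of AM-GM give $RTR\le(1+c)(R-\Pi)T(R-\Pi)+(1+c^{-1})T$; then $T\le S+T$ yields $(R-\Pi)T(R-\Pi)\le(\Pi-(S+T)^{1/2})^2$, and the key observation $S\le(S+T)^{1/2}$ (from $S^2\le S\le S+T$ and operator monotonicity of the square root) gives $(\Pi-(S+T)^{1/2})^2=\Pi-2(S+T)^{1/2}+S+T\le\Pi-S+T$. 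Combining and setting $c=1$ produces $2(I-S)+4T$. If you want to salvage your route, you would need to supply an explicit positivity certificate for $2ST+2TS+2T^2\le T+6UTU$; the operator-monotonicity input $S\le U$ that powers the standard proof is the ingredient your sketch is missing.
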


\medskip

Pinching maps \cite{Hayashi2017-cv} are superoperators that play a central role in our technical derivations.
Consider the spectral decomposition of any Hermitian operator $H^A$ with $m\le d_A$ distinct eigenvalues ${h_1, \ldots, h_m}$:
\begin{align*}
    H = \sum_{i=1}^{m} h_i H_i, 
    \end{align*}
    where $H_i \,(=H_i^2)$ denotes the orthogonal projection onto the eigenspace corresponding to eigenvalue $h_i$ (so that the projectors $\{H_i\}_i$ are not necessarily rank one operators). 
 The pinching map with respect to the spectral decomposition of $H$ is defined as follows:
\begin{align*}
    \mathcal{E}_{H}(\rho) = \sum_i H_i \rho H_i.
\end{align*}
These maps are cptp channels so that $\sum_i H_i=\mathbbm{1}$. The action of the pinching map corresponds to the dephasing channel, where the off-diagonal elements of the density matrix are removed depending on the dephasing level. If the number of distinct eigenvalues equal the dimension of the system, i.e. $m=d_A$, the pinching corresponds to the completely dephasing channel where all off-diagonal elements are removed. On the other hand, if there is a single eigenvalue, so the underlying operator is $\mathbbm{1}/d_A$, the pinching map acts as the identity superoperator. When we work with cq-states, we may replace the subscript $H$ in $\mathcal{E}_{H}$ by a classical letter. For example, consider the cq-states $\rho^{UB}$ and $\sigma^{UB}$. For each $u$, let $\mathcal{E}_{u}^B$ be the pinching map with respect to the spectrum of $\sigma_u^B$. We can then define a pinching map on $UB$ as $\mathcal{E}^{UB}=\sum_u \ketbra{u}\otimes \mathcal{E}_u^B$. Then $\mathcal{E}^{UB}(\rho^{UB})$ commutes with $\sigma^{UB}$.
In general, the states $\rho^{AB}$ and $\mathbbm{1}^{A}\otimes \rho^B$ do not commute. However, they start commuting when we pinch the former with the spectrum of the operator $\rho^B$. 

In this work, we will work with nested pinching maps whose underlying operators are pinched density matrices on an $n$-fold tensor product Hilbert space. A crucial step consists of bounding the maximum number of distinct eigenvalues of these operators. We present these polynomial upper bounds in the following proposition. 

\begin{proposition}\label{pinching-asymptotic}
Consider the cq-states $\rho^{UVXB},\rho^{XV-U-B},\rho^{XU-V-B}$, and $\rho^{U-VX-B}$, classical on $U,V,X$ and quantum on $B$. Let $\nu,\nu_1,\nu_2$ and $\nu_3$ be defined as follows:
\begin{align*}
 \nu & = \left\{\text{distinct eigenvalues of}\hspace{0.1cm} (\rho^B)^{\otimes n}\right\},\\
    \nu_1 & = \max_{u^n\in\mathcal{U}^n} {\left\{\text{distinct eigenvalues of}\hspace{0.1cm} \mathcal{E}^{B^n}\big(\rho^{B^n}_{u^n}\big)\right\}},\\
     \nu_2 & = \max_{u^n\in\mathcal{U}^n,v^n\in\mathcal{V}^n} {\left\{\text{distinct eigenvalues of}\hspace{0.1cm} \mathcal{E}^{B^n}_{1|u^n}\big(\rho^{B^n}_{v^n}\big)\right\}},\\
     \nu_3 & = \max_{\substack{v^n\in\mathcal{V}^n,x^n\in\mathcal{X}^n\\u^n\in\mathcal{U}^n}} {\left\{\text{distinct eigenvalues of}\hspace{0.1cm} \mathcal{E}^{B^n}_{1|u^n}\big(\rho^{B^n}_{v^n,x^n}\big)\right\}}
\end{align*}
where $\rho^{B^n}_{u^n}= \rho_{u_1}^{B}\otimes \rho_{u_2}^{B}\otimes\cdots\otimes\rho_{u_n}^{B}$, $\rho^{B^n}_{v^n}= \rho_{v_1}^{B}\otimes \rho_{v_2}^{B}\otimes\cdots\otimes\rho_{v_n}^{B}$, $\rho^{B^n}_{v^n,x^n}=\rho_{u_1,v_1}^B\otimes\cdots,\otimes\rho_{u_n,v_n}^B$, and $\mathcal{E}^{B^n}$ and $\mathcal{E}^{B^n}_{1|u^n}$ are pinching maps corresponding to the spectral decomposition of the operators $(\rho^B)^{\otimes n}$ and $\mathcal{E}^{B^n}\left(\rho^{B^n}_{u^n}\right)$, respectively. Then the number of distinct eigenvalues are polynomial in the number of tensor product Hilbert spaces. More precisely,
\begin{align*}
\nu & \le (n+1)^{d_B-1} \\
\nu_1 & \le 
(n+1)^{d_U(d_B+2)(d_B-1)/2} \\
\nu_2 & \le (n+1)^{d_V d_U(d_B+2)(d_B-1)/2}\\
\nu_3 & \le (n+1)^{d_Xd_V d_U(d_B+2)(d_B-1)/2} .
\end{align*}
\end{proposition}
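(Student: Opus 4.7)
My plan is to establish each bound by identifying a large permutation symmetry group under which the relevant operator is invariant and then invoking Schur--Weyl duality to count the distinct eigenvalues. The bound on $\nu$ is a direct method-of-types argument: the distinct eigenvalues of $(\rho^B)^{\otimes n}$ are products $\prod_{i=1}^{d_B}\lambda_i^{k_i}$ indexed by types $(k_1,\ldots,k_{d_B})$ summing to $n$, and the number of such types is $\binom{n+d_B-1}{d_B-1}\le(n+1)^{d_B-1}$.

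For $\nu_1$, the key observation is that $\mathcal{E}^{B^n}(\rho^{B^n}_{u^n})$ commutes with the stabilizer subgroup $S_{u^n}\cong\prod_{u\in\mathcal{U}}S_{n_u}\subseteq S_n$ of $u^n$, where $n_u=|\{l:u_l=u\}|$. Indeed, the spectral projectors of $(\rho^B)^{\otimes n}$ are $S_n$-invariant so pinching is $S_n$-equivariant, while $\rho^{B^n}_{u^n}=\bigotimes_l\rho_{u_l}^B$ is invariant under precisely those permutations that fix $u^n$. Writing $B^{\otimes n}=\bigotimes_u B^{\otimes n_u}$ and applying Schur--Weyl duality to each factor yields $B^{\otimes n}\cong\bigoplus_{(\lambda^{(u)})}\bigl(\bigotimes_u V^{GL}_{\lambda^{(u)}}\bigr)\otimes\bigl(\bigotimes_u V^{S_{n_u}}_{\lambda^{(u)}}\bigr)$, where each $\lambda^{(u)}$ ranges over Young diagrams of size $n_u$ with at most $d_B$ rows. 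Since $\bigotimes_u V^{S_{n_u}}_{\lambda^{(u)}}$ is irreducible as a $\prod_u S_{n_u}$-module, Schur's lemma forces any operator in the commutant, in particular $\mathcal{E}^{B^n}(\rho^{B^n}_{u^n})$, to be of the block form $\bigoplus_{(\lambda^{(u)})}A_{(\lambda^{(u)})}\otimes I$, and hence to have at most $\prod_u\sum_{\lambda^{(u)}}\dim V^{GL}_{\lambda^{(u)}}$ distinct eigenvalues. Bounding the number of Young-diagram choices by $(n_u+1)^{d_B-1}$ and each $GL$-irrep dimension by $(n_u+1)^{d_B(d_B-1)/2}$ via the Weyl dimension formula gives each inner sum at most $(n_u+1)^{(d_B-1)(d_B+2)/2}$, so the full bound is $\prod_u(n_u+1)^{(d_B-1)(d_B+2)/2}\le(n+1)^{d_U(d_B-1)(d_B+2)/2}$, as desired.

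The bounds on $\nu_2$ and $\nu_3$ follow by iterating this reasoning with finer stabilizer subgroups. Because $\mathcal{E}^{B^n}(\rho^{B^n}_{u^n})$ is $S_{u^n}$-invariant, so are its spectral projectors, and hence the secondary pinching $\mathcal{E}^{B^n}_{1|u^n}$ is $S_{u^n}$-equivariant. Combined with the $S_{v^n}$-invariance (respectively $S_{(v^n,x^n)}$-invariance) of the input, the resulting pinched operator commutes with $S_{u^n}\cap S_{v^n}=\prod_{u,v}S_{n_{u,v}}$ (respectively $\prod_{u,v,x}S_{n_{u,v,x}}$), where $n_{u,v}$ and $n_{u,v,x}$ are the joint-type counts. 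Reapplying the Schur--Weyl bookkeeping on these blocks yields the stated exponents $d_Ud_V(d_B-1)(d_B+2)/2$ and $d_Ud_Vd_X(d_B-1)(d_B+2)/2$.

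The main obstacle is securing the clean Weyl dimension estimate $\dim V^{GL}_{\lambda^{(u)}}\le(n_u+1)^{d_B(d_B-1)/2}$ without spurious multiplicative constants. This can be done by writing each factor of the Weyl dimension formula as $(\lambda_i-\lambda_j+k)/k$ with $k=j-i\ge 1$, bounding it by $(n_u+k)/k\le n_u+1$, and noting that the product has exactly $d_B(d_B-1)/2$ factors. Once this combinatorial estimate is in place, the remainder is straightforward bookkeeping, the essential conceptual point being that each successive layer of pinching imposes commutation with a progressively smaller symmetry group determined by the joint types of the classical indices.
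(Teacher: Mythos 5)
Your proposal is correct and follows essentially the same route as the paper: a type-counting argument for $\nu$, and for $\nu_1,\nu_2,\nu_3$ a block-wise Schur--Weyl decomposition over the (joint) type classes of the conditioning sequences, bounding the number of Young diagrams by $(n+1)^{d_B-1}$ and each $GL$-irrep dimension by $(n+1)^{d_B(d_B-1)/2}$ per block. The only difference is presentational — you phrase the key structural fact as membership in the commutant of the stabilizer subgroup $\prod S_{n_u}$ (resp. its refinements) via equivariance of the nested pinchings, whereas the paper reorders the sequences into canonical block form and exhibits explicit Schur--Weyl pinching maps $\mathcal{F}_j$ commuting with everything in sight; the underlying counting is identical.
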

The proof of this proposition and further discussion on pinching are deferred to the appendix. (See Proposition \ref{pinching-asymptotic-appendix}.)

 \bigskip

\section{Quantum Marton Code with Common Message}
\label{marton-code}

\begin{figure}[t]
\includegraphics[width=15cm]{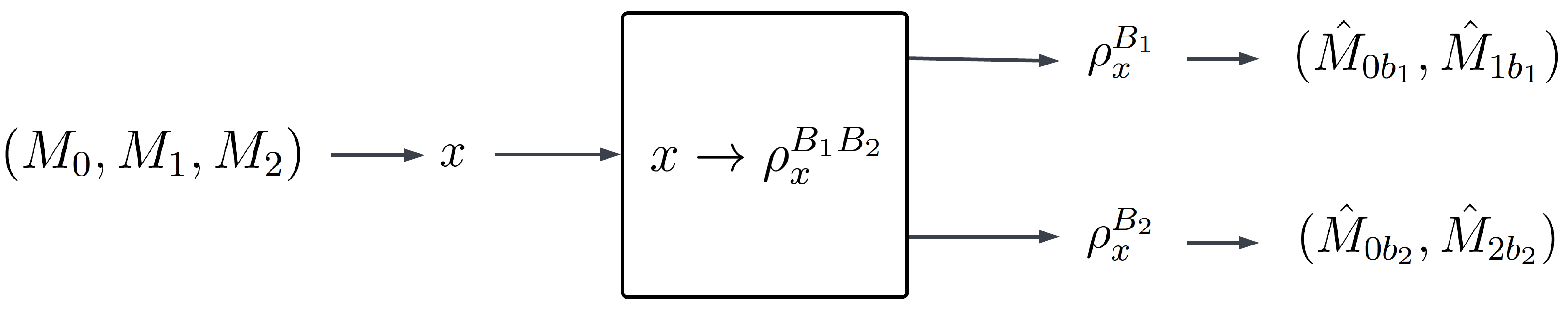}
\centering
\caption{Two-receiver quantum broadcast channel with private and common messages. A common message $M_0$ is intended for both receivers, while two private messages $M_1$ and $M_2$ are intended for receiver $B_1$ and $B_2$, respectively (here and throughout private does not imply confidentiality, but rather individualized). $M_{0b_i}$ denotes the estimate made by receiver $B_i$ about the message $M_0$. Similarly, $M_{1b_1}$ denotes the estimate made by receiver $B_1$ about the the message $M_1$ and $M_{2b_2}$ denotes the estimate made by receiver $B_2$ about the the message $M_2$}.
\label{Marton-fig}
\end{figure}

In this section, we provide a new proof of Marton's inner bound \cite{1056302,1056046,5673931} for the two-receiver quantum broadcast channel using the new idea of non-unique decoding. In particular, our proof does not rely on overcounting techniques \cite{7412732,savov-wilde}.
Generalization of the Marton's inner bound to quantum information involves new challenges that are unique to this problem. To estimate the error probability of a code, addressing certain error events becomes necessary. One such event concerns the likelihood of an independent codeword being mistaken for the correct one. In Marton's code, the codebook consists of bins, and given an output state, we are considering the independence of the output state from a codeword inside the message bin which is not the actual transmitted codeword. Although this codeword is not the actual transmitted one, it is not entirely independent from the output because both share a correlation stemming from belonging to the same bin. To put it another way, the output is generated from a codeword inside that bin, but we don't know which. This generates some sort of correlation. In the case where the POVMs seek to uniquely identify the codewords, the first term resulting from Hayashi-Nagaoka inequality (Lemma \ref{hayashi-nagaoka}) includes products of correlated operators, making the analysis challenging.
This was the primary obstacle in early attempts to generalize Marton's code to quantum information.

While the techniques of classical information theory can deal relatively easily with this problem, its analysis poses significant challenges in quantum information. The first attempts \cite{savov-wilde-isit} to adapt Marton's code to quantum information showed potential but had some gaps. However, these obstacles were tackled in \cite{7412732} by introducing the new concept of ``overcounting''. Later on, the issues in the initial research were fixed by applying the overcounting idea in \cite{savov-wilde}.
 The overcounting technique seems to have three caveats. First it is more involved. Second, the resulting preliminary rate inequalities do not mirror their classical counterparts. And third, it is not clear how it can be extended to the case with common message (though \cite{Sen2021} deals with Marton's code with common message). On the other hand, our proposed non-unique decoding technique avoids these complications and provides an analysis for the first term arising from the Hayashi-Nagaoka inequality which mirrors the classical result.

Here, we introduce a novel technique called non-unique decoding to establish Marton's inner bound for the two-receiver quantum broadcast channel without the necessity for overcounting ideas. In our technique, the decoders do not try to decode the transmitted codeword exactly, but only decide whether the transmitted codeword belongs to a specific bin. Note that using overcounting ideas, the decoder can decide which codeword was exactly transmitted. However, there is no need to learn the exact transmitted codeword because the goal is to decode the index of the bin, revealing the message, rather then the exact codeword. In the following, we formally define the problem and the code and present our result.

 Consider a two-receiver quantum broadcast channel $x\to\rho_x^{B_1B_2}$. A transmitter wants to send a common message $m_0\in[1:2^{nR_0}]$ to both receivers, and two private messages $m_1\in[1:2^{nR_1}]$ and $m_2\in[1:2^{nR_2}]$, to receivers $B_1$ and $B_2$, respectively; see Fig. \ref{Marton-fig}. 
We will present our results in the asymptotic setting, considering the scenario where the channel can be utilized $n$ times, with $n\to\infty$. For this reason, we don't define a one-shot explicitly. Instead, we define an asymptotic code as follows.

\begin{definition}\label{code-def-Marton}
A $(2^{nR_0},2^{nR_1},2^{nR_2},n,P^{(n)}_e)$ code $\mathcal{C}_n$ for the two-receiver quantum broadcast channel $x\to\rho_x^{B_1B_2}$ consists of the following components:
\begin{itemize}
    \item A triple of messages $(M_0,M_1,M_2)$ uniformly distributed over $[1:2^{nR_0}]\times[1:2^{nR_1}]\times[1:2^{nR_2}]$, where $R_0, R_1$ and $R_2$ represent the rates of the messages $M_0,M_1$ and $M_2$, respectively.
    \item An encoder that assigns a codeword $x^n(m_0,m_1,m_2)$ to each message triple $(m_0,m_1,m_2)\in[1:2^{nR_0}]\times[1:2^{nR_1}]\times[1:2^{nR_2}]$,
    \item Two POVMs, $\Lambda_{m_0m_1}^{B_1^{n}}$ on $B_1^{\otimes n}$, $\Lambda_{m_0m_2}^{B_2^{n}}$ on $B_2^{\otimes n}$ (indicating which messages are intended for decoding by each receiver) which satisfy
    \begin{align}
\tr{\rho_{x^n(m_0,m_1,m_2)}^{B_1^{n}B_2^{n}}\left(\Lambda_{m_0m_1}^{B_1^{n}}\otimes\Lambda_{m_0m_2}^{B_2^{n}}\right)}\geq 1-P^{(n)}_e(\mathcal{C}_n),
    \end{align}
    for every $(m_0,m_1,m_2)\in[1:2^{nR_0}]\times[1:2^{nR_1}]\times[1:2^{nR_2}]$. $P^{(n)}_e(\mathcal{C}_n)$ is the average probability of error defined as follows:
    \begin{align*}
        P^{(n)}_e(\mathcal{C}_n)= \text{Pr}\{\hat{M}_{0b_1},\hat{M}_{0b_2}\neq M_0 \,\text{or}\,\hat{M}_{1b_1},\neq M_1\,\text{or}\, \hat{M}_{2b_2}\neq M_2\},
    \end{align*}
 where $\hat{M}_{jb_i}$ represents the estimate made by receiver $B_i$ for the message $j$.
\end{itemize}
A rate triple $(R_0,R_1,R_2)$ is said to be achievable if there exists a sequence of codes $\mathcal{C}_n$ such that $P^{(n)}_e(\mathcal{C}_n)\to 0$ as $n\rightarrow \infty$. The capacity region is the closure of the set of all such achievable rate triples. 
\end{definition}

\medskip

\begin{theorem}
\label{Marton}
  Assuming a transmitter and two receivers have access to many independent uses of a two-receiver quantum broadcast channel $x\to\rho_x^{B_1B_2}$, the set of rate triples $(R_0,R_1,R_2)$ becomes asymptotically achievable if
\begin{align*}
    R_0 + R_1 & \leq I(U_0,U_1;B_1),\\
    R_0 + R_2 & \leq I(U_0,U_2;B_2),\\
    R_0 + R_1 + R_2 & \leq  I(U_0,U_1;B_1) + I(U_2;B_2|U_0) - I(U_1;U_2|U_0),\\
    R_0 + R_1 + R_2 & \leq   I(U_0,U_2;B_2) +I(U_1;B_1|U_0) - I(U_1;U_2|U_0), \\
    2R_0 + R_1 + R_2 & \leq   I(U_0,U_1;B_1) +I(U_0,U_2;B_2) - I(U_1;U_2|U_0),
\end{align*}
for some distribution $p(u_0,u_1,u_2)$ and function $x(u_0,u_1,u_2)$, giving rise to the state
\begin{align*}
    \rho^{U_0U_1U_2XB_1B_2} = \sum_{u_0,u_1,u_2} p(u_0,u_1,u_2)\ketbra{u_0}^{U_0}\otimes \ketbra{u_1}^{U_1}\otimes \ketbra{u_2}^{U_2}\otimes\rho_{x(u_0,u_1,u_2)}^{B_1B_2}.
\end{align*}

\end{theorem}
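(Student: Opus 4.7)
The plan is to combine superposition coding for the common message with Marton binning for the two private messages, and then to invoke a \emph{non-unique} decoding POVM whose construction absorbs the within-bin ambiguity before the Hayashi--Nagaoka inequality is applied. Concretely, fix auxiliary rates $\tilde R_1,\tilde R_2\ge 0$, generate $u_0^n(m_0)$, $m_0\in[1{:}2^{nR_0}]$, i.i.d.\ from $p(u_0)^{\otimes n}$, and conditionally on each $u_0^n(m_0)$ generate $u_1^n(m_0,m_1,\ell_1)$ and $u_2^n(m_0,m_2,\ell_2)$ i.i.d.\ from the corresponding conditional marginals, organised into $2^{nR_j}$ bins of size $2^{n\tilde R_j}$. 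Because $U_0,U_1,U_2$ are classical, the classical mutual covering lemma applies verbatim: whenever $\tilde R_1+\tilde R_2>I(U_1;U_2\mid U_0)$, for every message triple $(m_0,m_1,m_2)$ there exist bin indices $(\ell_1^\star,\ell_2^\star)$ making $(u_0^n,u_1^n,u_2^n)$ jointly typical. The transmitter sends $x^n=x^n(u_0^n,u_1^n,u_2^n)$ symbol-wise.

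\textbf{Non-unique POVMs.} For receiver $B_1$, attach to each triple $(m_0,m_1,\ell_1)$ a codeword operator $T^{B_1^n}_{m_0,m_1,\ell_1}$ built from nested pinching projectors adapted to the spectra of $\rho^{B_1^n}_{u_0^n(m_0)}$ and of its conditional refinement $\bar\rho^{B_1^n}_{u_0^n(m_0),u_1^n(m_0,m_1,\ell_1)}$ (the marginalisation of $\rho^{B_1B_2}_{x(u_0,u_1,u_2)}$ over $u_2$); Proposition~\ref{pinching-asymptotic} keeps the eigenvalue counts polynomial, so that the Petz and sandwiched R\'enyi estimates controlling these pinched operators lose only subexponential factors. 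The crucial non-unique step is to sum over the bin index first and only then pretty-good normalise:
\begin{align*}
\overline T^{B_1^n}_{m_0,m_1}:=\sum_{\ell_1=1}^{2^{n\tilde R_1}} T^{B_1^n}_{m_0,m_1,\ell_1},\qquad \Lambda^{B_1^n}_{m_0,m_1}:=\Bigl(\sum_{m_0',m_1'}\overline T^{B_1^n}_{m_0',m_1'}\Bigr)^{-1/2}\overline T^{B_1^n}_{m_0,m_1}\Bigl(\sum_{m_0',m_1'}\overline T^{B_1^n}_{m_0',m_1'}\Bigr)^{-1/2}.
\end{align*}
The POVM $\Lambda^{B_2^n}_{m_0,m_2}$ is defined symmetrically.

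\textbf{Error analysis.} Apply Lemma~\ref{hayashi-nagaoka} with $S=\overline T^{B_1^n}_{m_0,m_1}$ (the correct bin) and $T=\sum_{(m_0',m_1')\ne(m_0,m_1)}\overline T^{B_1^n}_{m_0',m_1'}$ (the wrong bins). The $2(I-S)$ term is bounded above by $2\tr\rho^{B_1^n}_{x^n}\bigl(I-T^{B_1^n}_{m_0,m_1,\ell_1^\star}\bigr)$ because the transmitted codeword lies in the correct bin; this decays exponentially by the pinching and hypothesis-testing lemmas of the appendix together with the R\'enyi additivity~\eqref{renyi-asymp}. The $4T$ term splits into codewords with $m_0'\ne m_0$ (fully independent of the transmitted codeword) and codewords with $m_0'=m_0,\,m_1'\ne m_1$ (independent of the transmitted one once $u_0^n$ is conditioned on). Taking expectations and invoking the pinched Rényi bounds yields
\begin{align*}
R_0+R_1+\tilde R_1\le I(U_0,U_1;B_1),\qquad R_1+\tilde R_1\le I(U_1;B_1\mid U_0),
\end{align*}
together with the analogous pair at $B_2$. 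Fourier--Motzkin elimination of $(\tilde R_1,\tilde R_2)$ under the covering constraint $\tilde R_1+\tilde R_2\ge I(U_1;U_2\mid U_0)$ and $\tilde R_1,\tilde R_2\ge 0$ reproduces precisely the five inequalities in the theorem.

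\textbf{Main obstacle.} The point at which earlier quantum Marton proofs required overcounting is exactly the treatment of wrong codewords that share a bin with the transmitted one: with a unique decoder these codewords appear in the $4T$-term yet are \emph{not} independent of the channel output, since the mutual covering step correlated them through the jointly typical triple $(u_0^n,u_1^n,u_2^n)$, and quantum non-commutativity then blocks the direct classical argument. The virtue of the non-unique POVM is that these codewords are folded into the $S$-term, where no independence is required, and only genuinely independent wrong bins survive in $4T$. The hard part is to verify that the pinched operators $T^{B_1^n}_{m_0,m_1,\ell_1}$ can be chosen so that $\overline T^{B_1^n}_{m_0,m_1}$ remains suitable for Hayashi--Nagaoka (including, if needed, a truncation at $I$) while the summation over $\ell_1$ does not corrupt the R\'enyi mutual-information estimate used in $4T$; the polynomial eigenvalue counts of Proposition~\ref{pinching-asymptotic} are what allow this commutation with pinching to go through asymptotically.
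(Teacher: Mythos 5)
Your overall architecture matches the paper's: superposition for the common message, Marton binning with a (classical, asymptotically) mutual covering condition, and the key non-unique decoding idea of summing the pinched codeword operators over the within-bin index \emph{before} the pretty-good normalisation, so that same-bin codewords land in the $S$-term of Hayashi--Nagaoka rather than the $T$-term. Your identification of why this sidesteps overcounting is also exactly the paper's point.

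However, there is a concrete gap in the last step. Your codebook encodes only $m_0$ into $U_0$, so the one-shot constraints you obtain are $R_0+R_1+\tilde R_1\le I(U_0,U_1;B_1)$, $R_1+\tilde R_1\le I(U_1;B_1\mid U_0)$ (and the $B_2$ analogues), together with $\tilde R_1+\tilde R_2\ge I(U_1;U_2\mid U_0)$ and $\tilde R_1,\tilde R_2\ge 0$. Fourier--Motzkin elimination of $(\tilde R_1,\tilde R_2)$ from this system does \emph{not} reproduce the five inequalities of the theorem: it additionally produces $R_1\le I(U_1;B_1\mid U_0)$, $R_2\le I(U_2;B_2\mid U_0)$, and $R_1+R_2\le I(U_1;B_1\mid U_0)+I(U_2;B_2\mid U_0)-I(U_1;U_2\mid U_0)$, which cut the region down strictly (e.g.\ the theorem permits $R_0=0$, $R_1$ up to $I(U_0,U_1;B_1)$, which your constraints forbid). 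The paper avoids this by an extra rate-splitting/rate-transfer step: each private message is split as $R_1=S_{11}+S_{12}$, $R_2=S_{21}+S_{22}$, and the pieces $M_{11},M_{21}$ ride on the cloud centre, so the codebook is $u_0(m_0,m_{11},m_{21})$ and the "conditional" constraints involve only $S_{12}+r_1$ and $S_{22}+r_2$ rather than the full $R_1+\tilde R_1$ and $R_2+\tilde R_2$; eliminating $S_{11},S_{12},S_{21},S_{22},r_1,r_2$ then yields exactly the five stated inequalities. Your argument can be repaired either by performing this rate splitting, or by keeping your scheme and appending the standard rate-transfer observation that achievability of $(R_0,R_1,R_2)$ implies achievability of $(R_0-\Delta_1-\Delta_2,R_1+\Delta_1,R_2+\Delta_2)$, which the paper notes in a remark but which you do not invoke. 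As written, the claim that your elimination "reproduces precisely the five inequalities" is false.
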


\medskip

The proof of this theorem will be presented as the asymptotic analysis of the one-shot code we construct below. Therefore, the proof is deferred until Subsection \ref{marton-asymptotic}. We break down the process and go through the details in distinct subsections.

\subsection{One-shot code construction}
The asymptotic analysis of the code developed here recovers Theorem \ref{Marton}. Fix a pmf $p(u_0,u_1,u_2)=p(u_0)p(u_1\vert u_1)p(u_2\vert u_0)$, i.e. the random variables $U_1$ and $U_2$ are conditionally independent given $U_0$. The key idea behind Marton's code is to construct a code such that the aforementioned pmf will be close to a general pmf $p(u_0,u_1,u_2)$ without conditional independence.  The goal is to send a common message $m_0\in[0,2^{R_0}]$ to both $B_1$ and $B_2$, and two private messages $m_1\in[0,2^{R_1}]$ and $m_2\in[0,2^{R_2}]$ to receivers $B_1$ and $B_2$, respectively. 
\subsubsection{Rate splitting}
We split the message $M_1$ into two independent parts, $M_{11}$ at rate $S_{11}$ and $M_{12}$ at rate $S_{12}$. Similarly, the message $M_2$ is split into two independent parts, $M_{21}$ at rate $S_{21}$ and $M_{22}$ at rate $S_{22}$. Hence, $R_1 = S_{11} + S_{12}$ and $R_2 = S_{21} + S_{22}$.

\subsubsection{Codebook generation}
Fix a pmf $p(u_0,u_1,u_2)=p(u_0)p(u_1\vert u_1)p(u_2\vert u_0)$, i.e. $U_1-U_0-U_2$. We generate a codebook of size $2^{(R_0+S_{11}+S_{21})}$ randomly and independent according to the marginal distribution $p(u_0)$ as follows: 
\begin{align*}
     \{u_0(m_0,m_{11},m_{21})\}_{m_{0}\in [1:2^{R_0}], m_{11}\in [1:2^{S_{11}}],m_{21}\in[1:2^{S_{21}}]}\subset\mathcal{U}_0.
\end{align*}

For each codeword $u_0(m_0,m_{11},m_{21})$ corresponding to the message triple $(m_0,m_{11},m_{21})$, we generate two codebooks randomly and conditionally independently using conditional distributions $p(u_1|U_0=u_0)$ and $p(u_2|U_0=u_0)$ as follows: 
\begin{align*}
    &\{u_1(m_0,m_{11},m_{21},m_{12},k_1)\}_{m_{12}\in [1:2^{S_{12}}],k_1\in[1:2^{r_1}]}\subset\mathcal{U}_1\\
    &\{u_2(m_0,m_{11},m_{21},m_{22},k_2)\}_{m_{22}\in [1:2^{S_{22}}],k_2\in[1:2^{r_2}]}\subset\mathcal{U}_2.
\end{align*}
These codeboks contain $2^{(S_{12}+r_1)}$ and $2^{(S_{22}+r_2)}$ elements, respectively. We divide each of these latter codebooks into $2^{S_{12}}$ and $2^{S_{22}}$ bins, respectively, so that each bin contains $2^{r_1}$ and $2^{r_2}$ elements, respectively \footnote{Note that this partitioning is different from random binning in the context of source compression. Here each bin contains equal number of elements, each generated conditionally independent from other elements.}. The indices $k_1$ and $k_2$ keep track of the codewords inside each bin while the index of the bins denote the messages, i.e. they correspond to message $m_{12}$ of rate $S_{12}$ and $m_{22}$ of rate $S_{22}$. Each product bin $(m_{12},m_{22})$ therefore contains $2^{r_1+r_2}$ elements. To ensure that at least one element $(u_1(m_0,m_{11},m_{21},m_{12},k_1),u_2(m_0,m_{11},m_{21},m_{22},k_2))$ inside the product bin $(m_{12},m_{22})$ is suitable for encoding, we require that 
\begin{align*}
    r_1+r_2 \geq I_{\text{max}}^{\varepsilon}(U_1;U_2|U_0)+2\log\frac{1}{\varepsilon}.
\end{align*}
This follows from mutual covering lemma \cite[Fact 2]{Sen2021} (see also \cite{7282697}).
Roughly speaking, this requirement means that there exists at least one pair $$(u_1(m_0,m_{11},m_{21},m_{12},k_1),u_2(m_0,m_{11},m_{21},m_{22},k_2))$$ in each product bin $(m_{12},m_{22})$ whose joint distribution is close to $p(u_0)p(u_1,u_2|u_0)$ although they are generated from $p(u_0)p(u_1|u_0)p(u_2|u_0)$. If there is more than one such pair, choose an arbitrary one among them. Finally based on the three chosen codewords, generate 
\begin{align*}
    x\big(u_0(m_0,m_{11},m_{21}),u_1(m_0,m_{11},m_{21},m_{12},k_1),u_2(m_0,m_{11},m_{21},m_{22},k_2)\big).
\end{align*}

\medskip

\subsubsection{Encoding}

To send the message triple $(m_0,m_1,m_2)$, send the function $x(u_0,u_1,u_2)$ over the channel.

\medskip

\subsubsection{Decoding and Analysis of error probability}
The failure of encoder contributes a constant additive term $f(\varepsilon)$ to the ultimate error probability. We go through the details of the decoding errors below assuming that the encoding was successful.
Each decoder decodes the common message by finding the unique codeword $u_0(m_0,m_{11},m_{21})$ corresponding to the transmitted common message. For the private messages, however, unique decoding is not necessary; as long as the bin index containing the transmitted codeword is decoded correctly, there are no errors, even if the decoder selects a codeword different from the one sent by the transmitter. 

In the following we assume the encoding step was successful, meaning that a suitable pair was found inside the product bin $(m_{12}, m_{22})$ satisfying the mutual covering lemma. 
Suppose thes selected pair is $(u_1(m_0,m_{11},m_{21},m_{12},k_1),u_2(m_0,m_{11},m_{21},m_{22},k_2))$. We now describe POVM construction and error probability analysis for each receiver.

\medskip

\textbf{\emph{Receiver $B_1$:}}
We define the following projectors:

\begin{align*}
    \Pi^{U_0U_1B_1}_1&\coloneqq \{\mathcal{E}_1^{U_0B_1}\left(\rho^{U_0U_1B_1}\right)\geq 2^{S_{12}+r_1}\mathcal{E}^{B_1}\left(\rho^{U_1-U_0-B_1}\right)\},\\
      \Pi^{U_0U_1B_1}_0&\coloneqq \{\mathcal{E}^{B_1}(\rho^{U_0U_1B_1})\geq 2^{R_0+S_{11}+S_{21}+S_{12}+r_1}\rho^{U_0U_1}\otimes\rho^{B_1}\},
\end{align*}
where $\mathcal{E}^{B_1}$ is the pinching map with respect to the spectral decomposition of the operator $\rho^{B_1}$ and $\mathcal{E}_1^{U_0B_1}=\sum_{u_0}\ketbra{u_0}\otimes\mathcal{E}_{1|u_0}^{B_1}$, where $\mathcal{E}_{1|u_0}^{B_1}$ is a pinching map with respect to the spectral decomposition of $\mathcal{E}^{B_1}(\rho_{u_0}^{B_1})$ for any $u_0\in\mathcal{U}_0$.
We further define the following operators:
\begin{align*}
    \Pi_{1\,u_0,u_1}^{B_1} = \bra{u_0,u_1}\Pi^{U_0U_1B_1}_1\ket{u_0,u_1},\\
    \Pi_{0\,u_0,u_1}^{B_1} = \bra{u_0,u_1}\Pi^{U_0U_1B_1}_0\ket{u_0,u_1},
\end{align*}
and the operator that we use in constructing the pretty-good measurement:
\begin{align*}
\Pi^{B_1}_{u_0,u_1}=\Pi_{1\,u_0,u_1}^{B_1}\Pi_{0\,u_0,u_1}^{B_1}
\end{align*}
The operators we have defined so far are not code-specific. For a specific code $\mathcal{C}_1$, we constrcut the following square-root measurement POVM as the decoding operation of receiver $B_1$:

\begin{align*}   \Omega^{B_1}_{m_0,m_{11},m_{21},m_{12}}&=
\left(\sum_{m_0'=1}^{2^{R_0}}\sum_{m_{11}'=1}^{2^{S_{11}}}\sum_{m_{21}'=1}^{2^{S_{21}}}\sum_{m_{12}'=1}^{2^{S_{12}}}\Pi_{u_1(m_0',m_{11}',m_{21}',m_{21}',m_{12}')}^{B_1}\right)^{-\frac{1}{2}}\\
&\hspace{2.5cm}\,\times\,\Pi_{u_1(m_0,m_{11},m_{21},m_{12})}^{B_1}\,\times\\\
&\hspace{3.5cm}\left(\sum_{m_0'=1}^{2^{R_0}}\sum_{m_{11}'=1}^{2^{S_{11}}}\sum_{m_{21}'=1}^{2^{S_{21}}}\sum_{m_{12}'=1}^{2^{S_{12}}}\Pi_{u_1(m_0',m_{11}',m_{21}',m_{21}',m_{12}')}^{B_1}\right)^{-\frac{1}{2}}
\end{align*}
where 
\begin{align}
\label{non-unique-povm-1}
\Pi_{u_1(m_0,m_{11},m_{21},m_{12})}^{B_1}=\sum_{k_1=1}^{2^{r_1}}\Pi_{u_1(m_0,m_{11},m_{21},m_{12},k_1)}^{B_1}.
\end{align}

The decoder finds the messages encoded into the variable $U_0$ uniquely. However, the message encoded into the variable $U_1$ is decoded non-uniquely up to index $k_1$. The non-unique decoding becomes relevant in finding the private message $m_{12}$ because an error in detecting $m_{12}$ occurs only if some codeword from a different bin $m_{12}'\neq m_{12}$ clicks. In other words, as long as the decoder find a codeword inside the bin $m_{12}$, there will be no errors, even though the decoded codeword may be $u(m_{0},m_{11},m_{21},m_{12},k_1')$ for some $k_1'$ different from the actual chosen $k_1$. 
The average error probability of the code $\mathcal{C}_1$ is bounded as follows. Recall that we use the variable $\hat{M}_{tb_i}$ to denote the estimate made by receiver $B_i$ about that message $t$. We obtain:
\begin{align*}
    P_{e,1}(\mathcal{C}_1)&=\text{Pr}\{\hat{M}_{0b_1}\neq M_0,\hat{M}_{1b_1}\neq M_1|\mathcal{C}_1\}\\
    &=\text{Pr}\{\hat{M}_{0b_1}\neq M_0,\hat{M}_{11b_1}\neq M_{11},\hat{M}_{12b_1}\neq M_{12}|\mathcal{C}_1\}\\
     &\le\text{Pr}\{\hat{M}_{0b_1}\neq M_0,\hat{M}_{11b_1}\neq M_{11},\hat{M}_{21b_1}\neq M_{21},\hat{M}_{12b_1}\neq M_{12}|\mathcal{C}_1\}\\
    &=\frac{1}{2^{R_0 + S_{11} + S_{21}+ S_{12}}}\sum_{m_0=1}^{2^{R_0}}\sum_{m_{11}=1}^{2^{S_{11}}}\sum_{m_{21}=1}^{2^{S_{21}}} \sum_{m_{12}=1}^{2^{S_{12}}}\tr(I-\Omega^{B_1}_{m_0,m_{11},m_{21},m_{12}})\rho_{u_1(m_0,m_{11},m_{21},m_{12},k_1)}^{B_1}\\
    &\leq
    \frac{1}{2^{R_0 + S_{11} + S_{21}+ S_{12}}}\sum_{m_0=1}^{2^{R_0}}\sum_{m_{11}=1}^{2^{S_{11}}}\sum_{m_{21}=1}^{2^{S_{21}}}\sum_{m_{12}=1}^{2^{S_{12}}} \tr\biggl\{\biggl(2\big(I-\Pi_{u_1(m_0,m_{11},m_{21},m_{12})}^{B_1}\big)\\
    &+4\big(\sum_{\substack{(m_0',m_{11}',m_{21}',m_{21}',m_{12}')\neq\\ (m_0,m_{11},m_{21},m_{12})}}\Pi_{u_1(m_0',m_{11}',m_{21}',m_{21}',m_{12}')}^{B_1}\big)\biggl)\rho_{u_1(m_0,m_{11},m_{21},m_{12},k_1)}^{B_1}\biggl\},
\end{align*}
where the first inequality follows since we add a new event to the error probability and the second corresponds to the Hayashi-Nagaoka inequality Lemma \ref{hayashi-nagaoka}. We analyze each of the two terms arose from the Hayashi- Nagaoka inequality separately. Note that we remove the explicit reference to the upper and lower summands where they are clear from the context. We begin with the first term:
    \begin{align}
    \nonumber
    &\frac{2}{2^{R_0 + S_{11} + S_{21}+ S_{12}}}\sum_{\substack{m_0,m_{11},\\m_{21},m_{12}}} \tr\big(I-\Pi_{u_1(m_0,m_{11},m_{21},m_{12})}^{B_1}\big)\rho_{u_1(m_0,m_{11},m_{21},m_{12},k_1)}^{B_1}\\ \nonumber
    &\stackrel{\text{(a)}}{=}\frac{2}{2^{R_0 + S_{11} + S_{21}+ S_{12}}}\sum_{\substack{m_0,m_{11},\\m_{21},m_{12}}} \tr\big(I-\sum_{k_1'=1}^{2^{r_1}}\Pi_{u_1(m_0,m_{11},m_{21},m_{12},k_1')}^{B_1}\big)\rho_{u_1(m_0,m_{11},m_{21},m_{12},k_1)}^{B_1}\\ \nonumber
    &=\frac{2}{2^{R_0 + S_{11} + S_{21}+ S_{12}}}\sum_{\substack{m_0,m_{11},\\m_{21},m_{12}}} \tr\bigg\{\biggl(I-\Pi_{u_1(m_0,m_{11},m_{21},m_{12},k_1) }^{B_1}\\ \nonumber
    &\hspace{4.2cm}-\sum_{k_1'\neq k_1}\Pi_{m_{21}),u_1(m_0,m_{11},m_{21},m_{12},k_1')}^{B_1}\biggl)\rho_{u_1(m_0,m_{11},m_{21},m_{12},k_1)}^{B_1}\bigg\}\\  \nonumber
    &\stackrel{\text{(b)}}{\le}  \frac{2}{2^{R_0 + S_{11} + S_{21}+ S_{12}}}\sum_{\substack{m_0,m_{11},\\m_{21},m_{12}}} \tr\big(I-\Pi_{u_1(m_0,m_{11},m_{21},m_{12},k_1)}^{B_1}\big)\rho_{u_1(m_0,m_{11},m_{21},m_{12},k_1)}^{B_1}\\   \nonumber
    &\stackrel{\text{(c)}}{=} \frac{2}{2^{R_0 + S_{11} + S_{21}+ S_{12}}}\sum_{\substack{m_0,m_{11},\\m_{21},m_{12}}} \tr\bigl\{\big(I-\Pi_{0\,u_1(m_0,m_{11},m_{21},m_{12},k_1)}^{B_1}\Pi_{1\,u_1(m_0,m_{11},m_{21},m_{12},k_1)}^{B_1}\big)\\  \nonumber
    &\hspace{10cm}\times\rho_{u_1(m_0,m_{11},m_{21},m_{12},k_1)}^{B_1}\bigl\}\\ \label{hn-1.1}
    &\stackrel{\text{(d)}}{\le} \frac{2}{2^{R_0 + S_{11} + S_{21}+ S_{12}}}\sum_{\substack{m_0,m_{11},\\m_{21},m_{12}}} \tr\big(I-\Pi_{0\,u_1(m_0,m_{11},m_{21},m_{12},k_1)}^{B_1}\big)\rho_{u_0(m_0,m_{11},m_{21}),u_1(m_0,m_{11},m_{21},m_{12},k_1)}^{B_1}\\ \label{hn-1.2}
    &\hspace{0.3cm}+\frac{2}{2^{R_0 + S_{11} + S_{21}+ S_{12}}}\sum_{\substack{m_0,m_{11},\\m_{21},m_{12}}} \tr\big(I-\Pi_{1\,u_1(m_0,m_{11},m_{21},m_{12},k_1)}^{B_1}\big)\rho_{u_1(m_0,m_{11},m_{21},m_{12},k_1)}^{B_1},
\end{align}
where (a) follows from the definition of the non-unique decoding operator, (b) follows by throwing out the operators
\begin{align*}
    \left\{\Pi_{u_1(m_0,m_{11},m_{21},m_{12},k_1')}^{B_1}\right\}_{k'_1\neq k_1},
\end{align*}
(c) follows from the definition and (d) follows from the operator union bound.
We next analyze the second term arose from the Hayashi-Nagaoka inequality Lemma \ref{hayashi-nagaoka}:
\begin{align}
\nonumber
&\frac{4}{2^{R_0 + S_{11} + S_{21}+ S_{12}}}\sum_{\substack{m_0,m_{11},\\m_{21},m_{12}}}
    \sum_{\substack{(m_0',m_{11}',m_{21}',m_{12}')\neq \\(m_0,m_{11},m_{21},m_{12})}}\tr \Pi_{u_1(m_0',m_{11}',m_{21}',m_{12}')}^{B_1}\rho_{u_1(m_0,m_{11},m_{21},m_{12},k_1)}^{B_1}\\ \nonumber
   &\stackrel{\text{(a)}}{=}\frac{4}{2^{R_0 + S_{11} + S_{21}+ S_{12}}}\sum_{\substack{m_0,m_{11},\\m_{21},m_{12}}}\bigg(
    \sum_{\substack{m_{12}'\neq m_{12}\\ \nonumber
    m_0'=m_0,m_{11}'=m_{11},m_{21}'=m_{21}}}\tr \Pi_{u_1(m_0,m_{11},m_{21},m_{12}')}^{B_1}\rho_{u_1(m_0,m_{11},m_{21},m_{12},k_1)}^{B_1}\\
    &+ 
    \sum_{\substack{m_{12}',\\ \nonumber
    m_0'\neq m_0,m_{11}'\neq m_{11},m_{21}'\neq m_{21}}}\tr \Pi_{u_1(m_0,m_{11},m_{21},m_{12}')}^{B_1}\rho_{u_1(m_0,m_{11},m_{21},m_{12},k_1)}^{B_1}\bigg)\\ \nonumber
    &\stackrel{\text{(b)}}{=}\frac{4}{2^{R_0 + S_{11} + S_{21}+ S_{12}}}\sum_{\substack{m_0,m_{11},\\m_{21},m_{12}}}\bigg(
    \sum_{\substack{m_{12}'\neq m_{12}\\ 
    m_0'=m_0,m_{11}'=m_{11}\\m_{21}'=m_{21}}}\tr \sum_{k_1'=1}^{2^{r_1}}\Pi_{u_1(m_0,m_{11},m_{21},m_{12}',k_1')}^{B_1}\rho_{u_1(m_0,m_{11},m_{21},m_{12},k_1)}^{B_1}\\ \nonumber
    &+ 
    \sum_{\substack{m_{12}'\\
    m_0'\neq m_0,m_{11}'\neq m_{11}\\m_{21}'\neq m_{21}}}\tr \sum_{k_1'=1}^{2^{r_1}}\Pi_{u_1(m_0,m_{11},m_{21},m_{12}',k_1')}^{B_1}\rho_{u_1(m_0,m_{11},m_{21},m_{12},k_1)}^{B_1}\bigg)\\ \nonumber
   & \stackrel{\text{(c)}}{=}\frac{4}{2^{R_0 + S_{11} + S_{21}+ S_{12}}}\sum_{\substack{m_0,m_{11},\\m_{21},m_{12}}}\\ \nonumber
   &\bigg(
    \sum_{\substack{m_{12}'\neq m_{12}\\ 
    m_0'=m_0,m_{11}'=m_{11}\\m_{21}'=m_{21}}} \sum_{k_1'=1}^{2^{r_1}}\tr\Pi_{0\, u_1(m_0,m_{11},m_{21},m_{12}',k_1')}^{B_1}\Pi_{1\, u_1(m_0,m_{11},m_{21},m_{12}',k_1')}^{B_1}\rho_{u_1(m_0,m_{11},m_{21},m_{12},k_1)}^{B_1}\\ \nonumber
    &+ 
    \sum_{\substack{m_{12}'\\
    m_0'\neq m_0,m_{11}'\neq m_{11}\\m_{21}'\neq m_{21}}} \sum_{k_1'=1}^{2^{r_1}}\tr\Pi_{0\, u_1(m_0,m_{11},m_{21},m_{12}',k_1')}^{B_1}\Pi_{1\, u_1(m_0,m_{11},m_{21},m_{12}',k_1')}^{B_1}\rho_{u_1(m_0,m_{11},m_{21},m_{12},k_1)}^{B_1}\bigg)\\ \label{hn-2.1}
    &\stackrel{\text{(d)}}{=}\frac{4}{2^{R_0 + S_{11} + S_{21}+ S_{12}}}\sum_{\substack{m_0,m_{11},\\m_{21},m_{12}}}\bigg(
    \sum_{\substack{m_{12}'\neq m_{12}\\ 
    m_0'=m_0,m_{11}'=m_{11}\\m_{21}'=m_{21}}} \sum_{k_1'=1}^{2^{r_1}}\tr\Pi_{1\, u_1(m_0,m_{11},m_{21},m_{12}',k_1')}^{B_1}\rho_{u_1(m_0,m_{11},m_{21},m_{12},k_1)}^{B_1}\\ \label{hn-2.2}
    &+
    \sum_{\substack{m_{12}'\\
    m_0'\neq m_0,m_{11}'\neq m_{11}\\m_{21}'\neq m_{21}}} \sum_{k_1'=1}^{2^{r_1}}\tr\Pi_{0\, u_1(m_0,m_{11},m_{21},m_{12}',k_1')}^{B_1}\rho_{u_1(m_0,m_{11},m_{21},m_{12},k_1)}^{B_1}\bigg)
\end{align}
Regarding the equality (a), note that since the messages $m_0$, $m_{11}$ and $m_{21}$ are encoded into the same variable $u_0$, the relevant events are limited to either all of these three messages being correct or all three being incorrect. In other words, the event where one or two of the messages $(m_0,m_{11},m_{21})$ is decoded correctly and the other(s) decoded incorrectly, does not happen by construction. The equalities (b) and (c) follow from the definitions and (d) is because of $\Pi_0^{B_1}\Pi_1^{B_1}\le \Pi_0^{B_1}\le I$ and $\Pi_0^{B_1}\Pi_1^{B_1}\le \Pi_1^{B_1}\leq I$.
So far we have evaluated the average error probability of the random code $\mathcal{C}_1$ over the uniformly distributed messages as the sum of Eqs. \eqref{hn-1.1},\,\eqref{hn-1.2},\,\eqref{hn-2.1} and \eqref{hn-2.2}, i.e. $P_{e,1}(\mathcal{C}_1)=\eqref{hn-1.1} + \eqref{hn-1.2} + \eqref{hn-2.1} + \eqref{hn-2.2}$. This quantity itself is a random variable since it is a function of the random variable $\mathcal{C}_1$. We next find $\mathbb{E}_{\mathcal{C}_1}(P_{e,1}\left(\mathcal{C}_1)\right)$, the average of the average error probability over the choice of the random code. We evaluate this average for each of the four equations above separately. Starting with \eqref{hn-1.1}, we obtain:
\begin{align*}
    &\mathbb{E}_{\mathcal{C}_1}\left(  \frac{2}{2^{R_0 + S_{11} + S_{21}+ S_{12}}}\sum_{\substack{m_0,m_{11},\\m_{21},m_{12}}} \tr\big(I-\Pi_{0\,u_1(m_0,m_{11},m_{21},m_{12},k_1)}^{B_1}\big)\rho_{u_1(m_0,m_{11},m_{21},m_{12},k_1)}^{B_1} \right)\\
    &= \frac{2}{2^{R_0 + S_{11} + S_{21}+ S_{12}}}\sum_{\substack{m_0,m_{11},\\m_{21},m_{12}}} \tr_{B_1}\mathbb{E}_{\mathcal{C}_1}\left( \big(I-\Pi_{0\,u_1(m_0,m_{11},m_{21},m_{12},k_1)}^{B_1}\big)\rho_{u_1(m_0,m_{11},m_{21},m_{12},k_1)}^{B_1}\right)\\
    &=\frac{2}{2^{R_0 + S_{11} + S_{21}+ S_{12}}}\sum_{\substack{m_0,m_{11},\\m_{21},m_{12}}} \tr_{B_1}\tr_{U_0U_1}\left( (I-\Pi_{0}^{U_0U_1B_1})\rho^{U_0U_1B_1}\right)\\
    &=\frac{2}{2^{R_0 + S_{11} + S_{21}+ S_{12}}}\sum_{\substack{m_0,m_{11},\\m_{21},m_{12}}} \tr\left( (I-\Pi_{0}^{U_0U_1B_1})\rho^{U_0U_1B_1}\right)\\
    &=2\tr\left( (I-\Pi_{0}^{U_0U_1B_1})\rho^{U_0U_1B_1}\right).
\end{align*}
We can find a similar expression for \eqref{hn-1.2} along the same lines as follows:
\begin{align*}
    \mathbb{E}_{\mathcal{C}_1}\left( \text{Eq.} \eqref{hn-1.2} \right)
=
2\tr\left( (I-\Pi_{1}^{U_0U_1B_1})\rho^{U_0U_1B_1}\right).
\end{align*}
We next evaluate the average of the Eq. \eqref{hn-2.1} as follows: 
\begin{align*}
  &\mathbb{E}_{\mathcal{C}_1}\left(  \frac{4}{2^{R_0 + S_{11} + S_{21}+ S_{12}}}\sum_{\substack{m_0,m_{11},\\m_{21},m_{12}}}
    \sum_{\substack{m_{12}'\neq m_{12}\\ 
    m_0'=m_0,m_{11}'=m_{11}\\m_{21}'=m_{21}}}\tr \sum_{k_1'=1}^{2^{r_1}}\Pi_{1\, u_1(m_0,m_{11},m_{21},m_{12}',k_1')}^{B_1}\rho_{u_0(m_0,m_{11},m_{21})}^{B_1} \right)\\
   &= \frac{4}{2^{R_0 + S_{11} + S_{21}+ S_{12}}}\sum_{\substack{m_0,m_{11},\\m_{21},m_{12}}}
    \sum_{\substack{m_{12}'\neq m_{12}\\ 
    m_0'=m_0,m_{11}'=m_{11}\\m_{21}'=m_{21}}}\sum_{k_1'=1}^{2^{r_1}}\tr \mathbb{E}_{\mathcal{C}_1}\left( \Pi_{1\,u_1(m_0,m_{11},m_{21},m_{12}',k_1')}^{B_1}\rho_{u_0(m_0,m_{11},m_{21})}^{B_1} \right)\\
   & = \frac{4}{2^{R_0 + S_{11} + S_{21}+ S_{12}}}\sum_{\substack{m_0,m_{11},\\m_{21},m_{12}}}
    \sum_{\substack{m_{12}'\neq m_{12}\\ 
    m_0'=m_0,m_{11}'=m_{11}\\m_{21}'=m_{21}}}\sum_{k_1'=1}^{2^{r_1}}\tr \Pi^{U_0U_1B_1}_{1}\rho^{U_1-U_0-B_1}\\
    &= 4\times (2^{S_{12}}-1)\times 2^{r_1}\tr\Pi^{U_0U_1B_1}_{1}\rho^{U_1-U_0-B_1}\\
    &\leq 4\times 2^{S_{12}+r_1}\tr\Pi^{U_0U_1B_1}_{1}\rho^{U_1-U_0-B_1},
\end{align*}
where the last inequality follows from adding an extra term corresponding to $m_{12}'= m_{12}$ to the summation. We obtain the average of Eq. \eqref{hn-2.2} along the same lines as follows:
\begin{align*}
    \mathbb{E}_{\mathcal{C}_1}\left( \text{Eq.} \eqref{hn-2.2} \right) \leq 
    4\times 2^{R_0 + S_{11} + S_{21}+ S_{12}+r_1}\tr\Pi^{U_0U_1B_1}_{0}(\rho^{U_0U_1}\otimes\rho^{B_1}).
\end{align*}
Bringing everything together, we obtain:
\begin{align*}
    \mathbb{E}_{\mathcal{C}_1}\left(P_{e,1}(\mathcal{C}_1)\right)&=\mathbb{E}_{\mathcal{C}_1}\left(\text{Pr}\{\hat{M}_{0b_1}\neq M_0,\hat{M}_{1b_1}\neq M_1|\mathcal{C}_1\}\right)\\
    &\le2\tr\left( (I-\Pi_{1}^{U_0U_1B_1})\rho^{U_0U_1B_1}\right)
    +
    2\tr\left( (I-\Pi_{0}^{U_0U_1B_1})\rho^{U_0U_1B_1}\right)\\
    &\hspace{0.3cm}+4\times 2^{S_{12}+r_1}\tr\Pi^{U_0U_1B_1}_{1}\rho^{U_1-U_0-B_1}
    +
    4\times 2^{R_0 + S_{11} + S_{21}+ S_{12}+r_1}\tr\Pi^{U_0U_1B_1}_{0}(\rho^{U_0U_1}\otimes\rho^{B_1})\\
    &\leq 4\tr\left( (I-\Pi_{1}^{U_0U_1B_1})\rho^{U_0U_1B_1}\right)
    +4\times 2^{S_{12}+r_1}\tr\Pi^{U_0U_1B_1}_{1}\rho^{U_1-U_0-B_1}\\
    &\hspace{1cm}+4\tr\left( (I-\Pi_{0}^{U_0U_1B_1})\rho^{U_0U_1B_1}\right)
+4\times 2^{R_0 + S_{11} + S_{21}+ S_{12}+r_1}\tr\Pi^{U_0U_1B_1}_{0}(\rho^{U_0U_1}\otimes\rho^{B_1})\\
&\stackrel{\text{(a)}}{\le} 4 \times 2^{\alpha(S_{12}+r_1)}2^{-\alpha D_{1-\alpha}\left(\mathcal{E}_1^{U_0B_1}(\rho^{U_0U_1B_1})\|\mathcal{E}^{B_1}(\rho^{U_1-U_0-B_1})\right)}\\
&+ 4\times 2^{\alpha(R_0 + S_{11} + S_{21}+ S_{12}+r_1)}2^{-\alpha D_{1-\alpha}\left(\mathcal{E}^{B_1}(\rho^{U_0U_1B_1})\|\rho^{U_0U_1}\otimes\rho^{B_1}\right)}\\
&\stackrel{\text{(b)}}{\le} 4 \times \nu_1^{\alpha} 2^{\alpha(S_{12}+r_1)} 2^{-\alpha \widetilde{D}_{1-\alpha}\left(\rho^{U_0U_1B_1}\|\mathcal{E}^{B_1}(\rho^{U_1-U_0-B_1})\right)}\\
&+ 4\times \nu^{\alpha} 2^{\alpha(R_0 + S_{11} + S_{21}+ S_{12}+r_1)} 2^{-\alpha \widetilde{D}_{1-\alpha}\left(\rho^{U_0U_1B_1}\|\rho^{U_0U_1}\otimes\rho^{B_1}\right)}\\
&\stackrel{\text{(c)}}{\le} 4 \times \nu_1^{\alpha} 2^{\alpha(S_{12}+r_1)} 2^{-\alpha \tilde{I}_{1-\alpha}^{\downarrow}\left(U_1;B_1|U_0\right)_{\rho^{U_0U_1B_1}|\rho^{U_0U_1}}}\\
&+4\times \nu^{\alpha} 2^{\alpha(R_0 + S_{11} + S_{21}+ S_{12}+r_1)} 2^{-\alpha\tilde{I}_{1-\alpha}^{\uparrow}(U_0U_1;B_1)_{\rho^{U_0U_1B_1}}},
\end{align*}
where (a) follows from Lemma \ref{hp-testing}, (b) from Lemma \ref{petz-sandwich}, with $\nu$ and $\nu_1$ being the number of distinct eigenvalues of $\mathcal{E}^{B_1}$ and $\{\mathcal{E}^{B_1}_u\}_u$, respectively, and (c) from the definitions of the R\'enyi mutual information and R\'enyi confitional mutual information.

\bigskip

\textbf{\emph{Receiver $B_2$:}}
The construction of the POVM and the analysis of error probability for the second receiver follow along the similar lines, resulting in:
\begin{align*}
 \mathbb{E}_{\mathcal{C}_1}\left(P_{e,2}(\mathcal{C}_1)\right)&=\mathbb{E}_{\mathcal{C}_1}\left(\text{Pr}\{\hat{M}_{0b_2}\neq M_0,\hat{M}_{2b_2}\neq M_2|\mathcal{C}_1\}\right)\\
 &\leq 4 \times \mu_1^\alpha 2^{\alpha(S_{22}+r_2)} 2^{-\alpha \tilde{I}_{1-\alpha}^{\downarrow}\left(U_1;B_2|U_0\right)_{\rho^{U_0U_1B_2}|\rho^{U_0U_1}}}\\
&\hspace{2cm}+4\times \mu^\alpha 2^{\alpha(R_0 + S_{11} + S_{21}+ S_{22}+r_2)} 2^{-\alpha\tilde{I}_{1-\alpha}^{\uparrow}(U_0U_1;B_2)_{\rho^{U_0U_1B_2}}}
\end{align*}

We have successfully derived our one-shot bounds. This means that there exists at least one one-shot code such that the average error probabilities of the receivers are bounded as above. Now, we proceed to recover the asymptotic i.i.d. result as a special case of these bounds.

\subsection{Asymptotic Analysis}\label{marton-asymptotic}
As mentioned before, our proof consists in creating a one-shot code and then determining the rate region through asymptotic analysis.
 The key observation is that the number of distinct eigenvalues will vanish because they are only polynomial and that R\'enyi relative entropy tends to von Nuemann relative entropy when its parameter tends to $1$.

\begin{proofof}[Theorem \ref{Marton}]
In the asymptotic setting, the channel is assumed to be memoryless, with each use being independent of the others; the codewords are also generated in an i.i.d. fashion. Leveraging Eq. \eqref{max-asymp}, we can conclude that if the condition $r_1 + r_2 \geq I(U_1;U_2\vert U_0)$ holds true, then a pair of codewords in each product bin can be identified, such that they mimic a joint distribution. This is further supported by the asymptotic mutual covering lemma \cite{1056302}, \cite[Lemma 8.1]{elgamal-kim}.

The upper bounds on the expectation of the average error probability imply the existence of at least one good code $\mathcal{C}_1$ that satisfies these bounds. Consequently, we can assess the upper bounds in the following manner:
\begin{align*}
    P_{e,1}^{(n)}(\mathcal{C}_n)&\leq 
    4  (n+1)^{\alpha d_{U_0}(d_{B_1}+2)(d_{B_1}-1)/2} 2^{\alpha n(S_{12}+r_1)} 2^{-\alpha \tilde{I}_{1-\alpha}^{\downarrow}\left(U_1^n;B_1^n|U_0^n\right)_{(\rho^{U_0U_1B_1})^{\otimes n}\vert(\rho^{U_0U_1})^{\otimes n}}}\\
&+4 (n+1)^{\alpha (d_{B_1}-1)} 2^{\alpha n(R_0 + S_{11} + S_{21}+ S_{12}+r_1)} 2^{-\alpha\tilde{I}_{1-\alpha}^{\uparrow}(U_0^nU_1^n;B_1^n)_{(\rho^{U_0U_1B_1})^{\otimes n}}},\\
  P_{e,2}^{(n)}(\mathcal{C}_n)&\leq 
    4  (n+1)^{\alpha d_{U_0}(d_{B_2}+2)(d_{B_2}-1)/2} 2^{\alpha n(S_{22}+r_2)} 2^{-\alpha \tilde{I}_{1-\alpha}^{\downarrow}\left(U_1^n;B_2^n|U_0^n\right)_{(\rho^{U_0U_1B_2})^{\otimes n}\vert(\rho^{U_0U_1})^{\otimes n}}}\\
&+4 (n+1)^{\alpha (d_{B_2}-1)} 2^{\alpha n(R_0 + S_{11} + S_{21}+ S_{22}+r_2)} 2^{-\alpha\tilde{I}_{1-\alpha}^{\uparrow}(U_0^nU_1^n;B_2^n)_{(\rho^{U_0U_1B_2})^{\otimes n}}},
\end{align*}
where we have employed the polynomial bounds on the number of distinct eigenvalues from Proposition \ref{pinching-asymptotic}. We further obtain (notice that the number of distinct eigenvalues vanish because they are only polynomial and that R\'enyi relative entropy tends to von Nuemann relative entropy when its parameter tends to $1$):
\begin{align*}
  \lim_{n\to\infty }-\frac{1}{n}\log P_{e,1}^{(n)}(\mathcal{C}_n)\ge
  \min\bigg(&\alpha \big(\tilde{I}_{1-\alpha}^{\downarrow}\left(U_1;B_1|U_0\right)_{\rho^{U_0U_1B_1}|\rho^{U_0U_1}}-(S_{12}+r_1)\big),\\
  &\alpha \big(\tilde{I}_{1-\alpha}^{\uparrow}(U_0U_1;B_1)_{\rho^{U_0U_1B_1}}-(R_0 + S_{11} + S_{21}+ S_{12}+r_1)\big)
  \bigg).
\end{align*}

\begin{align*}
  \lim_{n\to\infty }-\frac{1}{n}\log P_{e,2}^{(n)}(\mathcal{C}_n)\ge
  \min\bigg(&\alpha \big(\tilde{I}_{1-\alpha}^{\downarrow}\left(U_1;B_2|U_0\right)_{\rho^{U_0U_1B_2}|\rho^{U_0U_1}}-(S_{22}+r_2)\big),\\
  &\alpha \big(\tilde{I}_{1-\alpha}^{\uparrow}(U_0U_1;B_2)_{\rho^{U_0U_1B_2}}-(R_0 + S_{11} + S_{21}+ S_{22}+r_2)\big)
  \bigg)
\end{align*}
It now follows that as $n\to\infty$ and $\alpha\to 0$, there exists a sequence of codes $\mathcal{C}_n$ such that $P_{e,1}^{(n)}(\mathcal{C}_n)$ and $P_{e,1}^{(n)}(\mathcal{C}_n)$ vanish, if
\begin{equation}
    \begin{aligned}\label{FM-theorem 3}
           r_1+r_2&\geq I(U_1;U_2\vert U_0)_{\rho}, \\
    S_{12}+r_1&\geq I(U_1;B_1|U_0)_{\rho},\\
    R_0 + S_{11} + S_{21}+ S_{12}+r_1 &\geq I(U_0U_1;B_1)_{\rho},\\
     S_{22}+r_2&\geq I(U_2;B_2|U_0)_{\rho},\\
    R_0 + S_{11} + S_{21}+ S_{22}+r_2 &\geq I(U_0U_2;B_2)_{\rho},
    \end{aligned}
\end{equation}
where the mutual information quantities are calculated for the state in the statement of the Theorem \ref{Marton}. Using the Fourier–Motzkin elimination procedure, we remove $S_{11},S_{12},S_{21},S_{22}$ and as well as $r_1$ and $r_2$. Then it follows from the union bound $P_{e}^{(n)}(\mathcal{C}_n)\le P_{e,1}^{(n)}(\mathcal{C}_n)+P_{e,2}^{(n)}(\mathcal{C}_n)$ that the probability of decoding error tends to zero as $n\to \infty$ if the inequalities in Theorem are satisfied. This completes the proof of Theorem \ref{Marton}. 
\end{proofof}

\bigskip

\begin{remark}
   It's worth noting that the structure of the inequalities in \eqref{FM-theorem 3} resembles their classical counterpart, replacing quantum systems $B_1, B_2$ with their corresponding classical counterparts. On the other hand, using the complicated overcounting technique, papers \cite{savov-wilde} and \cite{7412732} derived a set of inequalities that, although resulting in the same final rate region after Fourier–Motzkin elimination, have a different structure. It is uncertain whether this difference in the structure of the initial inequalities will always lead to the same final inequalities.
\end{remark}

\begin{remark}
    Note that the rate region could be achieved without rate-splitting. Generally, if the rate triple $(R_0,R_1,R_2)$ is achievable, so is the triple $(R_0-\Delta_1-\Delta_2,R_1+\Delta_1,R_2+\Delta_2)$ for any $\Delta_1,\Delta_2\ge 0$. This does not depend on particular coding scheme such as superpostion coding. We can always split the the rates of the private messages and demand part of the common message to be sacrificed to accommodate for this private message. In other words, the messages deemed common message will be decoded by the receiver(s), so we choose to use part of that potential to send additional private information. Of course the other receiver will decode private information of the other receiver which is useless for it, and its common message rate is reduced because of this unintended message. This happens symmetrically for both receivers. On the other hand, it does not make sense to reduce the private rate to send common message since common message encoded into that private part will not be decoded by both receivers, so it is not common anymore. So one can derive a region without rate-splitting, then use this trick to get the rate region corresponding to rate-splitting. In summary, when the rate of the common message is reduced and some private message is sent to one of the receivers, the rate of the common message for the other receiver reduces, too. However, the other receiver recovers part of the private message intended to the first receiver. This is useless information for the second receiver and it may discard that information. It is important to understand that it is not possible to convert a common message to private messages for two receivers. Compare this to what might be considered its quantum counterpart: A GHZ state and two EPR pairs, where in certain sense they resemble a common and two private messages, respectively. It is known that a two EPR pairs can always be converted to a GHZ state, but not the other way around.
\end{remark}

\begin{remark}
    Note that the above scheme combine rate splitting with rate transfer: In rate splitting each message is represented by two or more independent sub-messages, while in rate transfer, sub-messages corresponding to different messages are encoded together into a single variable. As such, some receivers will successfully decode some sub-messages which are not intended for them. This might seem waste of rate, but in some cases that we are aware of, rate splitting with rate transfer gives equivalent rate regions whose converse is easier to establish compared to the region derived without these techniques. One example is the two-receiver broadcast channel with two-degraded message sets whose capacity is proven using these techniques \cite{korner-marton}.
\end{remark}

\section{Three-receiver multilevel quantum broadcast channel with two-degraded message set}
\label{multilevel-two-degraded}

\begin{figure}[t]
\includegraphics[width=15cm]{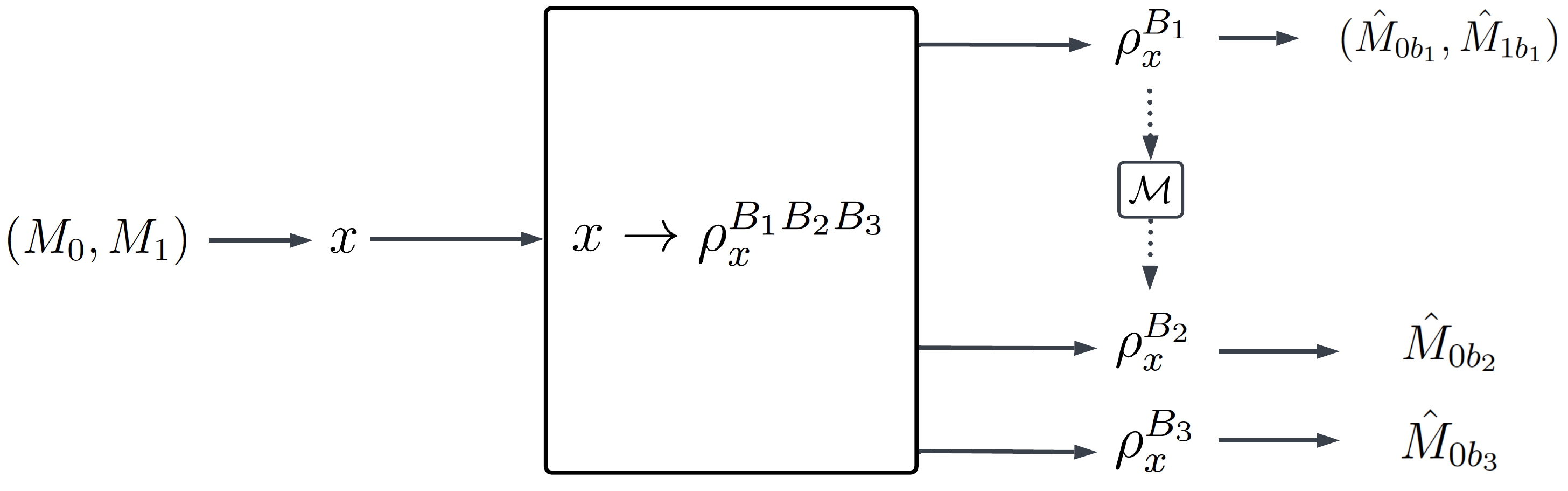}
\centering
\caption{three-receiver multilevel quantum broadcast channel with two-degraded message set. A common message $M_0$ is intended for all receivers and another message $M_1$ is intended solely for receiver $B_1$. $M_{0b_i}$ denotes the estimate made by receiver $B_i$ about the message $M_0$. Similarly, $M_{1b_1}$ denotes the estimate made by receiver $B_1$ about the the message $M_1$. Here, receiver $B_2$ is degradable with respect to the receiver $B_1$, i.e. there exists a cptp map $\mathcal{M}^{B_1\to B_2}$ such that $\rho_x^{B_2}=\mathcal{M}(\rho_x^{B_1})$ for any $x\in\mathcal{X}$.}
\label{multi-level.two-degraded}
\end{figure}

For pedagogical reasons, we start with the three-receiver multilevel quantum broadcast channel $x\to\rho_x^{B_1B_2B_3}$ with two-degraded message sets. Here, receiver $B_2$ is degradable with respect to the receiver $B_1$, i.e. there exists a cptp map $\mathcal{M}^{B_1\to B_2}$ such that $\rho_x^{B_2}=\mathcal{M}(\rho_x^{B_1})$ for any $x\in\mathcal{X}$. A transmitter wants to send a common message $m_0\in[1:2^{nR_0}]$ to all receivers, and a private (or individualized) message $m_1\in[1:2^{nR_1}]$ solely to receiver $B_1$. This scenario is depicted in Fig. \ref{multi-level.two-degraded}.
This model is particularly intriguing as it is the only model for which the capacity of its classical restriction is known \cite{Nair-Elgamal}. Additionally, in the quantum case, the converse region only encounters the common issue of conditioning on quantum systems.

We will present our results in the asymptotic setting, considering the scenario, where the channel can be utilized $n$ times, with $n\to\infty$. For this reason, we don't define a one-shot explicitly. Instead, we define an asymptotic code as follows.
\begin{definition}\label{code-def-multilevel}
A $(2^{nR_0},2^{nR_1},n,P^{(n)}_e)$ two-degraded message set code $\mathcal{C}_n$ for the three-receiver multilevel quantum broadcast channel $x\to\rho_x^{B_1B_2B_3}$ consists of the following components:
\begin{itemize}
    \item A pair of messages $(M_0,M_1)$ uniformly distributed over $[1:2^{nR_0}]\times[1:2^{nR_1}]$, where $R_0$ and $R_1$ represent the rates of the messages $M_0$ and $M_1$, respectively.
    \item An encoder that assigns a codeword $x^n(m_0,m_1)$ to each message pair $(m_0,m_1,m_2)\in[1:2^{nR_0}]\times[1:2^{nR_1}]$,
    \item Three POVMs, $\Lambda_{m_0m_1}^{B_1^{n}}$ on $B_1^{\otimes n}$, $\Lambda_{m_0}^{B_2^{n}}$ on $B_2^{\otimes n}$ and $\Lambda_{m_0^{n}}^{B_3}$ on $B_3^{\otimes n}$ (indicating which messages are intended for decoding by each receiver) which satisfy
    \begin{align}
\tr{\rho_{x^n(m_0,m_1)}^{B_1^{n}B_2^{n}B_3^{n}}\left(\Lambda_{m_0m_1}^{B_1^{n}}\otimes\Lambda_{m_0}^{B_2^{n}}\otimes\Lambda_{m_0}^{B_3^{n}}\right)}\geq 1-P^{(n)}_e(\mathcal{C}_n),
    \end{align}
    for every $(m_0,m_1)\in[1:2^{nR_0}]\times[1:2^{nR_1}]$. $P^{(n)}_e(\mathcal{C}_n)$ is the average probability of error defined as follows:
    \begin{align*}
        P^{(n)}_e(\mathcal{C}_n)= \text{Pr}\{\hat{M}_{0b_1},\hat{M}_{0b_2},\hat{M}_{0b_3}\neq M_0 \,\text{or}\,\hat{M}_{1b_1}\neq M_1\},
    \end{align*}
 where $\hat{M}_{jb_i}$ represents the estimate made by receiver $B_i$ for the message $j$.
\end{itemize}
A rate pair $(R_0,R_1)$ is said to be achievable if there exists a sequence of $(2^{nR_0},2^{nR_1},n,P^{(n)}_e)$ two-degraded message set codes $\mathcal{C}_n$ such that $P^{(n)}_e(\mathcal{C}_n)\to 0$ as $n\rightarrow \infty$. The capacity region is the closure of the set of all such achievable rate triples. 
\end{definition}

\begin{theorem}\label{two-degraded-multilevel}
    We assume a transmitter and three receivers have access to many independent uses of a three-receiver multilevel quantum broadcast channel $x\to\rho_x^{B_1B_2B_3}$, where a quantum channel $\mathcal{M}$ exists such that $\mathcal{M}(\rho_x^{B_1})=\rho_x^{B_2}$ for all $x$. The set of rate pairs $(R_0,R_1)$ is asymptotically achievable for the aforementioned two-degraded message set if the inequalities
    \begin{align*}
            R_0&\leq\min\{I(U;B_2),I(V,B_3)\},\\
            R_1&\leq I(X;B_1|U),\\
            R_0+R_1&\leq I(V;B_3) + I(X;B_1|V), 
    \end{align*}
        \label{two-degraded-1}
hold
    for some probability distribution $p(u,v,x)$ satisfying the Markov chain $U-V-X$, resulting in the cq-state:
    \begin{align*}
\rho^{UXVB_1B_2B_3}=\sum_{u,v,x}p(u)p(v|u)p(x|v)\ketbra{u}\otimes\ketbra{v}\otimes\ketbra{x}\otimes\rho^{B_1B_2B_3}_x.
    \end{align*}
\end{theorem}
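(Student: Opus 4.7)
The plan is to extend the non-unique decoding and pinching machinery from Section~\ref{marton-code} to the three-receiver multilevel setting. I would first split the private message $M_1$ into two independent sub-messages $(M_{11}, M_{12})$ of rates $S_{11}, S_{12}$ with $R_1 = S_{11} + S_{12}$, and then build a three-level nested random codebook tracking the Markov chain $U - V - X$: codewords $u^n(m_0)$ drawn i.i.d.\ from $p(u)^{\otimes n}$; for each $u^n(m_0)$, codewords $v^n(m_0, m_{11})$ drawn conditionally i.i.d.\ from $p(v \mid u)$; and for each $v^n(m_0, m_{11})$, codewords $x^n(m_0, m_{11}, m_{12})$ drawn conditionally i.i.d.\ from $p(x \mid v)$. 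The transmitter sends $x^n(m_0, m_{11}, m_{12})$.

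Each receiver uses a square-root POVM built from nested pinched projectors, tailored to its target. Receiver $B_2$ decodes $m_0$ from $u^n$ with a standard HSW-type decoder obtained by pinching $\rho^{UB_2}$ against the spectrum of $\rho^{B_2}$, producing the bound $R_0 \le I(U;B_2)$. Receiver $B_3$ decodes $m_0$ \emph{non-uniquely} via $V$: for each $m_0$ one defines $\Pi^{B_3}_{v^n(m_0)} := \sum_{m_{11}'} \Pi^{B_3}_{v^n(m_0, m_{11}')}$, with each inner projector built from pinching $\rho^{VB_3}$ as in Section~\ref{marton-code}, and forms the square-root measurement from these $m_0$-bins. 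In the Hayashi--Nagaoka analysis, the first term collapses (after dropping positive cross-bin summands) to $I - \Pi^{B_3}_{v^n(m_0, m_{11})}$ evaluated on the true codeword, while the second term sums over $(2^{nR_0}-1) \cdot 2^{nS_{11}}$ interfering pairs across different $m_0$-bins, yielding the constraint $R_0 + S_{11} \le I(V;B_3)$. Receiver $B_1$ decodes $(m_0, m_{11}, m_{12})$ uniquely via a three-level superposition-style decoder whose projectors test the true codeword against (i) the conditional marginal given $V^n$, (ii) the conditional marginal given $U^n$, and (iii) the unconditional marginal on $B_1^{\otimes n}$, each built via appropriately nested pinchings so the operators commute with the states being tested.

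Averaging the Hayashi--Nagaoka bounds over the random codebook and invoking the hypothesis-testing and Petz/sandwiched comparison lemmas of the appendix, one obtains one-shot bounds in terms of sandwiched R\'enyi mutual informations; passage to the asymptotic regime uses the additivity identity (\ref{renyi-asymp}) together with the polynomial eigenvalue bounds of Proposition~\ref{pinching-asymptotic} to absorb all pinching multiplicities, after which sending $\alpha \to 1$ recovers the von Neumann quantities. A union bound over the three receivers closes the error analysis. Fourier--Motzkin elimination of $S_{11}, S_{12}$ from the preliminary constraints
\begin{align*}
R_0 &\le I(U;B_2),\\
R_0 + S_{11} &\le I(V;B_3),\\
S_{12} &\le I(X;B_1 \mid V),\\
S_{11} + S_{12} &\le I(X;B_1 \mid U),
\end{align*}
together with $R_1 = S_{11}+S_{12}$ and $S_{11}, S_{12} \ge 0$, then produces the three inequalities in the theorem: $R_0 \le I(V;B_3)$ comes from $S_{11} \ge 0$, and the sum bound from adding the second and third lines. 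The additional $B_1$ constraint $R_0 + S_{11} + S_{12} \le I(X;B_1)$ that would arise from the third level of error events at $B_1$ is rendered redundant by the data-processing inequality $I(U;B_2) \le I(U;B_1)$ stemming from the degradability hypothesis, combined with the other bounds. The main obstacle is arranging the non-unique decoder at $B_3$ so that both Hayashi--Nagaoka terms telescope cleanly: the first must collapse to a single-codeword hypothesis test despite being a bin-sum, and the second must produce exactly the product $2^{nR_0} \cdot 2^{nS_{11}}$ with no spurious factors. A secondary bookkeeping difficulty is verifying that the three-level nested pinchings used at $B_1$ yield operators whose distinct-eigenvalue counts remain within the polynomial envelope of Proposition~\ref{pinching-asymptotic} when indexed by all of $u^n, v^n, x^n$.
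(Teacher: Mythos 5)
Your proposal follows essentially the same route as the paper: the same rate-splitting of $M_1$, the same three-level nested superposition codebook respecting $U-V-X$, a unique three-layer pinched decoder at $B_1$, a standard decoder at $B_2$, non-unique decoding of $m_0$ via the $V$-layer at $B_3$ with the same handling of the two Hayashi--Nagaoka terms, and the identical set of preliminary constraints followed by Fourier--Motzkin elimination. Your observation that $R_0+R_1\le I(X;B_1)$ is rendered redundant by degradability ($I(U;B_2)\le I(U;B_1)$) is correct and in fact supplies a justification the paper leaves implicit.
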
 
The proof of this theorem will be presented as the asymptotic analysis of the one-shot code we construct below. Therefore, the proof is deferred until Subsection \ref{asymptotic-2}.

\subsection{One-shot code construction}

The asymptotic analysis of the code developed here recovers Theorem \ref{two-degraded-multilevel}.
Fix a probability distribution $p(u)p(v|u)p(x|v)$, i.e. the random variables $(U,V,X)$ satisfy the Markov chain $U-V-X$. The goal is to transmit a common message to all receivers and a private (or individualized) message to receiver $B_1$.
We represent these messages respectively by random variables $M_0, M_1$ uniformly distributed in $[1:2^{R_0}]$ and $[1:2^{R_1}]$, respectively. 

\subsubsection{Rate splitting}
We split the private message $M_1$ into two parts, $M_{11}$ and $M_{12}$, with rates $S_1$ and $S_2$, respectively, resulting in $R_1=S_1+S_2$. These messages are uniformly distributed on their respective ranges, i.e. $s_1\in [1:2^{S_1}]$ and $s_2\in [1:2^{S_2}]$. The common message $m_0$ is represented by $U$, the pair $(m_0,s_1)$ is represented by $V$ and $(m_0,s_1,s_2)$ is represented by $X$.

\subsubsection{Codebook generation}

A code $\mathcal{C}_1$ is given as the following tuples of elements:
\begin{align*}
    \{u(m_0)\}_{m_0\in [1:2^{R_0}]} &\subset \mathcal{U},\\
    \{v(m_0,s_1)\}_{m_0\in [1:2^{R_0}],s_1\in [1:2^{S_{1}}]} &\subset \mathcal{V},\\
    \{x(m_0,s_1,s_2)\}_{m_0\in [1:2^{R_0}],s_1\in [1:2^{S_{1}}],s_2\in [1:2^{S_{2}}]} 
&\subset \mathcal{X}.
\end{align*}
This codebook is generated as follows: We generate $2^{R_0}$ sequences 
$\{u(m_0)\}_{m_0 \in [1:2^{R_0}]}$ i.i.d according to the pmf $p(u)$.
For each message $m_0$ with corresponding codeword $u(m_0)=u$, we generate $2^{S_{1}}$ elements
$\{v(m_0,s_1)\}_{s_1 \in [1:2^{S_{1}}]}$ conditionally independent subject to $p(v|U=u)$.
For each pair $(m_0,s_1)$, when $v(m_0,s_1)=v$, we generate $2^{S_{2}}$ elements
$\{x(m_0,s_1,s_2)\}_{s_2 \in [1:2^{S_{2}}]}$ conditionally independent subject to $p(x|V=v)$.
Note that the code $\mathcal{C}_1$ is a random variable.

\subsubsection{Encoding} To transmit the message $(m_0,m_1)\in [1:2^{R_0}]\times [1:2^{R_{1}}]$, the sender indicates the corresponding $(s_1,s_2)\in[1:2^{S_1}]\times [1:2^{S_{2}}]$ for the message $m_1$, and transmits $x(m_0,s_1,s_2)$ over the channel.

\subsubsection{Decoding and analysis of error probability}

Upon sending the codeword $x(m_0,s_1,s_2)$ over the channel, each receivers obtains its share of $\rho^{B_1B_2B_3}_{x(m_0,s_1,s_2)}$.
Subsequently, each receiver applies a POVM on its system to decode the intended message(s). 
 Receiver $B_1$ finds $(m_0,m_1)$ by decoding $X$, receiver $B_2$ finds $m_0$ by decoding $U$, and receiver $B_3$ finds $m_0$ by decoding $U$ non-uniquely via $V$. These POVMs are specifically designed to extract messages simultaneously. We now describe POVM construction and error probability analysis for each receiver.

\bigskip

\textbf{\emph{Receiver $B_1$:}} We begin with defining three pinching maps in a nested fashion.
We define $\mathcal{E}^{B_1}$ as the pinching on $B_1$
with respect to the spectral decomposition of $\rho^{B_1}$. For the spectral decomposition of $\cE^{B_1}(\rho^{XV- U-B_1})$, we define a pinching on $U B_1$
by $\mathcal{E}^{UB_1}_1=\sum_u\ketbra{u}\otimes\cE_{1|u}^{B_1}$, where for every $u\in\mathcal{U}$, $\cE_{1|u}^{B_1}$ is the pinching map with respect to the spectrum of the operator $\cE^{B_1}(\rho_u^{B_1})$.
Finally, we define a pinching map on $U V B_1$
for the spectral decomposition of $\cE_{1}^{UB_1}(\rho^{UX-V-B_1})$
by $\mathcal{E}^{UVB_1}_2=\sum_{u,v}\ketbra{u}\otimes\ketbra{v}\otimes\cE_{2|u,v}^{B_1}$, where for every $u\in\mathcal{U},v\in\mathcal{V}$, $\cE_{2|u,v}^{B_1}$ is the pinching map with respect to the spectrum of the operator $\cE_{1|u}^{B_1}(\rho_v^{B_1})$.
By employing these pinching maps, we define three projectors as follows:
\begin{align*}
T_{2}^{UVXB_1}&\coloneqq\{ \cE_{2}^{UVB_1}(\rho^{UVXB_1}) \ge 2^{S_{2}}\cE_{1}^{UB_1}(\rho^{UX- V-B_1})\} ,\\
T_{1}^{UVXB_1}&\coloneqq\{ \cE_{1}^{UB_1}(\rho^{UVX B_1}) \ge 2^{S_{1}+S_{2}}\cE^{B_1}(\rho^{V X- U-B_1})\}, \\
T_{0}^{UVXB_1}&\coloneqq\{ \cE^{B_1}(\rho^{UVX B_1}) \ge 2^{R_0+S_{1}+S_{2}}\rho^{UVX}\otimes \rho^{B_1}\} .
\end{align*}

\if0
\begin{align}
\Pi_{3}&:=\{ \cE_{\rho_{X- V-B_1}}(\rho_{VXB_1}) \ge 2^{r_3}\rho_{X- V-B_1}\} \\
\Pi_{2}&:=\{ \cE_{\rho_{V- U-B_1}}(\rho_{UVB_1}) \ge 2^{r_3}\rho_{V- U-B_1}\} \\
\Pi_{1}&:=\{ \cE_{\rho_{U}\otimes \rho_{B_1}}(\rho_{U B_1}) \ge 2^{r_3}\rho_{U}\otimes \rho_{B_1}\} .
\end{align}
\fi

We further define the projectors:
\begin{align*}
T_{2\,u,v,x}^{B_1}&\coloneqq\bra{u,v,x}T_{2}\ket{u,v,x},\\
T_{1\, u,v,x}^{B_1}&\coloneqq\bra{u,v,x}T_1\ket{u,v,x},\\
T_{0\,u,v,x}^{B_1}&\coloneqq\bra{u,v,x}T_{0}\ket{u,v,x},
\end{align*}
and the operator
\begin{align*}
T_{u,v,x}&\coloneqq T_{0\,u,v,x}^{B_1}T_{1\,u,v,x}^{B_1}T_{2\,u,v,x}^{B_1}.
\end{align*}
Note that the above-mentioned projectors and operator are not specific to any code. Nonetheless, we can utilize them to construct POVMs for any given code, particularly our random code $\mathcal{C}_1$.

\medskip
We employ these projectors to construct a pretty-good measurement POVM, which serves as the decoder of $B_1$ with respect to $\mathcal{C}_1$ as follows:
\begin{align*}
&\Delta_{m_0,s_1,s_2}(\mathcal{C}_1)\coloneqq\\
&\left(\sum_{m_0',s_1',s_2'} T_{x(m_0',s_1',s_2')}\right)^{-\frac{1}{2}}
T_{x(m_0,s_1,s_2)}
\left(\sum_{m_0',s_1',s_2'} T_{x(m_0',s_1',s_2')}\right)^{-\frac{1}{2}}
\end{align*} 

The average error probability of the code $\mathcal{C}_1$ is bounded as follows. Recall that we use the variable $\hat{M}_{tb_i}$ to denote the estimate made by receiver $B_i$ about that message $t$.
Using the above-mentioned POVM, we evaluate $B_1$'s average decoding error probability as follows:
\begin{align}
\nonumber
 P_{e,1}(\mathcal{C}_1)&=\text{Pr}\{\hat{M}_{0b_1}\neq M_0,\hat{M}_{1b_1}\neq M_1|\mathcal{C}_1\}\\ \nonumber
& = \frac{1}{2^{R_0+S_{1}+S_{2}}}\sum_{m_0,s_1,s_2} \tr (I-\Delta_{m_0,s_1,s_2}(\mathcal{C}_1))
\rho_{x(m_0,s_1,s_2)}^{B_1}\\\nonumber
\le & \frac{1}{2^{R_0+S_{1}+S_{2}}}
 \sum_{m_0,s_1,s_2} \tr \Big(
 2(I- T_{x(m_0,s_1,s_2)})\\ \nonumber
 &+4
 \sum_{(m_0',s_1',s_2')\neq (m_0,s_1,s_2)} T_{x(m_0',s_1',s_2')}
\Big) \rho_{x(m_0,s_1,s_2)}^{B_1}\\
\label{hn-1}
&=
 \frac{2}{2^{R_0+S_{1}+S_{2}}}\sum_{m_0,s_1,s_2} 
 \tr (I- T_{x(m_0,s_1,s_2)}) \rho_{x(m_0,s_1,s_2)}^{B_1} \\
 \label{hn-2}
&+\frac{4}{2^{R_0+S_{1}+S_{2}}}
\sum_{m_0,s_1,s_2} \sum_{(m_0',s_1',s_2')\neq (m_0,s_1,s_2)}
 \tr  T_{x(m_0',s_1',s_2')}
\rho_{x(m_0,s_1,s_2)}^{B_1} , 
\end{align}
where the inequality follows from Hayashi-Nagaoka operator inequality. 
We analyze each term resulting from Hayashi-Nagaoka inequality Lemma \ref{hayashi-nagaoka} separately. For the first term Eq. \eqref{hn-1}, we obtain:
\begin{align}
\nonumber
&\frac{2}{2^{R_0+S_{1}+S_{2}}}\sum_{m_0,s_1,s_2} 
 \tr (I- T_{x(m_0,s_1,s_2)}) \rho_{x(m_0,s_1,s_2)}^{B_1}\\
 \label{h-0}
&\le 
 \frac{2}{2^{R_0+S_{1}+S_{2}}}\sum_{m_0,s_1,s_2} \Big(
 \tr 
 (I-T_{0\,x(m_0,s_1,s_2)})\rho_{x(m_0,s_1,s_2)}^{B_1}\\
 \label{h-1}
& \hspace{3.5cm}+\tr(I-T_{1\,x(m_0,s_1,s_2)})\rho_{x(m_0,s_1,s_2)}^{B_1}\\
\label{h-2}
&\hspace{3.5cm}+\tr(I-T_{2\,x(m_0,s_1,s_2)})
\rho_{x(m_0,s_1,s_2)}^{B_1} \Big)
\end{align}
where the inequality follows from the operator union bound, stating that for operators $\{T_i\}_i$ where $0\le T_i\le I$ for all $i$, we have $I-\sum_{i}T_i\leq \sum_i(I-T_i)$.

Next, we find an upper bound on the second term Eq. \eqref{hn-2} as follows:
\begin{align}
\nonumber
&\frac{4}{2^{R_0+S_{1}+S_{2}}}
\sum_{m_0,s_1,s_2} \sum_{(m_0',s_1',s_2')\neq (m_0,s_1,s_2)}
 \tr  T_{x(m_0',s_1',s_2')}
\rho_{x(m_0,s_1,s_2)}^{B_1}\\ \nonumber
=&\frac{4}{2^{R_0+S_{1}+S_{2}}}
\sum_{m_0,s_1,s_2} 
\sum_{s_2'\neq s_2}
 \tr  T_{x(m_0,s_1,s_2')}
\rho_{x(m_0,s_1,s_2)}^{B_1}\\ \nonumber
&+\frac{4}{2^{R_0+S_{1}+S_{2}}}
\sum_{m_0,s_1,s_2} 
\sum_{s_1' \neq s_1 ,s_2'}
 \tr  T_{x(m_0,s_1',s_2')}
\rho_{x(m_0,s_1,s_2)}^{B_1}\\ \nonumber
&+\frac{4}{2^{R_0+S_{1}+S_{2}}}
\sum_{m_0,s_1,s_2} 
\sum_{m_0' \neq m_0 ,s_1' ,s_2'}
 \tr  T_{x(m_0',s_1',s_2')}
\rho_{x(m_0,s_1,s_2)}^{B_1}\\ 
\label{n-2}
\le &\frac{4}{2^{R_0+S_{1}+S_{2}}}
\sum_{m_0,s_1,s_2} 
\sum_{s_2'\neq s_2}
 \tr  T_{2\,x(m_0,s_1,s_2')}
\rho_{x(m_0,s_1,s_2)}^{B_1}\\
\label{n-1}
&+\frac{4}{2^{R_0+S_{1}+S_{2}}}
\sum_{m_0,s_1,s_2} 
\sum_{s_1' \neq s_1 ,s_2'}
 \tr  T_{1\,x(m_0,s_1',s_2')}
\rho_{x(m_0,s_1,s_2)}^{B_1}\\
\label{n-0}
&+\frac{4}{2^{R_0+S_{1}+S_{2}}}
\sum_{m_0,s_1,s_2} 
\sum_{m_0' \neq m_0 ,s_1' ,s_2'}
 \tr  T_{0\,x(m_0',s_1',s_2')}\rho_{x(m_0,s_1,s_2)}^{B_1},
\end{align}
where the inequality follows because $0\leq T_{u,v,x}\le T_{i\, u,v,x}\le I$ for $i=0,1,2$.
So far we have evaluated the average error probability of the random code $\mathcal{C}_1$ over the uniformly distributed messages as the sum of Eqs. \eqref{h-0},\,\eqref{h-1},\,\eqref{h-2},\, \eqref{n-2},\, \eqref{n-1} and \eqref{n-0}, i.e. $P_{e,1}(\mathcal{C}_1)=\eqref{h-0}+\eqref{h-1}+\eqref{h-2}+ \eqref{n-2}+ \eqref{n-1} + \eqref{n-0}$. This quantity itself is a random variable since it is a function of the random variable $\mathcal{C}_1$. We next find $\mathbb{E}_{\mathcal{C}_1}(P_{e,1}\left(\mathcal{C}_1)\right)$, the average of the average error probability over the choice of the random code. We evaluate this average for each of the six equations above separately. Starting with \eqref{h-0}, we obtain:
\begin{align*}
\mathbb{E}_{\mathcal{C}_1}\left( \text{Eq.} \eqref{h-0} \right)
 &=\frac{2}{2^{R_0+S_{1}+S_{2}}}\sum_{m_0,s_1,s_2} 
 \E_{\mathcal{C}_1}\tr 
 \Big((I-T_{0\,x(m_0,s_1,s_2)}
 \Big)
\rho_{x(m_0,s_1,s_2)}^{B_1} \\
&=2\tr (I-T_0^{UVXB_1}) \rho^{UVXB_1}.
\end{align*}
We find similar expressions for Eqs. \eqref{h-1} and \eqref{h-2} as follows:
\begin{align*}
    \mathbb{E}_{\mathcal{C}_1}\left( \text{Eq.} \eqref{h-1} \right)&=2\tr (I-T_1^{UVXB_1}) \rho^{UVXB_1},\\
       \mathbb{E}_{\mathcal{C}_1}\left( \text{Eq.} \eqref{h-2} \right)&=2\tr (I-T_2^{UVXB_1}) \rho^{UVXB_1}.
\end{align*}

We next evaluate the average of Eq. \eqref{n-2} over the choice of the codebook as follows:
\begin{align*}
     \mathbb{E}_{\mathcal{C}_1}\left( \text{Eq.} \eqref{n-2} \right)&=
   \mathbb{E}_{\mathcal{C}_1}\left(  \frac{4}{2^{R_0+S_{1}+S_{2}}}
\sum_{m_0,s_1,s_2} 
\sum_{s_2'\neq s_2}
 \tr  T_{2\,x(m_0,s_1,s_2')}
\rho_{v(m_0,s_1)}^{B_1}\right)\\
 &= \frac{4}{2^{R_0+S_{1}+S_{2}}}
\sum_{m_0,s_1,s_2} 
\sum_{s_2'\neq s_2} \mathbb{E}_{\mathcal{C}_1}\left(
 \tr  T_{2\,x(m_0,s_1,s_2')}
\rho_{v(m_0,s_1)}^{B_1}\right)\\
&= 4\times(2^{S_2}-1)\tr T_2^{UVXB_1} \rho^{UX-V-B_1}\\
&\le 4\times 2^{S_2}\tr T_2^{UVXB_1} \rho^{UX-V-B_1}.
\end{align*}
We find similar expressions for Eqs. \eqref{n-1} and \eqref{n-0} as follows:
\begin{align*}
    \mathbb{E}_{\mathcal{C}_1}\left( \text{Eq.} \eqref{n-1} \right)&\leq 4\times 2^{S_{1}+S_{2}}\tr T_1^{UVXB_1} \rho^{VX-U-B_1},\\
       \mathbb{E}_{\mathcal{C}_1}\left( \text{Eq.} \eqref{n-0} \right)&\le 4\times 2^{R_0+S_{1}+S_{2}}\tr T_0^{UVXB_1} (\rho^{UVX}\otimes \rho^{B_1}).
\end{align*}
We further multiply Eqs. \eqref{h-0}, \eqref{h-1} and \eqref{h-2} by $2$ and bring everything together as follows:
\begin{align*}
    \mathbb{E}_{\mathcal{C}_1}\left( P_{e,1}(\mathcal{C}_1)\right)&=\mathbb{E}_{\mathcal{C}_1}\left(\text{Pr}\{\hat{M}_{0b_1}\neq M_0,\hat{M}_{1b_1}\neq M_1|\mathcal{C}_1\}\right)\\
    &\leq 4\tr (I-T_0^{UVXB_1}) \rho^{UVXB_1} + 4\times 2^{R_0+S_{1}+S_{2}}\tr T_0^{UVXB_1} (\rho^{UVX}\otimes \rho^{B_1})\\
      & \hspace{1cm}+ 4\tr (I-T_1^{UVXB_1}) \rho^{UVXB_1} +4\times 2^{S_{1}+S_{2}}\tr T_1^{UVXB_1} \rho^{VX-U-B_1} \\
 &\hspace{1cm}+4\tr (I-T_2^{UVXB_1}) \rho^{UVXB_1}
+ 4\times 2^{S_2}\tr T_2^{UVXB_1} \rho^{UX-V-B_1} \\
&\stackrel{\text{(a)}}{\le} 
2^{\alpha (R_0+ S_{1}+ S_{2})+2}  
2^{-\alpha D_{1-\alpha} ({\cal E}^{B_1}(\rho^{UVX B_1})\| \rho^{UVX} \otimes \rho^{ B_1}) }\\
&\hspace{1cm}+2^{\alpha (S_{1}+ S_{2})+2} 
2^{-\alpha D_{1-\alpha} ({\cal E}_1^{UB_1}(\rho^{UVXB_1})\| {\cal E}^{B_1}(\rho^{VX-U- B_1})) }
\\&\hspace{1cm}+ 2^{\alpha S_{2}+2}  
2^{-\alpha D_{1-\alpha} ({\cal E}_2^{UVB_1}(\rho^{UVXB_1})\|
{\cal E}_1^{UB_1}( \rho^{UX-V- B_1})) }  \\
&\stackrel{\text{(b)}}{\le} 
\nu 2^{\alpha (R_0+ S_{1}+ S_{2})+2}  
2^{-\alpha \widetilde{D}_{1-\alpha} (\rho^{UVX B_1}\| \rho^{UVX} \otimes \rho^{ B_1}) }\\
&\hspace{1cm}+\nu_1 2^{\alpha (S_{1}+ S_{2})+2} 
2^{-\alpha \widetilde{D}_{1-\alpha} (\rho^{UVXB_1}\| {\cal E}^{B_1}(\rho^{VX-U- B_1})) }
\\&\hspace{1cm}+ \nu_2 2^{\alpha S_{2}+2}  
2^{-\alpha \widetilde{D}_{1-\alpha} (\rho^{UVXB_1}\|
{\cal E}_1^{UB_1}( \rho^{UX-V- B_1})) }  \\
&\stackrel{\text{(c)}}{\le} 
\nu 2^{\alpha (R_0+ S_{1}+ S_{2})+2}   
2^{-\alpha \tilde{I}_{1-\alpha}^\uparrow(UVX ;B_1)_{\rho^{UVX B_1} }}\\
&\hspace{1cm}+ \nu_1 2^{\alpha (S_{1}+ S_{2})+2}   
2^{-\alpha \tilde{I}_{1-\alpha}^\downarrow (VX;B_1|U)_{\rho^{UVXB_1}| \rho^{UVX}}}\\
&\hspace{1cm}+ \nu_2 2^{\alpha S_{2}+2}  
2^{-\alpha \tilde{I}_{1-\alpha}^\downarrow (UX;Y|V)_{\rho^{UVXB_1}| \rho^{UVX}}}.
\end{align*}
where 
$\nu$ is the number of distinct eigenvalues of $\rho^{B_1}$,
$\nu_1$ is the maximum number of distinct eigenvalues of the operators $\{\mathcal{E}^{B_1}(\rho^{B_1}_{u})\}_u$,
$\nu_2$ is the maximum number of eigenvalues of $\{\mathcal{E}_{1|u}^{B_1}(\rho^{B_1}_{v})\}_{u,v}$. Here, (a) follows from Lemma \ref{hp-testing}, (b) from Lemma \ref{petz-sandwich} and (c) from the definitions of the R\'enyi quantum mutual information quantities.

\bigskip

\textbf{\emph{Receiver $B_2$:}} 
Analuzing the second receiver is relatively simple because it only decodes the common message $M_0$ and it is degradable with respect to receiver $B_1$. Therefore, we only provide a high-level discussion for this case.
Let the pinching on $B_2$ for the spectral decomposition of $\rho^{B_2}$ be denoted by $\mathcal{E}^{B_2}$. 
We define the following projector:
\begin{align*}
    O^{UB_2}\coloneqq\{\mathcal{E}^{B_2}(\rho^{UB_2})\geq 2^{R_0}\rho^{U}\otimes\rho^{B_2}\}.
\end{align*}
We further define the following projector:
\begin{align*}
    O_{u}^{B_2}=\bra{u}O\ket{u}.
\end{align*}
The decoder utilized by receiver $B_2$ encompasses a square-root measurement implemented with the following POVM:
\begin{align*}
    \Delta_{m_0}^{B_2}(\mathcal{C}_1)\coloneqq\left(\sum_{m_0'}O_{u(m_o')}^{B_2}\right)^{-\frac{1}{2}}O_{u(m_0)}^{B_2}\left(\sum_{m_0'}O_{u(m_o')}^{B_2}\right)^{-\frac{1}{2}}.
\end{align*}
Note that, unlike $O^{UB_2}$ or $O_u^{B_2}$ which do not deponent on any specific code, we constructed the above-mentioned POVM based on the these operators for our code $\mathcal{C}_1$.

Let $\hat{M}_{0b_2}$ denote receiver $B_2$'s estimate of the common message. Utilizing this POVM, the average error probability of the code $\mathcal{C}_1$ relevant to receiver $B_2$ is calculated as follows:
\begin{align*}
    P_{e,2}(\mathcal{C}_1)&=\text{Pr}\{\hat{M}_{0b_2}\neq M_0|\mathcal{C}_1\}\\
    &=\frac{1}{2^{R_0}}\sum_{m_0}\tr(I-\Delta_{m_0}^{B_2})\rho^{B_2}_{u(m_0)}\\
    &\le \frac{1}{2^{R_0}}\sum_{m_0}\tr\big(2(I-O_{u(m_0)}^{B_2})+4\sum_{m_0'\neq m_0}O_{u(m_0')}^{B_2}\big)\rho^{B_2}_{u(m_0)},
\end{align*}
where the inequality corresponds to the Hayashi-Nagaoka inequality Lemma \ref{hayashi-nagaoka}.
Evaluating the average over the random choice of the code, we find:
\begin{align*}
\E_{\mathcal{C}_1}(P_{e,2}(\mathcal{C}_1))&=\E_{\mathcal{C}_1}\left(\text{Pr}\{\hat{M}_{0b_2}\neq M_0|\mathcal{C}_1\}\right)\\
&\le
\E_{\mathcal{C}_1}\left(\frac{1}{2^{R_0}}\sum_{m_0}\tr\big(2(I-O_{u(m_0)}^{B_2})+4\sum_{m_0'\neq m_0}O_{u(m_0')}^{B_2}\big)\rho^{B_2}_{u(m_0)}\right)\\
   &=2\tr (I-O^{UB_2})\rho^{UB_2} + 4(2^{R_0}-1)\tr O^{UB_2}(\rho^{U}\otimes\rho^{B_2})\\
   &\le 4\tr (I-O^{UB_2})\rho^{UB_2} + 4\times 2^{R_0}\tr O^{UB_2}(\rho^{U}\otimes\rho^{B_2})\\
   &\stackrel{\text{(a)}}{\le} 2^{\alpha R_0+2}2^{-\alpha D_{1-\alpha}(\mathcal{E}^{B_2}(\rho^{UB_2})\|\rho^{U}\otimes \rho^{B_2})}\\ 
    &\stackrel{\text{(b)}}{\le} \mu 2^{\alpha R_0+2}2^{-\alpha \widetilde{D}_{1-\alpha}(\rho^{UB_2}\|\rho^{U}\otimes \rho^{B_2})}\\
    &\stackrel{\text{(c)}}{=} 2^{\alpha R_0+2}2^{-\alpha \tilde{I}_{1-\alpha}^{\uparrow}(U;B_2)},
\end{align*}
where (a) follows from Lemma \ref{hp-testing}, (b) from Lemma \ref{petz-sandwich} and $\mu$ is the number of distinct eigenvalues of $\rho^{B_2}$, and (c) from the definition of the R\'enyi quantum mutual information.

\bigskip

\textbf{\emph{Receiver $B_3$:}}
Next, we discuss about the third receiver. This receiver uses non-unique decoding to retrieve the common message. Specifically, it also attempts to detect message $M_{11}$, but if it fails to identify this message, it is not counted as a fault. Let $\mathcal{E}^{B_3}$ be the pinching map on $B_3$ with respect to the spectral decompsotion of the operator $\rho^{B_3}$.
We define the following projector:
\begin{align}
\Upsilon^{V B_3}\coloneqq \{ \mathcal{E}_{3}(\rho^{V B_3}) \ge 2^{R_0+S_{1}}
\rho^{V}\otimes \rho^{B_3}\} .
\end{align}
We also define
\begin{align}
\Upsilon^{B_3}_{v}\coloneqq\bra{v} \Upsilon^{V B_3}\ket{v}.
\end{align}
By employing the latter operator, which is not based on any particular code, we constrcut the following pretty-good POVM as the decoder for $B_3$, tailored specifically to the one-shot code $\mathcal{C}_1$:
\begin{align}
\Delta_{m_0}^{B_3}(\mathcal{C}_1)\coloneqq
\left(\sum_{m_0'} \Upsilon^{B_3}_{v(m_0')}\right)^{-\frac{1}{2}}
\Upsilon^{B_3}_{v(m_0)}
\left(\sum_{m_0'} \Upsilon^{B_3}_{v(m_0')}\right)^{-\frac{1}{2}},
\end{align} 
where 
\begin{align} 
\label{non-unique-povm-2}\Upsilon^{B_3}_{v(m_0)}=\sum_{s_1=1}^{2^{S_1}}\Upsilon^{B_3}_{v(m_0,s_1)}.
\end{align}
The non-unique decoding is implemented by ``collecting POVMs'' in the above manner. It is important to note that in order to identify a codeword $v\in\mathcal{V}$, we need two indices, $m_0$ and $s_1$. However, by summing up the operators over the variable $s_1$, the decoder does not uniquely determine $s_1$; instead, it finds $m_0$ up to $s_1$. The subtle difference between the POVM in Eq. \eqref{non-unique-povm-2} and the similar POVM we introduced in the context of Marton code, i.e. Eq. \eqref{non-unique-povm-1}, lies in the fact that in the latter, the non-unique variable $k_1$ is generated thorugh binning, while in the former, it is created by superposition of codewords. 

Let $\hat{M}_{0b_3}$ denote receiver $B_3$'s estimate of the common message. Utilizing the above POVM, the average error probability of the code $\mathcal{C}_1$ relevant to receiver $B_3$ is calculated as follows:
 
\begin{align*}
  P_{e,3}(\mathcal{C}_1)&=\text{Pr}\{\hat{M}_{0b_3}\neq M_0|\mathcal{C}_1\}\\
 &=\frac{1}{2^{R_0}}\sum_{m_0=1}^{2^{R_0}} \tr (I-\Delta_{m_0}^{B_3}(\mathcal{C}_1))
\rho^{B_3}_{v(m_0,s_1)}\\
&\stackrel{\text{(a)}}{\leq}  
\frac{1}{2^{R_0}}
 \sum_{m_0} \tr \Big(
 2(I-\Upsilon^{B_3}_{v(m_0)})+
 4\sum_{ m_0'\neq m_0} \Upsilon^{B_3}_{v(m_0')}
\Big) \rho^{B_3}_{v(m_0,s_1)} \\
 &\stackrel{\text{(b)}}{\leq}  
\frac{1}{2^{R_0}}
 \sum_{m_0} \tr \Big(
 2(I- \sum_{s_1'}\Upsilon^{B_3}_{v(m_0,s_1')})+
 4\sum_{ m_0'\neq m_0, s_1'} \Upsilon^{B_3}_{v(m_0',s_1')}
\Big) \rho^{B_3}_{v(m_0,s_1)} \\
&=\frac{1}{2^{R_0}}
 \sum_{m_0} 2\tr
 (I- \Upsilon^{B_3}_{v(m_0,s_1)} -\sum_{s_1'\neq s_1}\Upsilon^{B_3}_{v(m_0,s_1')})\rho^{B_3}_{v(m_0,s_1)}\\
 &\hspace{3cm}+\frac{1}{2^{R_0}}
 \sum_{m_0}4\sum_{ m_0'\neq m_0, s_1'}\tr \Upsilon^{B_3}_{v(m_0',s_1')}\rho^{B_3}_{v(m_0,s_1)}\\
 &\stackrel{\text{(c)}}{\leq}\frac{1}{2^{R_0}}
 \sum_{m_0} 2\tr
 (I- \Upsilon^{B_3}_{v(m_0,s_1)} )\rho^{B_3}_{v(m_0,s_1)}\\
 &\hspace{3cm}+\frac{1}{2^{R_0}}
 \sum_{m_0}4\sum_{ m_0'\neq m_0, s_1'}\tr \Upsilon^{B_3}_{v(m_0',s_1')}\rho^{B_3}_{v(m_0,s_1)},
\end{align*}
where (a) folows from Hayashi-Nagaoka inequality Lemma \ref{hayashi-nagaoka}, (b) uses the definition of the non-unique POVM from Eq. \eqref{non-unique-povm-2}, and (c) comes about by throwing out the $(S_1-1)$ negative terms from the non-unique POVM.

We next find $\mathbb{E}_{\mathcal{C}_1}\left(P_{e,3}(\mathcal{C}_1)\right)$, the average of the average error probability over the choice of the random code (because the average error probability $P_{e,3}(\mathcal{C}_1)$, as a function of the random code $\mathcal{C}_1$, is a random variable itself). We obtain:
\begin{align*}
    \mathbb{E}_{\mathcal{C}_1}\left(P_{e,3}(\mathcal{C}_1)\right)
    &\leq\mathbb{E}_{\mathcal{C}_1}\Bigg(\frac{1}{2^{R_0}}
 \sum_{m_0} 2\tr
 (I- \Upsilon^{B_3}_{v(m_0,s_1)} )\rho^{B_3}_{v(m_0,s_1)}\\
 &\hspace{3cm}+\frac{1}{2^{R_0}}
 \sum_{m_0}4\sum_{ m_0'\neq m_0, s_1'}\tr \Upsilon^{B_3}_{v(m_0',s_1')}\rho^{B_3}_{v(m_0,s_1)}\Bigg)\\
 &=\frac{1}{2^{R_0}}
 \sum_{m_0} 2\mathbb{E}_{\mathcal{C}_1}\left(\tr
 (I- \Upsilon^{B_3}_{v(m_0,s_1)} )\rho^{B_3}_{v(m_0,s_1)}\right)\\
 &\hspace{3cm}+\frac{1}{2^{R_0}}
 \sum_{m_0}4\sum_{ m_0'\neq m_0, s_1'}\mathbb{E}_{\mathcal{C}_1}\left(\tr \Upsilon^{B_3}_{v(m_0',s_1')}\rho^{B_3}\right)\\
&=2\tr
 (I- \Upsilon^{VB_3})\rho^{VB_3} +
 4 \times (2^{R_0+S_1}-1) \tr \Upsilon^{VB_3} (\rho^{V}\otimes \rho^{B_3})\\
 &\le 4\tr
 (I- \Upsilon^{VB_3})\rho^{VB_3} +
 4 \times 2^{R_0+S_1} \tr \Upsilon^{VB_3} (\rho^{V}\otimes \rho^{B_3})\\
 &\stackrel{\text{(a)}}{\le}2^{\alpha(R_0+S_{1})+2} 
2^{-\alpha D_{1-\alpha}\left(\mathcal{E}^{B_3}(\rho^{VB_3})\|\rho^{V}\otimes \rho^{B_3}\right)}\\
&\stackrel{\text{(b)}}{\le}\eta 2^{\alpha(R_0+S_{1})+2} 
2^{-\alpha \widetilde{D}_{1-\alpha}\left(\rho^{VB_3}\|\rho^{V}\otimes \rho^{B_3}\right)}\\
&\stackrel{\text{(c)}}{=}2^{\alpha(R_0+S_{1})+2} 
2^{-\alpha \tilde{I}_{1-\alpha}^{\uparrow}(V;B_3)_{\rho^{VB_3}}},
\end{align*}
where (a) follows from Lemma \ref{hp-testing}, (b) from Lemma \ref{petz-sandwich} and $eta$ is the number of distinct eigenvalues of $\rho^{B_3}$, and (c) from the definition of the R\'enyi quantum mutual information.

\bigskip

\subsection{Asymptotic analysis}\label{asymptotic-2}
As mentioned before, our proof consists in creating a one-shot code and then determining the rate region through asymptotic analysis.
 The key observation is that the number of distinct eigenvalues will vanish because they are only polynomial and that R\'enyi relative entropy tends to von Nuemann relative entropy when its parameter tends to $1$.

\begin{proofof}[Theorem \ref{two-degraded-multilevel}]


The upper bounds on the expectation of the average error probability imply the existence of at least one good code $\mathcal{C}_1$ that satisfies these bounds. Consequently, we can assess the upper bounds in the following manner:
\begin{align*}
    P_{e,1}^{(n)}(\mathcal{C}_n)&\leq 
    4 (n+1)^{\alpha (d_{B_1}-1)} 2^{\alpha n(R_0 + S_{1} + S_{2})} 2^{-\alpha\tilde{I}_{1-\alpha}^{\uparrow}(U^n,V^n,X^n;B_1^n)_{(\rho^{UVXB_1})^{\otimes n}}}\\
    &+4  (n+1)^{\alpha d_{U}(d_{B_1}+2)(d_{B_1}-1)/2} 2^{\alpha n(S_{1}+S_2)} 2^{-\alpha \tilde{I}_{1-\alpha}^{\downarrow}\left(V^n,X^n;B_1^n|U^n\right)_{(\rho^{UVXB_1})^{\otimes n}\vert(\rho^{UVX})^{\otimes n}}}\\
    &+4  (n+1)^{\alpha d_{V}d_U(d_{B_1}+2)(d_{B_1}-1)/2} 2^{\alpha nS_2} 2^{-\alpha \tilde{I}_{1-\alpha}^{\downarrow}\left(U^n,X^n;B_1^n|V^n\right)_{(\rho^{UVXB_1})^{\otimes n}\vert(\rho^{UVX})^{\otimes n}}}\\
  P_{e,2}^{(n)}(\mathcal{C}_n)&\leq 
4 (n+1)^{\alpha (d_{B_2}-1)} 2^{\alpha nR_0} 2^{-\alpha\tilde{I}_{1-\alpha}^{\uparrow}(U^n;B_2^n)_{(\rho^{UB_2})^{\otimes n}}},\\
  P_{e,3}^{(n)}(\mathcal{C}_n)&\leq 
4 (n+1)^{\alpha (d_{B_3}-1)} 2^{\alpha n(R_0+S_1)} 2^{-\alpha\tilde{I}_{1-\alpha}^{\uparrow}(V^n;B_3^n)_{(\rho^{VB_3})^{\otimes n}}},
\end{align*}
where we have employed the polynomial bounds on the number of distinct eigenvalues from Proposition \ref{pinching-asymptotic}. We further obtain:
\begin{align*}
  \lim_{n\to\infty }-\frac{1}{n}\log P_{e,1}^{(n)}(\mathcal{C}_n)\ge
  \min\bigg(&\alpha \big(\tilde{I}_{1-\alpha}^{\uparrow}(U,V,X;B_1)_{\rho^{UVXB_1}}-(R_0 + S_{1} + S_{2})\big),\\
  &\alpha \big(\tilde{I}_{1-\alpha}^{\downarrow}\left(V,X;B_1|U\right)_{\rho^{UVXB_1}|\rho^{UVX}}-(S_{1}+S_2)\big),\\
  &\alpha \big(\tilde{I}_{1-\alpha}^{\downarrow}\left(U,X;B_1|V\right)_{\rho^{UVXB_1}|\rho^{UVX}}-S_2\big)
  \bigg),
\end{align*}
\begin{align*}
\lim_{n\to\infty }-\frac{1}{n}\log P_{e,2}^{(n)}(\mathcal{C}_n)&\ge
  \alpha \big(\tilde{I}_{1-\alpha}^{\downarrow}(U;B_2)_{\rho^{UB_2}}-R_0\big),\\
\lim_{n\to\infty }-\frac{1}{n}\log P_{e,3}^{(n)}(\mathcal{C}_n)&\ge
  \alpha \big(\tilde{I}_{1-\alpha}^{\downarrow}(V;B_3)_{\rho^{VB_3}}-(R_0+S_1)\big).
\end{align*}
It now follows that as $n\to\infty$ and $\alpha\to 0$, there exists a sequence of codes $\mathcal{C}_n$ such that $P_{e,1}^{(n)}(\mathcal{C}_n),P_{e,2}^{(n)}(\mathcal{C}_n)$ and $P_{e,3}^{(n)}(\mathcal{C}_n)$ vanish, if
\begin{align*}
R_0 + S_{1} + S_{2} &\leq I(X;B_1)_{\rho},\\
    S_{1}+S_2&\leq I(X;B_1|U)_{\rho},\\
     S_{2}&\leq I(X;B_1|V)_{\rho},\\
     R_0&\leq I(U;B_2)_{\rho}, \\
    R_0 + S_{1} &\leq I(V;B_3)_{\rho},
\end{align*}
where the mutual information quantities are calculated for the state in the statement of the Theorem \ref{two-degraded-multilevel}. Substituting $R_1=S_1+S_2$ and using the Fourier–Motzkin elimination procedure, we remove $S_{1},S_{2}$, it follows from the union bound $P_{e}^{(n)}(\mathcal{C}_n)\le P_{e,1}^{(n)}(\mathcal{C}_n)+P_{e,2}^{(n)}(\mathcal{C}_n)+P_{e,3}^{(n)}(\mathcal{C}_n)$ that the probability of decoding error tends to zero as $n\to \infty$ if the inequalities in Theorem are satisfied. This completes the proof of Theorem \ref{two-degraded-multilevel}. 
\end{proofof}
\bigskip

It is also possible to find a simple achievability region for this problem. For this problem, we now provide an inner bound which can be considered a straightforward extension of superposition scheme \cite{savov-wilde,q-yard} to three-receiver multilevel channel, generalizing the seminal work of K\"orner and Marton \cite{korner-marton}.

\begin{theorem}\label{straight-superposition}
    Consider a three-receiver multilevel broadcast channel with density matrices $\{\rho^{B_1B_2B_3}_x\}_x$. Further assume that there exists a quantum channel $\mathcal{M}$ such that $\mathcal{M}(\rho_x^{B_1})=\rho_x^{B_2}$ for all $x$. Then,
    the set of rate pairs $(R_0,R_1)$ is achievable if
    \begin{align*}
            R_0&\leq\min\{I(U;B_2),I(U,B_3)\},\\
            R_1&\leq I(X;B_1|U), 
    \end{align*}
     for some probability distribution $p(u,x)$ giving rise to the cq-state
    \begin{align*}
        \rho^{UXB_1B_2B_3}=\sum_{u,x}p(u,x)\ketbra{u}\otimes\ketbra{x}\otimes\rho^{B_1B_2B_3}_x.
    \end{align*}
\end{theorem}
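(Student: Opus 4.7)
The plan is to follow the same one-shot-to-asymptotic template as in the proof of Theorem \ref{two-degraded-multilevel}, but with a strictly simpler code: a two-level superposition with no intermediate variable $V$, no rate splitting, and no non-unique decoding. Concretely, fix $p(u,x)$ and generate $\{u(m_0)\}_{m_0\in[1:2^{R_0}]}$ i.i.d.\ from $p(u)$; then, for each $m_0$, generate $\{x(m_0,m_1)\}_{m_1\in[1:2^{R_1}]}$ conditionally independently from $p(x\vert U=u(m_0))$. To transmit $(m_0,m_1)$, send $x(m_0,m_1)$ through the channel.

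The decoders are built from the same pinching toolbox as before. For receiver $B_3$, simply reuse the construction for $B_2$ in the previous proof with the roles of $B_2$ and $B_3$ swapped and $V$ replaced by $U$: take the projector $\{\mathcal{E}^{B_3}(\rho^{UB_3})\ge 2^{R_0}\rho^U\otimes\rho^{B_3}\}$ and build a square-root POVM indexed by $m_0$. This yields $\mathbb{E}_{\mathcal{C}_1}(P_{e,3})\le \eta\,2^{\alpha R_0+2}2^{-\alpha\tilde{I}^\uparrow_{1-\alpha}(U;B_3)}$ by Lemmas \ref{hp-testing} and \ref{petz-sandwich}. The bound for $B_2$ is verbatim the one derived in the proof of Theorem \ref{two-degraded-multilevel} and gives $\mathbb{E}_{\mathcal{C}_1}(P_{e,2})\le \mu\,2^{\alpha R_0+2}2^{-\alpha\tilde{I}^\uparrow_{1-\alpha}(U;B_2)}$.

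For receiver $B_1$, which must jointly decode $(m_0,m_1)$, define the nested pinchings $\mathcal{E}^{B_1}$ (with respect to $\rho^{B_1}$) and $\mathcal{E}^{UB_1}_1=\sum_u\ketbra{u}\otimes \mathcal{E}^{B_1}_{1\vert u}$ (with respect to $\mathcal{E}^{B_1}(\rho^{X-U-B_1})$), and introduce the projectors
\begin{align*}
T_{1}^{UXB_1}&\coloneqq\{\mathcal{E}_1^{UB_1}(\rho^{UXB_1})\ge 2^{R_1}\mathcal{E}^{B_1}(\rho^{X-U-B_1})\},\\
T_{0}^{UXB_1}&\coloneqq\{\mathcal{E}^{B_1}(\rho^{UXB_1})\ge 2^{R_0+R_1}\rho^{UX}\otimes\rho^{B_1}\}.
\end{align*}
Form the operator $T_{u,x}=T_{0\,u,x}^{B_1}T_{1\,u,x}^{B_1}$, build the pretty-good POVM $\{\Delta_{m_0,m_1}\}$ from the collection $\{T_{u(m_0'),x(m_0',m_1')}\}$, and apply Hayashi-Nagaoka (Lemma \ref{hayashi-nagaoka}). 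Splitting the wrong-codeword summation into the cases ``$m_0'=m_0$, $m_1'\ne m_1$'' and ``$m_0'\ne m_0$'', then using the operator union bound together with Lemmas \ref{hp-testing} and \ref{petz-sandwich} exactly as in Section \ref{multilevel-two-degraded}, yields
\begin{align*}
\mathbb{E}_{\mathcal{C}_1}(P_{e,1})\le \nu_1\,2^{\alpha R_1+2}2^{-\alpha\tilde{I}^\downarrow_{1-\alpha}(X;B_1\vert U)}
+\nu\,2^{\alpha(R_0+R_1)+2}2^{-\alpha\tilde{I}^\uparrow_{1-\alpha}(UX;B_1)}.
\end{align*}

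Taking the asymptotic limit $n\to\infty$ after $\alpha\to 0$ and invoking Proposition \ref{pinching-asymptotic} to absorb the polynomial prefactors, the exponents vanish whenever $R_0\le I(U;B_2)$, $R_0\le I(U;B_3)$, $R_1\le I(X;B_1\vert U)$, and $R_0+R_1\le I(UX;B_1)$. The last inequality is automatic: degradability of $B_2$ with respect to $B_1$ gives the data-processing inequality $I(U;B_2)\le I(U;B_1)$, hence $R_0+R_1\le I(U;B_2)+I(X;B_1\vert U)\le I(U;B_1)+I(X;B_1\vert U)=I(UX;B_1)$ by the chain rule. The main (and essentially only) obstacle is the Hayashi-Nagaoka analysis for $B_1$, and that is identical in structure to the corresponding step of Theorem \ref{two-degraded-multilevel} with one fewer layer; everything else reduces to reusing machinery already in place.
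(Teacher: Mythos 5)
Your proposal is correct, but it takes a genuinely different route from the paper. The paper proves Theorem \ref{straight-superposition} in one line: it observes that setting $U=V$ in Theorem \ref{two-degraded-multilevel} collapses that region to the stated one (the sum-rate constraint $R_0+R_1\le I(V;B_3)+I(X;B_1|V)$ becomes $R_0+R_1\le I(U;B_3)+I(X;B_1|U)$, which is implied by the two individual constraints, so it disappears). You instead rebuild the entire coding argument from scratch with a two-layer superposition code, nested pinchings, Hayashi--Nagaoka, and the hypothesis-testing lemmas. Your construction is sound and is structurally the $V=U$, $S_1=0$ degeneration of the code in Section \ref{multilevel-two-degraded}, and your extra step --- showing that the constraint $R_0+R_1\le I(UX;B_1)$ produced by the $\Theta_0$-type projector is redundant via $I(U;B_2)\le I(U;B_1)$ (data processing under the degrading map $\mathcal{M}$) and the chain rule --- is exactly the observation needed to match the stated two-inequality region; note also that $I(UX;B_1)=I(X;B_1)$ here since $U-X-B_1$ is Markov. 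What the paper's route buys is brevity and reuse of an already-proven result; what yours buys is a self-contained proof that does not depend on verifying that the degenerate conditional $p(v|u)=\delta_{v,u}$ is an admissible instance of the Markov chain $U-V-X$, and it makes explicit which error exponent governs each constraint. Both arrive at the same region.
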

\begin{proof}
We can easily see that setting $U=V$ in the above theorem gives the straightforward superposition region in Theorem \ref{straight-superposition}
\end{proof}

\bigskip

\subsection{No-go region}
The following theorem establishes a weak converse for the setting of Theroem \ref{two-degraded-multilevel}. We will argue that it is tight up to the common problem of conditioning on quantum systems, meaning that its acheivability is known only in the essentially classical case.
\begin{theorem}\label{converse-1}
Assuming a transmitter and three receivers have access to many independent uses of a three-receiver multilevel quantum broadcast channel $x\to\rho_x^{B_1B_2B_3}$ with two-degraded message set, where a quantum channel $\mathcal{M}$ exists such that $\mathcal{M}(\rho_x^{B_1})=\rho_x^{B_2}$ for all $x$, if the set of rate pairs $(R_0,R_1)$ are asymptotically achievable, then they are contained in the following region:
        \begin{align*}
            R_0&\leq\min\{I(U;B_2),I(V,B_3)\},\\
            R_1&\leq I(X;B_1|U),\\
            R_0+R_1&\leq I(V;B_3) + I(X;B_1|V),
        \end{align*}
    for some state 
     \begin{align*}
        \rho^{UXVB_1B_2B_3}=\sum_{x}p(x)\ketbra{x}\otimes\rho_x^{UV}\otimes\rho^{B_1B_2B_3}_x,
    \end{align*}
    such that there for all $x\in\mathcal{X}$, there exist cptp maps $\{\mathcal{M}_x\}$ satisfying $\mathcal{M}_x(\rho_x^{U})=\rho_x^{V}$. Note that systems $U$ and $V$ are generically quantum systems. 
\end{theorem}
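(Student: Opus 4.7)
The plan is to prove a weak converse by combining Fano's inequality for the three decoders with a single-letterization whose auxiliary systems $U$ and $V$ are, in general, quantum. This directly mirrors the classical Nair-ElGamal converse for the multilevel broadcast channel, with the caveat that here the past/future channel outputs appearing in the auxiliaries are quantum registers, which is what ultimately produces the ``conditioning on quantum systems'' gap with the achievability of Theorem \ref{two-degraded-multilevel}.

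First, since $P_e^{(n)}(\mathcal{C}_n)\to 0$, Fano's inequality applied separately to the decoding POVMs $\Lambda_{m_0m_1}^{B_1^n}$, $\Lambda_{m_0}^{B_2^n}$, $\Lambda_{m_0}^{B_3^n}$ and data-processing yield, up to vanishing $n\epsilon_n$ corrections,
\begin{align*}
nR_0 &\leq I(M_0;B_2^n), & nR_0 &\leq I(M_0;B_3^n),\\
nR_1 &\leq I(M_1;B_1^n|M_0), & n(R_0+R_1) &\leq I(M_0,M_1;B_1^n).
\end{align*}
I would then introduce quantum auxiliary registers along the lines of $U_i\coloneqq(M_0,B_{2,i+1}^n)$ and $V_i\coloneqq(M_0,B_{2,i+1}^n,B_3^{i-1})$ for each $i\in[1:n]$, together with a uniform time-sharing index $Q$, and set $U\coloneqq(Q,U_Q)$, $V\coloneqq(Q,V_Q)$, $X\coloneqq X_Q$. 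Repeated application of the chain rule, data-processing, and a Csisz\'ar-sum-type swap between the blocks $B_3^{i-1}$ and $B_{2,i+1}^n$ then collapses each of the four preliminary inequalities above onto one of the desired single-letter bounds. The degradability hypothesis $\mathcal{M}(\rho_x^{B_1})=\rho_x^{B_2}$ is used both to justify the $B_2$-based auxiliary $U$ on the side of the $B_2$-decoder bound, and to reduce conditional terms involving $B_{1,i+1}^n$ to the corresponding $B_{2,i+1}^n$ terms whenever needed to match the definition of $U$.

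The main obstacle will be verifying that the resulting joint state has the claimed cq-structure $\sum_x p(x)\ketbra{x}\otimes\rho_x^{UV}\otimes\rho_x^{B_1B_2B_3}$ and that, for every $x$, there exists a cptp map $\mathcal{M}_x$ with $\mathcal{M}_x(\rho_x^U)=\rho_x^V$. The tensor factorization between the fresh channel output $\rho_x^{B_1B_2B_3}$ at time $Q$ and the auxiliary registers follows from the memoryless structure: conditional on $X_Q=x$, the output at time $Q$ is independent of all past/future registers appearing in $U$ and $V$. The compatibility condition $\mathcal{M}_x(\rho_x^U)=\rho_x^V$ is only a marginal condition and is strictly weaker than a genuine quantum Markov chain $U-V-X$; with the identification above, it should be realizable by a channel that retains the $U$-part of $V$ and supplies the additional register $B_3^{i-1}$ via an appropriate $x$-conditional preparation. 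The fundamental point of subtlety is that a full quantum Markov chain cannot in general be asserted here, which is precisely the source of the mismatch with the achievability region of Theorem \ref{two-degraded-multilevel}; ensuring that the chosen identification genuinely satisfies the weaker recoverability condition, rather than attempting to force the stronger Markov condition, is where the bulk of the technical care must go. A standard convexity argument against the time-sharing index $Q$ then closes out the proof.
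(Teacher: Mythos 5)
Your high-level strategy (Fano's inequality followed by a single-letterization with quantum auxiliary registers) is in the right spirit, but it misses the shortcut the paper actually takes, and your specific auxiliary identifications would not deliver the stated bounds. The paper does not run a fresh converse: it splits the problem into the two-receiver degraded channel $x\to\rho_x^{B_1B_2}$ and the two-receiver degraded-message-set channel $x\to\rho_x^{B_1B_3}$, imports the two known converses of \cite{q-yard} with the auxiliaries $U_i=(M_0,B_1^{i-1})$ and $V_i=(M_0,B_1^{i-1},B_{3,i+1}^{n})$, and merges the resulting regions by observing that $V_i$ contains $U_i$ (which supplies the required $U$--$V$ structure) and that degradability gives $I(V;B_1)\geq I(V;B_2)\geq I(U;B_2)$. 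You never use this last chain of inequalities, yet it is precisely what reconciles the $I(V;B_1)$ bound coming from the $(B_1,B_3)$ subproblem with the $I(U;B_2)$ bound in the theorem.

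Concretely, your choices $U_i=(M_0,B_{2,i+1}^{n})$ and $V_i=(M_0,B_{2,i+1}^{n},B_3^{i-1})$ break down as follows. The bound $R_1\leq I(X;B_1|U)$ is the degraded-BC private-rate bound; expanding $I(M_1;B_1^n|M_0)=\sum_i I(M_1;B_{1,i}|M_0,B_1^{i-1})$ puts \emph{past} $B_1$ outputs into the conditioning, and there is no data-processing step that converts this into conditioning on \emph{future} $B_2$ outputs (replacing $B_{1,i+1}^n$ by its degraded version $B_{2,i+1}^n$ inside the first argument of a mutual information goes in the wrong direction for an upper bound). So your $U$ supports $R_0\leq I(U;B_2)$ but not $R_1\leq I(X;B_1|U)$. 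Next, the $V$-bounds involve only receivers $B_1$ and $B_3$, so $V$ must carry $B_1$-past and $B_3$-future registers; the Csisz\'ar-sum swap is between $B_1^{i-1}$ and $B_{3,i+1}^{n}$, not between $B_3^{i-1}$ and $B_{2,i+1}^{n}$, and packing $B_2$ registers into $V$ serves no purpose there. Third, your fourth preliminary inequality $n(R_0+R_1)\leq I(M_0,M_1;B_1^n)$ single-letterizes to $R_0+R_1\leq I(X;B_1)$, not to $R_0+R_1\leq I(V;B_3)+I(X;B_1|V)$; the latter must start from $I(M_0;B_3^n)+I(M_1;B_1^n|M_0)$. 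Finally, the recoverability condition $\mathcal{M}_x(\rho_x^U)=\rho_x^V$ is a per-$x$ marginal condition that is immediate once the auxiliaries are nested as in the paper, so the place you expect "the bulk of the technical care" is not where the actual difficulty lies.
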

\begin{proof}
The proof follows from combining two no-go results for two-receiver degraded quantum broadcast channel and two-receiver quantum broadcast channel with degraded message set \cite[Sec. II B-C]{q-yard}. Here, we only provide a high-level outline of the process.
Consider two quantum broadcast channels:
One is $x\to\rho_x^{B_1B_2}$, which is a degraded channel by definition, and the second $x\to\rho_x^{B_1B_3}$, which is a general two-receiver quantum broadcast channel with two-degraded message set. The weak converse for both of these channels are established as follows \cite[Sec. II B-C]{q-yard}: For the first channel, we can identify an auxiliary random variable $U_i=M_0B_1^{i-1}$ and show that
    \begin{align*}
        R_0&\leq I(U;B_2),\\
        R_1&\leq I(X;B_1|U).
    \end{align*}
For the second channel, we can choose $V_i=(M_0,B_1^{i-1},B_{3,i+1}^{n})$ and show that 
    \begin{align*}
        R_0&\leq\min\{I(V;B_3),I(V;B_1)\},\\
        R_0+R_1&\leq I(V;B_3)+I(X;B_1|V).
    \end{align*}
We observe that, with abuse of notation, the required Markov chain holds $U-V-X$. Noting that $I(V;B_1)\geq I(V;B_2)\geq I(U;B_2)$, the combination of these regions gives us the required no-go result.
\end{proof}
\begin{remark}
Note that in the aforementioned Markov chain $U-V-X$, the systems $U$ and $V$ is generally a quantum systems, and they form a Markov chain in the sense that $V$ makes $X$ conditionally independent of $U$. Moreover, notice that we only could find single-letter converses by choosing quantum auxiliary variables $(U,V)$. This means that we don't know if our achievability region could match this outer bound except when the quantum systems mutually commute, i.e. they are essentially classical. It is also not known whether the no-go region involving quantum systems is bigger than the achievability region. These are important questions in quantum information theory \cite{6634255,8370123}.
\end{remark}

\bigskip

\section{General three-receiver quantum broadcast channel with two-degraded message set}
\label{general-two-degraded}
In this section, we continue with transmission of a two-degraded message set over the three-receiver quantum broadcast channel $x\to\rho_x^{B_1B_2B_3}$, where the transmitter wants to send a common message $m_0\in[1:2^{nR_0}]$ to all receivers, and a private (or individualized) message $m_1\in[1:2^{nR_1}]$ solely to receiver $B_1$. However, unlike the previous section, the three-receiver quantum broadcast channel is not assumed to be multilevel. This corresponds to the scenario depicted in Fig. \ref{multi-level.two-degraded} except that there exist no degrading channel $\mathcal{M}^{B_1\to B_2}$.
Definitions of code, achievability, and capacity region are straightforward from Definition \ref{code-def-multilevel}.

\begin{theorem}\label{general-two-degraded.t}
     We assume a transmitter and three receivers have access to many independent uses of a three-receiver quantum broadcast channel $x\to\rho_x^{B_1B_2B_3}$. The set of rate pairs $(R_0,R_1)$ is asymptotically achievable for the aforementioned two-degraded message set if
    \begin{align*}
        R_0 & \leq \min\{I(V_2;B_2), I(V_3;B_3)\},\\
        2R_0 & \leq I(V_2;B_2) + I(V_3; B_3) -I(V_2;V_3|U),\\
        R_0 + R_1 &\leq \min\{I(X;B_1),I(V_2;B_2) + I(X;B_1|V_2), I(V_3,B_3)+I(X;B_1|V_3)\},\\
        2R_0 + R_1 & \leq I(V_2;B_2) I(V_3;B_3)+ I(X;B_1|V_2,V_3)-I(V_2,V_3|U),\\
        2R_0 + 2R_1 &\leq I(V_2;B_2) + I(X;B_1|V_2)+I(V_3;B_3)+I(X;B_1|V_3)-I(V_2;V_3|U),\\
        2R_0 + 2R_1 & \leq I(V_2;B_2) + I(V_3;B_3) + I(X;B_1|U) + I(X;B_1|V_2,V_3) - I(V_2;V_3|U),
\end{align*}
hold for some probability distribution
\begin{align*}
    p(u,v_2,v_3,x) =& p(u)(v_2|u)p(x,v_3|v_2)\\
    =& p(u)(v_3|u)p(x,v_2|v_3),
\end{align*}
i.e. the Markov chains $U-V_2 - (V_3,X)$ and $U-V_3 - (V_2,X)$ are simultanoeusly satisfied,
 resulting in the cq-state $\rho^{UXV_2V_3B_1B_2B_3}$.
\end{theorem}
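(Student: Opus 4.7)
The plan is to construct a one-shot code that extends the scheme of Section \ref{multilevel-two-degraded} in two essential ways. First, since $B_2$ is no longer assumed degradable with respect to $B_1$, we introduce two separate auxiliary variables $V_2$ and $V_3$ on top of a common layer $U$, and correlate them through Marton-style binning using the mutual covering lemma, just as in the Marton code of Section \ref{marton-code}. Second, receiver $B_1$ decodes its target message by non-unique decoding through the $(U,V_2,V_3)$ layers via nested pinching maps, mirroring the three-layer structure used for the multilevel case. The asymptotic region will then be recovered by invoking Proposition \ref{pinching-asymptotic} to absorb the pinching factors into sub-exponential terms and letting the Rényi parameter approach the von Neumann limit.

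First I would carry out rate splitting, writing $R_1 = T_1 + T_{12} + T_{13}$ (with parts $T_{12}, T_{13}$ transferred up to the $V_2, V_3$ layers respectively) and introducing Marton binning indices $k_2, k_3$ of rates $r_2, r_3$. Generate $U$-codewords i.i.d.\ from $p(u)$, one per common message. Conditioned on each $U$-codeword, generate $V_2$-codewords i.i.d.\ from $p(v_2\mid u)$ and $V_3$-codewords i.i.d.\ from $p(v_3\mid u)$, indexed by sub-message labels together with binning indices $k_2$ and $k_3$. The mutual covering lemma \cite[Fact 2]{Sen2021} then guarantees that whenever $r_2 + r_3 \geq I_{\max}^{\varepsilon}(V_2;V_3\mid U) + O(\log \varepsilon^{-1})$, each product bin contains at least one $(V_2,V_3)$ pair whose joint empirical distribution is close to $p(v_2,v_3\mid u)$. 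The codeword $X$ is generated conditionally on this chosen pair via $p(x\mid v_2,v_3)$.

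Next I would construct decoders and bound their error probabilities. For $B_2$, use a pretty-good measurement built from the projector $\{\mathcal{E}^{B_2}(\rho^{U V_2 B_2}) \ge 2^{T_2}\,\mathcal{E}^{B_2}(\rho^{V_2-U-B_2})\}$ together with an unconditional layer, applying non-unique decoding over $k_2$ by summing the constituent projectors as in Eq.~\eqref{non-unique-povm-2}; the Hayashi--Nagaoka bound and pinching then yield error control via $I(V_2;B_2)$. Symmetrically for $B_3$ with $V_3$. For $B_1$, construct nested pinchings $\mathcal{E}^{B_1}$, $\mathcal{E}_1^{U B_1}$, $\mathcal{E}_2^{U V_2 V_3 B_1}$ and the associated projectors capturing the events that $X$ is statistically distinguishable from independence at each layer; non-unique decoding will then produce error terms controlled by $I(X;B_1)$, $I(X;B_1\mid V_2)$, $I(X;B_1\mid V_3)$, $I(X;B_1\mid U)$, and $I(X;B_1\mid V_2,V_3)$, according to which among $(U,V_2,V_3,X)$ layers differ between the hypothesized and transmitted codewords.

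Finally I would apply Proposition \ref{pinching-asymptotic} to bound the numbers of distinct eigenvalues polynomially in $n$, push the Rényi parameter $\alpha \to 0$, and perform Fourier--Motzkin elimination on the resulting preliminary inequalities to eliminate $T_{12}, T_{13}, r_2, r_3$ and obtain the six stated inequalities. The main obstacle will be the bookkeeping in the second Hayashi--Nagaoka term for $B_1$: because Marton binning couples $V_2$ and $V_3$, the error events must be decomposed according to whether only $V_2$ differs, only $V_3$ differs, or both differ, and the last case produces the characteristic Marton-type sum-rate inequality $2R_0 + 2R_1 \leq I(V_2;B_2) + I(V_3;B_3) + I(X;B_1\mid U) + I(X;B_1\mid V_2,V_3) - I(V_2;V_3\mid U)$. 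Handling this joint event while preserving the classical-mirror structure of the preliminary inequalities is where the non-unique decoding technique of Section \ref{marton-code} is essential, allowing us to avoid the overcounting machinery entirely.
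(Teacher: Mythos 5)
Your overall architecture is the paper's: superposition $U\to(V_2,V_3)\to X$ with Marton binning between $V_2$ and $V_3$ conditioned on $U$ via Sen's mutual covering lemma, nested pinching projectors combined through Hayashi--Nagaoka, non-unique decoding at $B_2$ and $B_3$ over the within-bin indices, polynomial eigenvalue counts from Proposition \ref{pinching-asymptotic}, and Fourier--Motzkin at the end. However, there is one genuine gap in your rate splitting: you write $R_1=T_1+T_{12}+T_{13}$ and put \emph{only the common message} on the $U$-codewords ("one per common message"). The paper instead splits $R_1=S_0+S_1+S_2+S_3$ and encodes the pair $(M_0,M_{10})$ into $U$, i.e.\ it transfers a portion $S_0$ of the private message up into the cloud center. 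This matters: the $B_1$ packing condition at the $U$-conditional layer reads $S_1+S_2+S_3\le I(X;B_1|U)$ in the paper, which is $R_1-S_0\le I(X;B_1|U)$, whereas in your scheme it becomes $T_1+T_{12}+T_{13}=R_1\le I(X;B_1|U)$. After Fourier--Motzkin your region therefore carries the extra constraint $R_1\le I(X;B_1|U)$, which is not implied by the six stated inequalities (e.g.\ at $R_0=0$ the theorem permits $R_1$ up to $I(X;B_1)>I(X;B_1|U)$ whenever $I(U;B_1)>0$), so for a fixed $p(u,v_2,v_3,x)$ you prove a strictly smaller region. The fix is exactly the rate transfer you omitted; the price of it, namely the terms $R_0+S_0+\cdots\le I(V_2;B_2)$ and $R_0+S_0+\cdots\le I(V_3;B_3)$ at the other two receivers, is absorbed by Fourier--Motzkin into the stated sum-rate bounds.

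Two smaller points. First, you describe $B_1$ as decoding "non-uniquely through the $(U,V_2,V_3)$ layers"; in this problem both Marton bin indices $s_2,s_3$ are parts of $M_1$ destined for $B_1$, so $B_1$ must decode them uniquely (the only harmless ambiguity is over the within-bin indices $k_1,k_2$, which is automatic because $x$ does not depend on them once the encoder's choice is fixed). Your subsequent error decomposition by "which layer differs" is the right one, so this seems to be loose terminology rather than a wrong decoder, but it should be stated as unique decoding of $(m_0,s_0,s_2,s_3,s_1)$. Second, your $B_2$ decoder includes a conditional projector $\{\mathcal{E}^{B_2}(\rho^{UV_2B_2})\ge 2^{T_2}\mathcal{E}^{B_2}(\rho^{V_2-U-B_2})\}$; since $B_2$ only needs $M_0$ here, that layer is unnecessary and injects an extra packing constraint of the form $T_{12}+r_2\le I(V_2;B_2|U)$ that is not implied by the unconditional one and could further shrink the region. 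The paper's $B_2$ decoder uses only the single unconditional threshold against $\rho^{V_2}\otimes\rho^{B_2}$ (the conditional layer is reserved for the three-degraded message set of Theorem \ref{general-three-degraded.t}, where $B_2$ genuinely must resolve $V_2$).
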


\bigskip

\subsection{One-shot code construction}
Consider a three-receiver  cq-broadcast channel $x\to\rho_x^{B_1B_2B_3}$, with corresponding marginal channels 
$x\to \rho^{B_1}_x$, $x\to \rho^{B_2}_x$, and $x\to \rho^{B_3}_x$. For codebook construction, we fix a probability distribution on four random variables satisfying the Markov chains $U-V_2-(X,V_3)$ and $U-V_3-(X,V_2)$.
We use pinching technique to construct POVMs and use Sen's mutual covering lemma for Marton coding \cite[Fact 2]{Sen2021}. 
 The asymtotic analysis of the code developped here recovers Theorem \ref{general-two-degraded.t}
\subsubsection{Rate splitting}
We split the message $M_1$ (which is intended solely to the first receiver) into four independent messages $\{M_{1i}\}_{i=0,1,2,3}$, of respective rates $\{S_{i}\}_{i=0,1,2,3}$,
i.e. $R_1=S_0 + S_1 + S_2 + S_3$. For notational convenience, we denote the relazitations of the uniform random variables $M_{10},M_{11},M_{12}$ and $M_{13}$ by $s_0,s_1,s_2$ and $s_3$, respectively. The message pair $(M_0,M_{10})$ is represented by $U$. Using superposition and Marton coding, the message triple $(M_0,M_{10},M_{12})$ is represented by $V_2$ and the message triple $(M_0,M_{10},M_{13})$ is represented by
$V_3$. Finally, using another superposition coding, the message pair $(M_0,M_{1})$ is represented by $X$. Receiver $B_1$ find the pair $(M_0,M_{1})$ by decoding $U,V_2,V_3,X$, and receivers $B_2$ and $B_3$ find $M_0$ by decoding $U$ indirectly through $V_2$ and $V_3$, respectively.  
\subsubsection{Codebook generation} Fix a probability distribution $p(u,v_2,v_3,x)$ such that the Markov chains $U-V_2-(X,V_3)$ and $U-V_3-(X,V_2)$ hold simultaneously, i.e. $p(u,v_2,v_3,x) = p(u)(v_2|u)p(x,v_3|v_2)
    = p(u)(v_3|u)p(x,v_2|v_3)$. Furthermore, let $T_2\geq S_2$ and $T_3\geq S_3$.
Our random code $\mathcal{C}_1$ is composed of four tuples, each containing subsets of elements from $\mathcal{U},\mathcal{V}_2,\mathcal{V}_3$, and $\mathcal{X}$. These tuples are generated as follows:
Randomly and independently generate $2^{R_0+S_0}$ elements 
$\{u(m_0,s_0)\}$, $m_0 \in [1:2^{R_0}],s_0 \in [1:2^{S_0}]$ according to the pmf $p(u)$.
For each $u(m_0,s_0)$, generate (a) $2^{S_2+r_1}$ elements
$\{v_2(m_0,s_0,s_2,k_1)\}_{(s_2,k_1)\in [1:2^{S_2}]\times[1:2^{r_1}]}$, randomly and conditionally independent from the pmf $p(v_2|u)$, (b)
$2^{S_3+r_2}$ elements
$\{v_3(m_0,s_0,s_3,k_2)\}_{(s_3,k_2)\in [1:2^{S_3}]\times[1:2^{r_2}]}$, randomly and conditionally independent according to $p(v_3|u)$. The elements 
\begin{align*}
    \{v_2(m_0,s_0,s_2,k_1)\}_{(s_2,k_1)\in [1:2^{S_2}]\times[1:2^{r_1}]}
\end{align*}
are randomly partitioned into $2^{S_2}$ bins of equal size, so that each bin contains the same number of elements $2^{r_1}$. Similarly, the $2^{S_3+r_2}$ elements $\{v_3(m_0,s_0,s_3,k_2)\}_{(s_3,k_2)\in [1:2^{S_3}]\times[1:2^{r_2}]}$ are partitioned into $2^{S_3}$ equal size bins, so that each bin contains $2^{r_2}$ elements.
To ensure that each product bin $(s_2,s_3)$ contains a suitable pair $(v_2(m_0,s_0,s_2,k_1),v_2(m_0,s_0,s_3,k_2))$ for encoding with high probability, we require that
\begin{align}\label{covering-condition}
    r_1 + r_2\geq I_{\text{max}}^{\varepsilon}(V_2;V_3|U)+2\log\frac{1}{\varepsilon}.
\end{align}
This follows from mutual covering lemma \cite[Fact 2]{Sen2021}.
Roughly speaking, this requirement means that there exists at least one pair $(v_2(m_0,s_0,s_2,k_1),v_2(m_0,s_0,s_3,k_2))$ in each product bin $(s_2,s_3)$ whose joint distribution is close to $p(v_2,v_3|u)$ although they are generated from $p(v_2|u)p(v_3|u)$.
Finally, for each pair $(v_2(m_0,s_0,s_2,k_1),v_3(m_0,s_0,s_3,k_2))$ in each product bin $(s_2,s_3)$ that satisfies the condition of mutual covering lemma, randomly and conditionally independently generate $2^{S_1}$ elements $x(m_0,s_0,s_2,s_3,s_1)$, $s_1 \in [1:2^{S_1}]$, each according to $P(x|v_2,v_3)$. This completes the codebook.
\subsubsection{Encoding} To send message pair $(m_0,m_1)$, we express $m_1$ by the quadruple $(s_0,s_1,s_2,s_3)$. The sender looks up a suitable pair $(v_2(m_0,s_0,s_2,k_1),v_3(m_0,s_0,s_3,k_2))$ in the product bin $(s_2,s_3)$ and eventually sends the codeword $x(m_0,s_0,s_2,k_1,s_3,k_2)$ over the three-receiver quantum broadcast channel. 

\begin{remark}
    The codebook contains Marton's code and superposition coding. Specifically, to transmit the two parts of message $M_1$, namely $M_{12}$ with rate $S_2$ and $M_{13}$ with rate $S_3$, we use Marton's code. However, since both encoded messages in Marton's code are for the same receiver $B_1$, it differs from Marton's original code, where the messages are for two different receivers. This difference leads to new rates for the messages. Roughly speaking, if the codewords $(v_2(m_0,s_0,s_2,k_1),v_3(m_0,s_0,s_3,k_2))$ were for distinct receivers, we'd need restrictions on the total number of sequences of variables $v_2$ and $v_3$. However, when considering both sequences together, the decoder can eliminate those that don't meet the mutual conditional lemma, reducing its list-size accordingly. This means we only need to impose restrictions on the rates of the messages $S2$ and $S_3$. This is also evident in the classical counterpart of this problem \cite{Nair-Elgamal}, where in Eqs. (20-23), the restrictions are placed on $S_2$ and $S_3$ rather than $T_2$ and $T_3$.
    \end{remark}

\subsubsection{Decoding and the analysis of error probability} 
\label{general-two-degraded-error}
The failure of encoder contributes a constant additive term $f(\varepsilon)$ to the ultimate error probability. We go through the details of the decoding errors below assuming that the encoding was successful.. 
Upon sending the codeword $x(m_0,s_0,s_1,s_2,s_3)$, each receivers obtains its share of $\rho^{B_1B_2B_3}_{x(m_0,s_0,s_2,s_3,s_1)}$.
Subsequently, each receiver applies a POVM on its system to decode the intended messages. These POVMs are specifically designed to extract messages simultaneously. The details are outlined as follows:
\hfill\\
\vspace{-0.4cm}

\textbf{\emph{Receiver $B_1$:}} Let $(\hat{m}_0,\hat{s}_0,\hat{s}_2,\hat{s}_3,\hat{s}_1)$ denote the estimates made by receiver $B_1$ for the messages.
Receiver $B_1$ declares that $(\hat{m}_0,\hat{s}_0,\hat{s}_2,\hat{s}_3,\hat{s}_1)=(m_0,s_0,s_1,s_2,s_3)$ is sent if it uniquely corresponds to the tuple $\left(u(m_0,s_0),v_2(m_0,s_0,s_2,k_1),v_3(m_0,s_0,s_3,k_2),x(m_0,s_0,s_2,s_3,s_1)\right)$ identified by its POVM for some $k_1$ and $k_2$. Notice that $s_2$ and $s_3$ are the product bin indices of $v_2(m_0,s_0,s_2,k_1)$ and $v_3(m_0,s_0,s_3,k_2)$, respectively, which are supposed be decoded correctly, but only up to $k_1$ and $k_2$. That is, the pair $(k_1,k_2)\in[1:2^{r_1}]\times[1:2^{r_2}]$ may not correspond to the actual pair chosen by the encoder, but there is no error as long as the pair $(s_1,s_2)$ is exactly the one chosen by the encoder. Now, we proceed to define the pinching maps as components of the decoding POVM for this receiver. The pinching on $B_1$ with respect to the spectral decomposition of $\rho^{B_1}$
is denoted by $\mathcal{E}^{B_1}$.
We denote the pinching on $U B_1$
for the spectral decomposition of $\mathcal{E}^{B_1}(\rho^{XV_2V_3-U-B_1})$
by $\mathcal{E}^{UB_1}_{1}=\sum_u\ketbra{u}\otimes \mathcal{E}^{B_1}_{1|u}$, where $\mathcal{E}^{B_1}_{1|u}$ is the pinching on $B_1$ for the spectral decomposition of $\mathcal{E}^{B_1}(\rho_{u}^{B_1})$. Note that in fact this pinching consists of a family of pinching maps $\{\mathcal{E}^{B_1}_{1|u}\}_{u}$.
We denote the pinching on $U V_2 B_1$
for the spectral decomposition of $\mathcal{E}^{UB_1}_1(\rho_{UV_3X-V_2-B_1})$
by $\mathcal{E}^{UV_2B_1}_2=\sum_{u,v_2}\ketbra{u}\otimes\ketbra{v_2}\otimes \mathcal{E}^{B_1}_{2|u,v_2}$, where $\mathcal{E}^{B_1}_{2|u,v_2}$ is a pinching maps on $B_1$ with respect to the spectral decomposition of $\mathcal{E}^{B_1}_{1|u}(\rho^{B_1}_{v_2})$.
We denote the pinching on $U V_3 B_1$
for the spectral decomposition of $\mathcal{E}^{UB_1}(\rho_{UV_2X-V_3-B_1})$
by $\mathcal{E}^{UV_3B_1}_{3}=\sum_{u,v_3}\ketbra{u}\otimes\ketbra{v_3}\otimes \mathcal{E}^{B_1}_{3|u,v_3}$, where $\mathcal{E}^{B_1}_{3|u,v_3}$ is a pinching map on $B_1$ for the spectral decomposition of $\mathcal{E}^{B_1}_u(\rho^{B_1}_{v_3})$.
We denote the pinching on $U V_2 V_3 B_1$
for the spectral decomposition of $\mathcal{E}^{UB_1}(\rho_{UX-V_2V_3-B_1})$
by $\mathcal{E}^{UV_2V_3B_1}_4=\sum_{u,v_2,v_3}\ketbra{u}\otimes\ketbra{v_2}\otimes\ketbra{v_3}\otimes \mathcal{E}^{B_1}_{4|u,v_2,v_3}$. This pinching consists of a family of pinching maps $\{\mathcal{E}^{B_1}_{4|u,v_2,v_3}\}_{u,v_2,v_3}$ on $B_1$ for the spectral decomposition of $\mathcal{E}^{B_1}_{1|u}(\rho^{B_1}_{v_2,v_3})$.
We employ these pinching maps to define five projectors as follows:  
\begin{align*}
\Theta_{4}^{UV_2V_3XB_1}&\coloneqq\Big\{ \mathcal{E}^{UV_2V_3B_1}_4(\rho^{UV_2V_3XB_1}) \ge 2^{S_1}\mathcal{E}^{UB_1}(\rho^{UX- V_2V_3-B_1})\Big\},\\
\Theta_{3}^{UV_2V_3XB_1}&\coloneqq\Big\{ \mathcal{E}^{UV_3B_1}_3(\rho^{UV_2V_3XB_1}) \ge 2^{S_2+S_1}\mathcal{E}^{UB_1}(\rho^{UXV_2- V_3-B_1})\Big\},\\
\Theta_{2}^{UV_2V_3XB_1}&\coloneqq\Big\{ \mathcal{E}^{UV_2B_1}_2(\rho^{UV_2B_1}) \ge 2^{S_3+S_1}\mathcal{E}^{UB_1}(\rho^{UXV_3- V_2-B_1})\Big\},\\
\Theta_{1}^{UV_2V_3XB_1}&\coloneqq\Big\{\mathcal{E}^{UB_1}(\rho^{UV_2V_3X B_1}) \ge 2^{S_1+S_2+S_3}\mathcal{E}^{B_1}(\rho^{V_2 V_3 X- U-B_1})\Big\}, \\
\Theta_{0}^{UV_2V_3XB_1}&\coloneqq\Big\{ \mathcal{E}^{B_1}(\rho^{U V_2 V_3 XB_1}) \ge 2^{R_0+S_0+S_1+S_2+S_3}\rho^{UV_2V_3X}\otimes \rho^{B_1}\Big\}.
\end{align*}

We further define the projectors
\begin{align*}
\Theta_{4\,u,v_2,v_3,x}^{B_1}&\coloneqq\bra{u,v_2,v_3,x}\Theta_{4}\ket{u,v_2,v_3,x},\\
\Theta_{3\,u,v_2,v_3,x}^{B_1}&\coloneqq\bra{u,v_2,v_3,x}\Theta_{3}\ket{u,v_2,v_3,x},\\
\Theta_{2\,u,v_2,v_3,x}^{B_1}&\coloneqq\bra{u,v_2,v_3,x}\Theta_{2}\ket{u,v_2,v_3,x},\\
\Theta_{1\,u,v_2,v_3,x}^{B_1}&\coloneqq\bra{u,v_2,v_3,x}\Theta_{1}\ket{u,v_2,v_3,x},\\
\Theta_{0\,u,v_2,v_3,x}^{B_1}&\coloneqq\bra{u,v_2,v_3,x}\Theta_{0}\ket{u,v_2,v_3,x}.
\end{align*}
We can now build the main ingredient of the square-root measurement POVM as:
\begin{align*}
\Theta_{u,v_2,v_3,x}^{B_1}&\coloneqq
\Theta_{0\,u,v_2,v_3,x}^{B_1}
\Theta_{1\,u,v_2,v_3,x}^{B_1}
\Theta_{2\,u,v_2,v_3,x}^{B_1}
\Theta_{3\,u,v_2,v_3,x}^{B_1}
\Theta_{4\,u,v_2,v_3,x}^{B_1}.
\end{align*}
So far the projectors do not depend on any specific code. As a matter of fact, the operator $\Theta^{B_1}$ depends on all four classical variables $u,v_2,v_3$ and $x$ despite our code involves certain conditional independence among these variables. We can however use this operator to build POVMs for any code. For the random code we have created, the POVM of receiver $B_1$ is defined as follows:
\begin{equation}
    \begin{aligned}
    \label{POVM-general-two}
        \Lambda_{m_0,s_0,s_2,s_3,s_1}^{B_1}(\mathcal{C}_1)\coloneqq&
\left(\sum_{m_0',s_0',s_2',s_3',s_1'} \Theta_{x(m_0',s_0',s_2',s_3',s_1')}(\mathcal{C}_1)\right)^{-\frac{1}{2}} \\
&\hspace{3cm}\times \Theta_{x(m_0,s_0,s_2,s_3,s_1)}(\mathcal{C}_1) \times\\
&\hspace{3.5cm}\left(\sum_{m_0',s_0',s_2',s_3',s_1'} \Theta_{x(m_0',s_0',s_2',s_3',s_1')}(\mathcal{C}_1)\right)^{-\frac{1}{2}}.
    \end{aligned}
\end{equation}
The decoder finds all the messages encoded into the variables $U,V_2,V_3$ and $X$ uniquely. Note that the indices $k_1$ and $k_2$ respectively in variables $V_2$ and $V_3$ are discarded in generation of $x$. In other words, Note that $B_1$ is interested in decoding both $(s_{2},s_3)$ correctly, so both bin indices are relevant here. In other words, the decoded codeword $x(m_0,s_{0},s_{2},s_{3},s_1)$ does not depend on indices $k_1$ and $k_2$.
(note that unlike the usual scenario for Marton code where messages encoded in different bin indices are intended for different receivers, here both bin indices are useful messages intended for receiver $B_1$). 

The average error probability of the code $\mathcal{C}_1$ for receiver $B_1$ is bounded as follows. Recall that we use the variable $\hat{M}_{tb_i}$ to denote the estimate made by receiver $B_i$ about that message $t$. We obtain:

\begin{align}
\nonumber
 P_{e,1}(\mathcal{C}_1)&=\text{Pr}\{(\hat{m}_0,\hat{s}_0,\hat{s}_2,\hat{s}_3,\hat{s}_1)\neq(m_0,s_0,s_2,s_3,s_1)\}\\
 \nonumber
 &=\text{Pr}\{\hat{M}_{0b_1}\neq M_0,\hat{M}_{1b_1}\neq M_1|\mathcal{C}_1\}\\
 \nonumber
     &=\text{Pr}\{\hat{M}_{0b_1}\neq M_0,\hat{M}_{10b_1}\neq M_{10},\hat{M}_{12b_1}\neq M_{12},\hat{M}_{13b_1}\neq M_{13},\hat{M}_{11b_1}\neq M_{11}|\mathcal{C}_1\}\\ \nonumber
&\coloneqq
 \frac{1}{2^{R_0+S_0+S_2+S_3+S_1}}\sum_{m_0,s_0,s_2,s_3,s_1} \tr (I-\Lambda_{m_0,s_0,s_2,s_3,s_1}^{B_1}(\mathcal{C}_1))
\rho_{x(m_0,s_0,s_2,s_3,s_1)}^{B_1}\\\nonumber
&\le  \frac{1}{2^{R_0+S_0+S_2+S_3+S_1}}
 \sum_{m_0,s_0,s_2,s_3,s_1} \tr \Big(
 2(I- \Theta_{x(m_0,s_0,s_2,s_3,s_1)}^{B_1}(\mathcal{C}_1))\\ \nonumber
 &\hspace{1.5cm}+4
 \sum_{(m_0',s_0',s_2',s_3',s_1')\neq (m_0,s_0,s_2,s_3,s_1)} \Theta_{x(m_0',s_0',s_2',s_3',s_1')}^{B_1}(\mathcal{C}_1)
\Big) \rho_{x(m_0,s_0,s_2,s_3,s_1)}^{B_1}\\
\label{hn-h1}
&=\frac{2}{2^{R_0+S_0+S_2+S_3+S_1}}
 \sum_{m_0,s_0,s_2,s_3,s_1} \tr 
 (I- \Theta_{x(m_0,s_0,s_2,s_3,s_1)}^{B_1}(\mathcal{C}_1))\rho_{x(m_0,s_0,s_2,s_3,s_1)}^{B_1}\\
 \label{hn-h2}
 &+\frac{4}{2^{R_0+S_0+S_2+S_3+S_1}}\sum_{m_0,s_0,s_2,s_3,s_1}
\sum_{\substack{(m_0',s_0',s_2',s_3',s_1')\neq\\(m_0,s_0,s_2,s_3,s_1)}} \tr \Theta_{x(m_0',s_0',s_2',s_3',s_1')}^{B_1}(\mathcal{C}_1)
 \rho_{x(m_0,s_0,s_2,s_3,s_1)}^{B_1},
\end{align}
where the inequality corresponds to the Hayashi-Nagaoka inequality Lemma \ref{hayashi-nagaoka}.
We analyze each term resulting from Hayashi-Nagaoka inequality Eqs. \eqref{hn-h1} and \eqref{hn-h2}, separately.
We have:

\begin{align}
\nonumber
    \text{Eq.}\,\,\eqref{hn-h1}&=\frac{2}{2^{R_0+S_0+S_2+S_3+S_1}}
 \sum_{m_0,s_0,s_2,s_3,s_1} \tr 
 (I- \Theta_{x(m_0,s_0,s_2,s_3,s_1)}^{B_1}(\mathcal{C}_1))\rho_{x(m_0,s_0,s_2,s_3,s_1)}^{B_1}\\
 \label{hn-hn0}
  &\le\frac{2}{2^{R_0+S_0+S_2+S_3+S_1}}
 \sum_{m_0,s_0,s_2,s_3,s_1} \tr 
 \left(I- \Theta_{0\,x(m_0,s_0,s_2,s_3,s_1)}^{B_1}(\mathcal{C}_1)\right)\rho_{x(m_0,s_0,s_2,s_3,s_1)}^{B_1}\\
  \label{hn-hn1}
  &+\frac{2}{2^{R_0+S_0+S_2+S_3+S_1}}
 \sum_{m_0,s_0,s_2,s_3,s_1} \tr 
 \left(I- \Theta_{1\,x(m_0,s_0,s_2,s_3,s_1)}^{B_1}(\mathcal{C}_1)\right)\rho_{x(m_0,s_0,s_2,s_3,s_1)}^{B_1}\\
  \label{hn-hn2}
   &+\frac{2}{2^{R_0+S_0+S_2+S_3+S_1}}
 \sum_{m_0,s_0,s_2,s_3,s_1} \tr 
 \left(I- \Theta_{2\,x(m_0,s_0,s_2,s_3,s_1)}^{B_1}(\mathcal{C}_1)\right)\rho_{x(m_0,s_0,s_2,s_3,s_1)}^{B_1}\\
  \label{hn-hn3}
   &+\frac{2}{2^{R_0+S_0+S_2+S_3+S_1}}
 \sum_{m_0,s_0,s_2,s_3,s_1} \tr 
 \left(I- \Theta_{3\,x(m_0,s_0,s_2,s_3,s_1)}^{B_1}(\mathcal{C}_1)\right)\rho_{x(m_0,s_0,s_2,s_3,s_1)}^{B_1}\\
  \label{hn-hn4}
   &+\frac{2}{2^{R_0+S_0+S_2+S_3+S_1}}
 \sum_{m_0,s_0,s_2,s_3,s_1} \tr 
 \left(I- \Theta_{4\,x(m_0,s_0,s_2,s_3,s_1)}^{B_1}(\mathcal{C}_1)\right)\rho_{x(m_0,s_0,s_2,s_3,s_1)}^{B_1},
\end{align}
where the inequality follows because $ \Theta_{x(m_0,s_0,s_2,k_1,s_3,k_2,s_1)}^{B_1}(\mathcal{C}_1)\le \Theta_{j\,x(m_0,s_0,s_2,k_1,s_3,k_2,s_1)}^{B_1}(\mathcal{C}_1)\leq I^{B_1}$ for $j=0,1,2,3$. We now turn into the second term arose from Hayashi-Nagaoka inequality Eq. \eqref{hn-h2}. We have
\begin{align}
\nonumber
    \text{Eq.}\,&\,\eqref{hn-h2}=\frac{4}{2^{R_0+S_0+S_2+S_3+S_1}}\sum_{m_0,s_0,s_2,s_3,s_1}
\sum_{\substack{(m_0',s_0',s_2',s_3',s_1')\neq\\
(m_0,s_0,s_2,s_3,s_1)}} \tr \Theta_{x(m_0',s_0',s_2',s_3',s_1')}^{B_1}(\mathcal{C}_1)
 \rho_{x(m_0,s_0,s_2,s_3,s_1)}^{B_1}\\
 \label{hnn-4}
 &\le\frac{4}{2^{R_0+S_0+S_2+S_3+S_1}}\sum_{m_0,s_0,s_2,s_3,s_1}
 \sum_{s_1'\neq s_1} \tr \Theta_{4\,x(m_0,s_0,s_2,s_3,s_1')}^{B_1}(\mathcal{C}_1)
 \rho_{x(m_0,s_0,s_2,s_3,s_1)}^{B_1} \\
 \label{hnn-3}
  &+\frac{4}{2^{R_0+S_0+S_2+S_3+S_1}}\sum_{m_0,s_0,s_2,s_3,s_1}
 \sum_{s_2'\neq s_2,s_1'} \tr\Theta_{3\,x(m_0,s_0,s_2',s_3,s_1')}^{B_1}(\mathcal{C}_1)
 \rho_{x(m_0,s_0,s_2,s_3,s_1)}^{B_1} \\
 \label{hnn-2}
   &+\frac{4}{2^{R_0+S_0+S_2+S_3+S_1}}\sum_{m_0,s_0,s_2,s_3,s_1}
 \sum_{s_3'\neq s_3,s_1'} \tr \Theta_{2\,x(m_0,s_0,s_2,s_3',s_1')}^{B_1}(\mathcal{C}_1)
 \rho_{x(m_0,s_0,s_2,s_3,s_1)}^{B_1} \\
 \label{hnn-1}
    &+\frac{4}{2^{R_0+S_0+S_2+S_3+S_1}}\sum_{m_0,s_0,s_2,s_3,s_1}
 \sum_{s_2'\neq s_2,s_3'\neq s_3,s_1'} \tr \Theta_{1\,x(m_0,s_0,s_2',s_3',s_1')}^{B_1}(\mathcal{C}_1)
 \rho_{x(m_0,s_0,s_2,s_3,s_1)}^{B_1} \\
 \label{hnn-0}
 &+\frac{4}{2^{R_0+S_0+S_2+S_3+S_1}}\sum_{m_0,s_0,s_2,s_3,s_1}
\sum_{\substack{(m_0',s_0')\neq(m_0,s_0),\\s_2',s_3',s_1'}} \tr \Theta_{0\,x(m_0',s_0',s_2',s_3',s_1')}^{B_1}(\mathcal{C}_1)
 \rho_{x(m_0,s_0,s_2,s_3,s_1)}^{B_1},
\end{align}
where the inequality follows from $ \Theta_{x(m_0,s_0,s_2,k_1,s_3,k_2,s_1)}^{B_1}(\mathcal{C}_1)\le \Theta_{j\,x(m_0,s_0,s_2,k_1,s_3,k_2,s_1)}^{B_1}(\mathcal{C}_1)\leq I^{B_1}$ for $j=0,1,2,3$. We implement $(m_0',s_0',s_2',s_3',s_1') \neq (m_0,s_0,s_2,s_3,s_1)$ in four distinct steps, each corresponding to different layers of superposition coding. In each step, the corresponding projector based on the incorrect message is retained while others are discarded. Note also that the message pair $(m_0',s_0')\neq(m_0,s_0)$ cannot be further divided into smaller indices because both of them will either be decoded correctly or both incorrectly.

We have now obtained the following upper bound on the average error probability of the random code we constructed, given as the sum of the following equation:
\begin{align*}
     P_{e,1}(\mathcal{C}_1) = \eqref{hn-hn0}+\eqref{hn-hn1}+\eqref{hn-hn2}+\eqref{hn-hn3} +\eqref{hn-hn4} + \eqref{hnn-4} + \eqref{hnn-3} + \eqref{hnn-2} + \eqref{hnn-1} + \eqref{hnn-0}
\end{align*}

We now find the expectation of the average error probability over the choice of the random code $\mathcal{C}_1$. For the first five of the above-mentioned equations, it is not difficult to see that
\begin{align}
\nonumber
\mathbb{E}_{\mathcal{C}_1}\big( \eqref{hn-hn0}+\eqref{hn-hn1}+\eqref{hn-hn2}+\eqref{hn-hn3} +\eqref{hn-hn4}\big)&= 2\sum_{j=0}^{4}\tr (I- \Theta_{j}^{UV_2V_3XB_1})\rho^{UV_2V_3XB_1}\\
\label{hay-1}
&\le 4\sum_{j=0}^{4}\tr (I- \Theta_{j}^{UV_2V_3XB_1})\rho^{UV_2V_3XB_1}
\end{align}
For the subsequent five equations, we evaluate the expectation as follows:
\begin{align*}
    &\mathbb{E}_{\mathcal{C}_1}\big( \eqref{hnn-4}+\eqref{hnn-3}+\eqref{hnn-2}+\eqref{hnn-1} +\eqref{hnn-0}\big)=\\
 &\frac{4}{2^{R_0+S_0+S_2+S_3+S_1}}\sum_{m_0,s_0,s_2,s_3,s_1}
 \sum_{s_1'\neq s_1} \mathbb{E}_{\mathcal{C}_1}\left(\tr \Theta_{4\,x(m_0,s_0,s_2,s_3,s_1')}^{B_1}(\mathcal{C}_1)
 \rho_{v_2(m_0,s_0,s_2,k_1)v_3(m_0,s_0,s_3,k_2)}^{B_1}\right) \\
  &+\frac{4}{2^{R_0+S_0+S_2+S_3+S_1}}\sum_{m_0,s_0,s_2,s_3,s_1}
 \sum_{s_2'\neq s_2,s_1'}\mathbb{E}_{\mathcal{C}_1}\left( \tr \Theta_{3\,x(m_0,s_0,s_2',s_3,s_1')}^{B_1}(\mathcal{C}_1)
 \rho_{v_3(m_0,s_0,s_3,k_2)}^{B_1}\right) \\
   &+\frac{4}{2^{R_0+S_0+S_2+S_3+S_1}}\sum_{m_0,s_0,s_2,s_3,s_1}
 \sum_{s_3'\neq s_3,s_1'} \mathbb{E}_{\mathcal{C}_1}\left(\tr \Theta_{2\,x(m_0,s_0,s_2,s_3',s_1')}^{B_1}(\mathcal{C}_1)
 \rho_{v_2(m_0,s_0,s_2,k_1)}^{B_1}\right) \\
    &+\frac{4}{2^{R_0+S_0+S_2+S_3+S_1}}\sum_{m_0,s_0,s_2,s_3,s_1}
 \sum_{s_2'\neq s_2,s_3'\neq s_3,s_1'} \mathbb{E}_{\mathcal{C}_1}\left(\tr \Theta_{1\,x(m_0,s_0,s_2',s_3',s_1')}^{B_1}(\mathcal{C}_1)
 \rho_{u(m_0,s_0)}^{B_1} \right)\\
 &+\frac{4}{2^{R_0+S_0+S_2+S_3+S_1}}\sum_{m_0,s_0,s_2,s_3,s_1}
\sum_{\substack{(m_0',s_0')\neq(m_0,s_0),\\s_2',s_3',s_1'}} \mathbb{E}_{\mathcal{C}_1}\left(\tr \Theta_{0\,x(m_0',s_0',s_2',s_3',s_1')}^{B_1}(\mathcal{C}_1)
 \rho^{B_1}\right),\\
 &= 4 (2^{S_1}-1) \tr \Theta_{4}^{UV_2V_3XB_1}
 \rho^{UX-V_2V_3-B_1} \\
  & \hspace{1cm}+4 (2^{S_2}-1)2^{S_1} \tr \Theta_{3}^{UV_2V_3XB_1}
 \rho^{UV_2X-V_3-B_1} \\
  & \hspace{1cm}+4 (2^{S_3}-1)2^{S_1} \tr \Theta_{2}^{UV_2V_3XB_1}
 \rho^{UV_3X-V_2-B_1}  \\
  & \hspace{1cm}+ 4(2^{S_2+S_3}-1)2^{S_1} \tr \Theta_{1}^{UV_2V_3XB_1}
 \rho^{XV_2V_3-U-B_1}  \\
   & \hspace{1cm}+4 (2^{R_0+S_0}-3)2^{S_3}2^{S_2}2^{S_1} \tr \Theta_{0}^{UV_2V_3XB_1}
 \rho^{UV_2V_3XB_1}\\
  &\le 4\times2^{S_1} \tr \Theta_{4}^{UV_2V_3XB_1}
 \rho^{UX-V_2V_3-B_1} \\
  & \hspace{1cm}+4 \times2^{S_2}2^{S_1} \tr \Theta_{3}^{UV_2V_3XB_1}
 \rho^{UV_2X-V_3-B_1} \\
  & \hspace{1cm}+4 \times2^{S_3}2^{S_1} \tr \Theta_{2}^{UV_2V_3XB_1}
 \rho^{UV_3X-V_2-B_1}  \\
  & \hspace{1cm}+ 4\times2^{S_2+S_3}2^{S_1} \tr \Theta_{1}^{UV_2V_3XB_1}
 \rho^{XV_2V_3-U-B_1}  \\
   & \hspace{1cm}+4 \times2^{R_0+S_0}2^{S_3}2^{S_2}2^{S_1} \tr \Theta_{0}^{UV_2V_3XB_1}
 \rho^{UV_2V_3XB_1}
\end{align*}
Combining the last five lines of the previous evaluation with Eq. \eqref{hay-1} yields the expectation of the average error probability. To align with the formulation of the hypothesis testing lemma (Lemma \ref{hp-testing}), we organize these 10 equations in the following order:
\begin{align*}
\mathbb{E}_{\mathcal{C}_1}\big(P_{e,1}(\mathcal{C}_1)\big)&=
    4\tr (I- \Theta_{0}^{UV_2V_3XB_1})\rho^{UV_2V_3XB_1} + 4 \times2^{R_0+S_0+S_3+S_2+S_1} \tr \Theta_{0}^{UV_2V_3XB_1} \rho^{UV_2V_3XB_1}\\
   & \hspace{1cm}+ 4\tr (I- \Theta_{1}^{UV_2V_3XB_1})\rho^{UV_2V_3XB_1} + 4\times2^{S_2+S_3+S_1} \tr \Theta_{1}^{UV_2V_3XB_1}
 \rho^{XV_2V_3-U-B_1}\\
& \hspace{1cm}+ 4\tr (I- \Theta_{2}^{UV_2V_3XB_1})\rho^{UV_2V_3XB_1}
  +4 \times2^{S_3+S_1} \tr \Theta_{2}^{UV_2V_3XB_1}
 \rho^{UV_3X-V_2-B_1} \\
 & \hspace{1cm}+ 4\tr (I- \Theta_{3}^{UV_2V_3XB_1})\rho^{UV_2V_3XB_1}
  +4 \times2^{S_2+S_1} \tr \Theta_{3}^{UV_2V_3XB_1}
 \rho^{UV_2X-V_3-B_1}\\
  &\hspace{1cm} + 4\tr (I- \Theta_{4}^{UV_2V_3XB_1})\rho^{UV_2V_3XB_1}
  +4\times2^{S_1} \tr \Theta_{4}^{UV_2V_3XB_1}
 \rho^{UX-V_2V_3-B_1}\\
 &\stackrel{\text{(a)}}{\le}2^{\alpha(R_0+S_0+S_2+S_3+S_1)+2}2^{-\alpha D_{1-\alpha}\left(\mathcal{E}^{B_1}(\rho^{U V_2 V_3 XB_1}) \| \rho^{UV_2V_3X}\otimes \rho^{B_1}\right)}\\
 &\hspace{1cm}+2^{\alpha(S_2+S_3+S_1)+2}2^{-\alpha D_{1-\alpha}\left(\mathcal{E}^{UB_1}_{1}(\rho^{UV_2V_3X B_1}) \| \mathcal{E}^{B_1}(\rho^{V_2 V_3 X- U-B_1})\right)}\\
 &\hspace{1cm}+2^{\alpha(S_3+S_1)+2}2^{-\alpha D_{1-\alpha}\left( \mathcal{E}^{UV_2B_1}_{2}(\rho^{UV_2v_3XB_1}) \| \mathcal{E}^{UB_1}_{1}(\rho^{UXV_3- V_2-B_1})\right)}\\
 &\hspace{1cm}+2^{\alpha(S_2+S_1)+2}2^{-\alpha D_{1-\alpha}\left(\mathcal{E}^{UV_3B_1}_{3}(\rho^{UV_2V_3XB_1}) \| \mathcal{E}^{UB_1}_{1}(\rho^{UXV_2- V_3-B_1})\right)}\\
 &\hspace{1cm}+2^{\alpha S_1+2}2^{-\alpha D_{1-\alpha}\left(\mathcal{E}^{UV_2V_3B_1}_{4}(\rho^{UV_2V_3XB_1}) \| \mathcal{E}^{UB_1}_{1}(\rho^{UX- V_2V_3-B_1})\right)}\\
 &\stackrel{\text{(b)}}{\le} \nu 2^{\alpha(R_0+S_0+S_2+S_3+S_1)+2}2^{-\alpha\widetilde{D}_{1-\alpha}\left(\rho^{U V_2 V_3 XB_1} \| \rho^{UV_2V_3X}\otimes \rho^{B_1}\right)}\\
 &\hspace{1cm}+\nu_1 2^{\alpha(S_2+S_3+S_1)+2}2^{-\alpha\widetilde{D}_{1-\alpha}\left(\rho^{U V_2 V_3 XB_1} \| \mathcal{E}^{B_1}(\rho^{V_2 V_3 X- U-B_1})\right)}\\
 &\hspace{1cm}+\nu_2 2^{\alpha(S_3+S_1)+2}2^{-\alpha\widetilde{D}_{1-\alpha}\left(\rho^{U V_2 V_3 XB_1} \| \mathcal{E}^{UB_1}_{1}(\rho^{UXV_3- V_2-B_1})\right)}\\
 &\hspace{1cm}+\nu_3 2^{\alpha(S_2+S_1)+2}2^{-\alpha\widetilde{D}_{1-\alpha}\left(\rho^{U V_2 V_3 XB_1} \| \mathcal{E}^{UB_1}_{1}(\rho^{UXV_2- V_3-B_1})\right)}\\
 &\hspace{1cm}+\nu_4 2^{\alpha S_1+2}2^{-\alpha\widetilde{D}_{1-\alpha}\left(\rho^{U V_2 V_3 XB_1} \| \mathcal{E}^{UB_1}_{1}(\rho^{UX- V_2V_3-B_1})\right)}\\
 &\stackrel{\text{(c)}}{\le} \nu 2^{\alpha(R_0+S_0+S_2+S_3+S_1)+2-\alpha\tilde{I}_{\alpha}^{\uparrow}\left(U V_2 V_3 X; B_1\right)_{\rho^{U V_2 V_3 XB_1}}}\\
 &\hspace{1cm}+\nu_1 2^{\alpha(S_2+S_3+S_1)+2-\alpha\tilde{I}_{\alpha}^{\downarrow}\left(V_2 V_3 X; B_1|U\right)_{\rho^{U V_2 V_3 XB_1}|\rho^{U V_2 V_3 X}}}\\
 &\hspace{1cm}+\nu_2 2^{\alpha(S_3+S_1)+2-\alpha\tilde{I}_{\alpha}^{\downarrow}\left(U V_3 X; B_1|V_2\right)_{\rho^{U V_2 V_3 XB_1}|\rho^{U V_2 V_3 X}}}\\
 &\hspace{1cm}+\nu_3 2^{\alpha(S_2+S_1)+2-\alpha\tilde{I}_{\alpha}^{\downarrow}\left(U V_2 X; B_1|V_3\right)_{\rho^{U V_2 V_3 XB_1}|\rho^{U V_2 V_3 X}}}\\
 &\hspace{1cm}+\nu_4 2^{\alpha S_1+2-\alpha\tilde{I}_{\alpha}^{\downarrow}\left(U X; B_1|V_2,V_3\right)_{\rho^{U V_2 V_3 XB_1}|\rho^{U V_2 V_3 X}}}.
\end{align*}
where $\nu,\nu_1,\nu_2,\nu_3,\nu_4$ are the maximum number of distinct eigenvalues of the operators $\rho^{B_1},\{\mathcal{E}^{B_1}(\rho^{B_1}_{1|u})\}_u$, $\{\mathcal{E}^{B_1}_{1|u}(\rho^{B_1}_{v_2})\}_{u,v_2},\{\mathcal{E}^{B_1}_{1|u}(\rho^{B_1}_{v_3})\}_{u,v_3}$, and $\{\mathcal{E}^{B_1}_{1|u}(\rho^{B_1}_{v_2,v_3})\}_{u,v_2,v_3}$, respectively. Here, where (a) follows from Lemma \ref{hp-testing}, (b) from Lemma \ref{petz-sandwich}, and (c) from the definition of the R\'enyi quantum mutual information quantities.
\medskip

\begin{remark}
      In order to transmit the two parts of message $M_1$, namely $M_{12}$ with rate $S_2$ and $M_{13}$ with rate $S_3$, we use Marton's code. However, since both encoded messages in Marton's code are for the same receiver $B_1$, its analysis differs from Marton's original code, where the messages are for two different receivers. This difference leads to new rates for the messages. Roughly speaking, if the codewords $(v_2(m_0,s_0,s_2,k_1),v_3(m_0,s_0,s_3,k_2))$ were for distinct receivers, we would need restrictions on the total number of sequences of variables $v_2$ and $v_3$. However, when considering both sequences together, the decoder can eliminate those that do not meet the condition of mutual conditional lemma, reducing its list-size accordingly. This means we only need to impose restrictions on the rates of the messages $S2$ and $S_3$. This is also evident in the classical counterpart of this problem \cite{Nair-Elgamal}, where in Eqs. (20-23), the restrictions are placed on $S_2$ and $S_3$ rather than $T_2$ and $T_3$.
    \end{remark}

\bigskip

\textbf{\emph{Receiver $B_2$:}} Receiver $B_2$ decodes the message pair $(m_0,s_0)$ encoded into the variable $U$ via non-unique decoding using $v_2(m_0,s_0,s_2,k_1)$. Specifically, the decoder searches for both variables $(s_2, k_1)$, with no error occurring if it makes an incorrect estimate for either. It is worth noting that receiver $B_1$ also employs a non-unique, aiming to decode the message $s_2$ via non-unique decoding through $k_1$. Receiver $B_2$ successfully decodes the message $m_{10}$ since it is encoded toghether with the common message into variable $U$. However, the information contained in message $m_{10}$ is not intended to $B_1$ and it may be discarded after being decoded. This differs from its decoding of message pair $(s_{2},k_1)$, where successful decoding may or may not occur -- either case is acceptable. The decoding measurement is based on the following projector:
\begin{align*}
   Q^{V_2B_2}= \{\mathcal{E}^{B_2}\left(\rho^{V_2B_2}\right)\ge 2^{R_0+S_0+S_2+r_1} \rho^{V_2}\otimes \rho^{B_2}\}.
\end{align*}
We further define the projector
\begin{align*}
Q_{v_2}^{B_2}\coloneqq\bra{v_2} Q^{B_2}\ket{v_2}.
\end{align*}
The above projector is not related to any code. However, as the decoder of $B_2$, we can build the following POVM for the constructed $\mathcal{C}_1$: 
\begin{align*}
\Lambda_{m_0,s_0}^{B_2}(\mathcal{C}_1)\coloneqq
\left(\sum_{m_0',s_0'} Q^{B_2}_{v_2(m_0',s_0')}\right)^{-\frac{1}{2}} Q^{B_2}_{v_2(m_0,s_0)}
\left(\sum_{m_0',s_0'} Q^{B_2}_{v_2(m_0',s_0')}\right)^{-\frac{1}{2}},
\end{align*} 
where 
\begin{align*}
Q^{B_2}_{v_2(m_0,s_0)}=\sum_{s_2'=1}^{2^{S_2}}\sum_{k_1'=1}^{2^{r_1}}Q^{B_2}_{v_2(m_0,s_0,s_2',k_1')}.
\end{align*}

The average error probability of the code $\mathcal{C}_1$ for receiver $B_2$ using the aforementioned POVM is bounded as follows. Recall that we use the variable $\hat{M}_{tb_i}$ to denote the estimate made by receiver $B_i$ about that message $t$. We obtain:

\begin{align*}
P_{e,2}(\mathcal{C}_1)&=\text{Pr}\{\hat{M}_{0b_2}\neq M_0\}\\
&\stackrel{\text{(a)}}{\le}
\text{Pr}\{(\hat{M}_{0b_2},\hat{M_{10b_2}})\neq(M_0,M_{10})\}\\
&\coloneqq
 \frac{1}{2^{R_0+S_0}}\sum_{m_0,s_0} \tr (I-\Lambda_{m_0,s_0}^{B_2}(\mathcal{C}_1))
\rho^{B_2}_{v_2(m_0,s_0,s_2,k_1)}\\
&\stackrel{\text{(b)}}{\le}
\frac{1}{2^{R_0+S_0}}
 \sum_{m_0,s_0} \tr \Big(
 2(I- Q_{v_2(m_0,s_0)}^{B_2})+
 4\sum_{ (m_0',s_0')\neq (m_0,s_0)} Q_{v_2(m_0',s_0')}
\Big) \rho^{B_2}_{v_2(m_0,s_0,s_2,k_1)} \\
&\stackrel{\text{(c)}}{\le}
\frac{1}{2^{R_0+S_0}}
 \sum_{m_0,s_0} \tr \Big(
 2(I- \sum_{s_2'=1}^{2^{S_2}}\sum_{k_1'=1}^{2^{r_1}}Q^{B_2}_{v_2(m_0,s_0,s_2',k_1')})\\
 &\hspace{4cm}+
 4\sum_{ (m_0',s_0')\neq (m_0,s_0)} \sum_{s_2'=1}^{2^{S_2}}\sum_{k_1'=1}^{2^{r_1}}Q^{B_2}_{v_2(m_0,s_0,s_2',k_1')}
\Big) \rho^{B_2}_{v_2(m_0,s_0,s_2,k_1)}\\
&=
\frac{2}{2^{R_0+S_0}}
 \sum_{m_0,s_0} \tr 
 (I- Q^{B_2}_{v_2(m_0,s_0,s_2,k_1)}-\sum_{s_2'\neq s_2}\sum_{k_1'\neq k_1}Q^{B_2}_{v_2(m_0,s_0,s_2',k_1')})\rho^{B_2}_{v_2(m_0,s_0,s_2,k_1)}\\
 &\hspace{4cm}+
 \frac{4}{2^{R_0+S_0}}\sum_{m_0,s_0}\sum_{ (m_0',s_0')\neq (m_0,s_0)} \sum_{s_2'=1}^{2^{S_2}}\sum_{k_1'=1}^{2^{r_1}}Q^{B_2}_{v_2(m_0,s_0,s_2',k_1')} \rho^{B_2}_{v_2(m_0,s_0,s_2,k_1)}\\
 &\stackrel{\text{(d)}}{\le}
\frac{2}{2^{R_0+S_0}}
 \sum_{m_0,s_0} \tr 
 (I- Q^{B_2}_{v_2(m_0,s_0,s_2,k_1)})\rho^{B_2}_{v_2(m_0,s_0,s_2,k_1)}\\
 &\hspace{4cm}+
 \frac{4}{2^{R_0+S_0}}\sum_{m_0,s_0}\sum_{ (m_0',s_0')\neq (m_0,s_0)} \sum_{s_2'=1}^{2^{S_2}}\sum_{k_1'=1}^{2^{r_1}}Q^{B_2}_{v_2(m_0,s_0,s_2',k_1')} \rho^{B_2}_{v_2(m_0,s_0,s_2,k_1)},
\end{align*}
where (a) follows by including the message $M_{10}$ (of rate $S_0$) which is encoded with the common message $M_0$ and will be inevitably decoded along the common message, even though it might be discarded later. (b) comes from Hayashi-Nagaoka inequality Lemma \ref{hayashi-nagaoka}, (c) from the definition and (d) comes about by discarding all but those corresponding to $(s_2,k_1)$ in the summation.

Next, we find the expectation of the average error probability over the choice of the random code:

\begin{align*}
    \E_{\mathcal{C}_1}(P_{e,2}(\mathcal{C}_1))
 &\le
\frac{2}{2^{R_0+S_0}}
 \sum_{m_0,s_0} \E_{\mathcal{C}_1}\left(\tr 
 (I- Q^{B_2}_{v_2(m_0,s_0,s_2,k_1)})\rho^{B_2}_{v_2(m_0,s_0,s_2,k_1)}\right)\\
 &\hspace{3cm}+
 \frac{4}{2^{R_0+S_0}}\sum_{m_0,s_0}\sum_{ (m_0',s_0')\neq (m_0,s_0)} \sum_{s_2'=1}^{2^{S_2}}\sum_{k_1'=1}^{2^{r_1}}\E_{\mathcal{C}_1}\left(Q^{B_2}_{v_2(m_0,s_0,s_2',k_1')} \rho^{B_2}\right)\\
 &=2\tr(I-Q^{V_2B_2})\rho^{UV_2}
 +
 4\times (2^{R_0+S_0}-1)2^{S_2+r_1}\tr Q^{V_2B_2}(\rho^{V_2}\otimes\rho^{B_2})\\
 &\le 4\tr(I-Q^{V_2B_2})\rho^{UV_2}
 +
 4\times 2^{R_0+S_0+S_2+r_1}\tr Q^{V_2B_2}(\rho^{V_2}\otimes\rho^{B_2})\\
 &\stackrel{\text{(a)}}{\leq} 2^{\alpha(R_0+S_0+S_2+r_1)+2} 2^{-\alpha D_{1-\alpha}\left(\mathcal{E}^{B_2}(\rho^{V_2B_2}) \| \rho^{V_2}\otimes \rho^{B_2}\right)}  \\
 &\stackrel{\text{(b)}}{\leq} \mu 2^{\alpha(R_0+S_0+S_2+r_1)+2} 2^{-\alpha \widetilde{D}_{1-\alpha}\left(\rho^{V_2B_2} \| \rho^{V_2}\otimes \rho^{B_2}\right)}  \\
&\stackrel{\text{(c)}}{\leq} \mu 2^{\alpha(R_0+S_0+S_2+r_1)+2} 
2^{-\alpha \tilde{I}_{\alpha}^{\uparrow}(V_2;B_2)_{\rho^{V_2B_2}}},
\end{align*}
where $\mu$ is the number of the distinct eigenvalues of the operator $\rho^{B_2}$.
Here, where (a) follows from Lemma \ref{hp-testing}, (b) from Lemma \ref{petz-sandwich}, and (c) from the definition of the R\'enyi quantum mutual information quantities.

\bigskip

\textbf{\emph{Receiver $B_3$:}} Receiver $B_3$ decodes the information $(m_0,s_0)$ encoded into the variable $U$via non-unique decoding using $V_2(m_0,s_0,s_3,k_2)$. Note that similar to receiver $B_2$, this receiver also successfully decodes $M_{10}$ as it is tied to the common message $M_0$. However, $B_3$ may discard this information after successfully decoding it. Note also that his differs from its decoding of $M_{13}$, where successful decoding may or may not occur -- either case is acceptable. The construction of the POVM and the analysis of error probability follow the same approach as for receiver $B_2$. The expectation of the average error probability can be expressed as follows:

 \begin{align}
\E_{\mathcal{C}_1} \left(P_{e,3}(\mathcal{C}_1)\right)
\le \eta 2^{\alpha(R_0+S_0+S_3+r_2)+2} 
2^{-\alpha \tilde{I}_{\alpha}^{\uparrow}(V_3;B_3)_{\rho^{V_3B_3}}},
\end{align}
where $\eta$ is the number of the distinct eigenvalues of the operator $\rho^{B_3}$.

\subsection{Asymptotic analysis}
As mentioned before, our proof consists in creating a one-shot code and then determining the rate region through asymptotic analysis.
 The key observation is that the number of distinct eigenvalues will vanish because they are only polynomial and that R\'enyi relative entropy tends to von Nuemann relative entropy when its parameter tends to $1$.

\begin{proofof}[Theroem \ref{general-two-degraded.t}]
Our proof consists in creating a one-shot code and then determining the rate region through asymptotic analysis. We break down the process and go through the details in distinct subsections.

Concerning the error probability of the encoder, leveraging Eq. \ref{max-asymp}, we can conclude that if the condition $r_1 + r_2 \geq I(V_2;V_3\vert U)$ holds true, then a pair of codewords in each product bin can be identified, such that they mimic a joint distribution. This is further supported by the asymptotic mutual covering lemma \cite{1056302}, \cite[Lemma 8.1]{elgamal-kim}. As for the error probability of the decoders, 
the upper bounds on the expectation of the average error probability imply the existence of at least one good code $\mathcal{C}_1$ that satisfies these bounds. Consequently, we can assess the upper bounds in the following manner:

\begin{align*}
    P_{e,1}^{(n)}(\mathcal{C}_n)&\leq 
    4 (n+1)^{\alpha (d_{B_1}-1)} 2^{\alpha n(R_0 +S_0+ S_{2} + S_3 + S_{1})} 2^{-\alpha\tilde{I}_{1-\alpha}^{\uparrow}(U^n,V_2^n,V^n_3,X^n;B_1^n)_{(\rho^{UV_2V_3XB_1})^{\otimes n}}}\\
    &\hspace{0.1cm}+4  (n+1)^{\alpha d_{U}(d_{B_1}+2)(d_{B_1}-1)/2} 2^{\alpha n(S_{2}+S_3+S_{1})} 2^{-\alpha \tilde{I}_{1-\alpha}^{\downarrow}\left(V_2^n,V_3^n,X^n;B_1^n|U^n\right)_{(\rho^{UV_2V_3XB_1})^{\otimes n}\vert(\rho^{UV_2V_3X})^{\otimes n}}}\\
    &\hspace{0.1cm}+4  (n+1)^{\alpha d_{V_2}d_U(d_{B_1}+2)(d_{B_1}-1)/2} 2^{\alpha n(S_3+S_1)} 2^{-\alpha \tilde{I}_{1-\alpha}^{\downarrow}\left(U^n,V_3^n,X^n;B_1^n|V_2^n\right)_{(\rho^{UV_2V_3XB_1})^{\otimes n}\vert(\rho^{UV_2V_3X})^{\otimes n}}}\\
    &\hspace{0.1cm}+4  (n+1)^{\alpha d_{V_3}d_U(d_{B_1}+2)(d_{B_1}-1)/2} 2^{\alpha n(S_2+S_1)} 2^{-\alpha \tilde{I}_{1-\alpha}^{\downarrow}\left(U^n,V_2^n,X^n;B_1^n|V_3^n\right)_{(\rho^{UV_3V_2XB_1})^{\otimes n}\vert(\rho^{UV_3V_2X})^{\otimes n}}}\\
    &\hspace{0.1cm}+4  (n+1)^{\alpha d_{V_2}d_{V_3}d_U(d_{B_1}+2)(d_{B_1}-1)/2} 2^{\alpha nS_1} 2^{-\alpha \tilde{I}_{1-\alpha}^{\downarrow}\left(U^n,X^n;B_1^n|V_2^n,V_3^n\right)_{(\rho^{UV_3V_2XB_1})^{\otimes n}\vert(\rho^{UV_3V_2X})^{\otimes n}}}\\
  P_{e,2}^{(n)}(\mathcal{C}_n)&\leq 
4 (n+1)^{\alpha (d_{B_2}-1)} 2^{\alpha n(R_0+S_0+S_2+r_1)} 2^{-\alpha\tilde{I}_{1-\alpha}^{\uparrow}(V_2^n;B_2^n)_{(\rho^{V_2B_2})^{\otimes n}}},\\
  P_{e,3}^{(n)}(\mathcal{C}_n)&\leq 
4 (n+1)^{\alpha (d_{B_3}-1)} 2^{\alpha n(R_0+S_0+S_3+r_2)} 2^{-\alpha\tilde{I}_{1-\alpha}^{\uparrow}(V_3^n;B_3^n)_{(\rho^{V_3B_3})^{\otimes n}}},
\end{align*}
where we have employed the polynomial bounds on the number of distinct eigenvalues from Proposition \ref{pinching-asymptotic}. We further obtain (notice that the number of distinct eigenvalues vanish because they are only polynomial and that R\'enyi relative entropy tends to von Nuemann relative entropy when its parameter tends to $1$):
\begin{align*}
  \lim_{n\to\infty }-\frac{1}{n}\log P_{e,1}^{(n)}(\mathcal{C}_n)\ge
  \min\bigg(&\alpha \big(\tilde{I}_{1-\alpha}^{\uparrow}(U,V_2,V_3,X;B_1)_{\rho^{UV_2V_3XB_1}}-(R_0 +S_0+ S_{2} + S_3 + S_{1})\big),\\
  &\alpha \big(\tilde{I}_{1-\alpha}^{\downarrow}\left(V_2,V_3,X;B_1|U\right)_{\rho^{UV_2V_3XB_1}|\rho^{UV_2V_3X}}-(S_{2}+S_3+S_{1})\big),\\
  &\alpha \big(\tilde{I}_{1-\alpha}^{\downarrow}\left(U,V_3,X;B_1|V_2\right)_{\rho^{UV_2V_3XB_1}|\rho^{UV_2V_3X}}-(S_3+S_1)\big)\\
  &\alpha \big(\tilde{I}_{1-\alpha}^{\downarrow}\left(U,V_2,X;B_1|V_3\right)_{\rho^{UV_2V_3XB_1}|\rho^{UV_2V_3X}}-(S_2+S_1)\big)\\
  &\alpha \big(\tilde{I}_{1-\alpha}^{\downarrow}\left(U,X;B_1|V_2,V_3\right)_{\rho^{UV_2V_3XB_1}|\rho^{UV_2V_3X}}-S_1\big)
  \bigg),
\end{align*}
\begin{align*}
\lim_{n\to\infty }-\frac{1}{n}\log P_{e,2}^{(n)}(\mathcal{C}_n)&\ge
  \alpha \big(\tilde{I}_{1-\alpha}^{\downarrow}(V_2;B_2)_{\rho^{V_2B_2}}-(R_0+S_0+S_2+r_1)\big),\\
\lim_{n\to\infty }-\frac{1}{n}\log P_{e,3}^{(n)}(\mathcal{C}_n)&\ge
  \alpha \big(\tilde{I}_{1-\alpha}^{\downarrow}(V_3;B_3)_{\rho^{V_3B_3}}-(R_0+S_0+S_3+r_2)\big).
\end{align*}
It now follows that as $n\to\infty$ and $\alpha\to 0$, there exists a sequence of codes $\mathcal{C}_n$ such that $P_{e,1}^{(n)}(\mathcal{C}_n),P_{e,2}^{(n)}(\mathcal{C}_n)$ and $P_{e,3}^{(n)}(\mathcal{C}_n)$ vanish, if
\begin{align*}
r_1 + r_2 &\geq I(V_2;V_3\vert U)\\
R_0 + R_1 &\leq I(X;B_1)_{\rho},\\
    S_{3}+S_2+S_{1}&\leq I(X;B_1|U)_{\rho},\\
     S_{3}+S_1&\leq I(X;B_1|V_2)_{\rho},\\
     S_{2}+S_1&\leq I(X;B_1|V_3)_{\rho},\\
     S_1&\leq I(X;B_1\vert V_2,V_3)_{\rho}, \\
    R_0 + S_{0} + S_2 +r_1 &\leq I(V_2;B_2)_{\rho},
    \\
    R_0 + S_{0} + S_3 +r_2 &\leq I(V_3;B_3)_{\rho},
\end{align*}
where the mutual information quantities are calculated for the state in the statement of the Theorem \ref{general-two-degraded.t}. Substituting $R_1=S_0 + S_2+ S_3+S_1$ and using the Fourier–Motzkin elimination procedure, we remove $S_{1},S_{2}$, it follows from the union bound $P_{e}^{(n)}(\mathcal{C}_n)\le P_{e,1}^{(n)}(\mathcal{C}_n)+P_{e,2}^{(n)}(\mathcal{C}_n)+P_{e,3}^{(n)}(\mathcal{C}_n)$ that the probability of decoding error tends to zero as $n\to \infty$ if the inequalities in Theorem are satisfied. This completes the proof of Theorem \ref{general-two-degraded.t}.

\end{proofof}

\bigskip
\bigskip

\subsection{No-go region}

The following theorem establishes a weak converse for the setting of Theorem \ref{general-two-degraded.t}. Unlike the converse of the three-receiver multilevel quantum broadcast channel, the converse below does not match the achievability region in general, meaning that it is not tight even in the classical setting. 

\begin{theorem}\label{converse-2}
      Assuming a transmitter and three receivers have access to many independent uses of a three-receiver quantum broadcast channel $x\to\rho_x^{B_1B_2B_3}$ two-degraded message set, if the set of rate pairs $(R_0,R_1)$ becomes asymptotically achievable, then they are contained in the following region:
    \begin{align*}
        R_0 & \leq \min\{I(U;B_1),I(V_2;B_2)-I(V_2;B_1|U),\\
        &\hspace{1.35cm} I(V_3;B_3)-I(V_3;B_1|U)\},\\
        R_1 & \leq I(X;B_1|U),
    \end{align*}
    for some state 
     \begin{align*}
        \rho^{UXV_1V_2B_1B_2B_3}=\sum_{x}p(x)\ketbra{x}\otimes\rho_x^{UV_2V_3}\otimes\rho^{B_1B_2B_3}_x,
    \end{align*}
    such there systems $(X,V_3)$ are degradable with respect to $U$, and simultaneously systems $(X,V_2)$ are degradable with respect to $U$ (systems $(U,V_2,V_3)$ are generally quantum).
\end{theorem}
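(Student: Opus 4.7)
The plan is to adapt the classical Nair--El Gamal converse for three-receiver broadcast channels to the quantum setting, closely following the template of the proof of Theorem \ref{converse-1} and the two-receiver converses in \cite[Sec. II B-C]{q-yard}. The first step is to invoke the quantum Fano inequality at each receiver: asymptotic reliability forces
\begin{align*}
nR_0 &\leq I(M_0; B_j^n) + n\epsilon_n \quad \text{for } j=1,2,3, \\
nR_1 &\leq I(M_1; B_1^n | M_0) + n\epsilon_n,
\end{align*}
with $\epsilon_n\to 0$.

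The second step is to identify the appropriate quantum auxiliary systems at each channel use $i$:
\begin{align*}
U_i := (M_0, B_{1,i+1}^n), \quad V_{2,i} := (U_i, B_2^{i-1}), \quad V_{3,i} := (U_i, B_3^{i-1}),
\end{align*}
so that $U_i$ is literally a subsystem of both $V_{2,i}$ and $V_{3,i}$. The bounds $R_0 \leq I(U;B_1)$ and $R_1 \leq I(X;B_1|U)$ then follow from the chain rule applied to $I(M_0;B_1^n)$ and $I(M_1;B_1^n|M_0)$, together with data processing through the channel input $X_i$. For the two $V_j$-bounds, the key move is the Csisz\'ar sum identity
\begin{align*}
\sum_{i=1}^n I(B_2^{i-1}; B_{1,i} | M_0, B_{1,i+1}^n) = \sum_{i=1}^n I(B_{1,i+1}^n; B_{2,i} | M_0, B_2^{i-1}),
\end{align*}
obtained by rearranging quantum conditional entropies exactly as in the classical case, since all conditioning is on well-defined tensor subsystems of a common state. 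Combining this with chain-rule expansions of $I(V_{2,i}; B_{2,i})$ and non-negativity of $I(B_2^{i-1}; B_{2,i})$ yields
\begin{align*}
\sum_i \bigl[ I(V_{2,i}; B_{2,i}) - I(V_{2,i}; B_{1,i}|U_i) \bigr] \geq I(M_0; B_2^n) \geq nR_0 - n\epsilon_n,
\end{align*}
and analogously for $V_3$ and $B_3$.

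The third step is to single-letterize by introducing a time-sharing variable $Q$ uniform on $[1:n]$ and setting $U := (U_Q, Q)$, $V_j := (V_{j,Q}, Q)$, $X := X_Q$, $B_j := B_{j,Q}$. The required tensor-product structure $\rho^{UV_2V_3XB_1B_2B_3} = \sum_x p(x) \ketbra{x}^X \otimes \rho_x^{UV_2V_3} \otimes \rho_x^{B_1B_2B_3}$ of the single-letter state follows from memorylessness: given $X_i$, the output triple $(B_{1,i}, B_{2,i}, B_{3,i})$ is independent of the messages and of all other channel uses, hence independent of the auxiliaries $(U_i,V_{2,i},V_{3,i})$. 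The degradability conditions $(X, V_j) \to U$ are realized by partial trace of $B_j^{i-1}$, since $U$ is a subsystem of $V_j$ by construction. The main obstacle is careful bookkeeping for the quantum Csisz\'ar identity when conditioning simultaneously on past and future quantum outputs across different receivers, and verifying that the rearrangements used to rewrite $I(M_0;B_2^n)$ survive in the quantum case. This is also the source of the standard gap between the converse and the achievability region of Theorem \ref{general-two-degraded.t}: the quantum auxiliaries $U, V_2, V_3$ generally fail to commute, so tightness is recovered only in the essentially classical case alluded to in the remark following Theorem \ref{converse-1}.
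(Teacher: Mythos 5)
Your proposal is correct and follows essentially the same route as the paper: Fano's inequality, chain-rule expansion, the Csisz\'ar sum identity carried over verbatim to quantum subsystems, and single-letterization with quantum auxiliaries. The only difference is cosmetic — you take $U_i=(M_0,B_{1,i+1}^n)$ and $V_{j,i}=(U_i,B_j^{i-1})$ whereas the paper uses the mirror-image choice $U_i=(M_0,B_1^{i-1})$, $V_{j,i}=(U_i,B_{j,i+1}^n)$; the two are interchangeable under the Csisz\'ar identity and yield the identical region.
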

\begin{proof}
   The proof follows along the same lines as its classical counterpart, with careful attention to quantum systems. Here, we offer only a high-level outline of the process.
We identify the auxiliary systems as follows: $U_i=(M_0,B_1^{i-1}),V_{2i}=(U_i,B_{2\,i+1}^{n})$, and $V_{3i}=(U_i,B_{3\,i+1}^{n})$. By standard arguments, the following two inequalities immediately follow $R_0\le I(U;B_1)$ and $R_1\leq I(X;B_1\vert U)$ (see \cite[Sec. 5.4.1]{elgamal-kim}) for these arguments). The other two inequalities are also follows from the classical counterpart, but we verify each step to ensure their applicability in the quantum setting.
By Fano's inequality, 
\begin{align*}
    S(M_0\vert B_2^n)\le nR_0 P_{e,2}^{(n)}+1\le n\varepsilon_n,
\end{align*}
where $\lim_{n\to\infty}\varepsilon_n=0$. Therefore,
\begin{align*}
    nR_0&\le I(M_0;B_2^n) +n\varepsilon_n \\
    &\leq  \sum_i I(M_0;B_{2i}\vert B_{2\, i+1}^{\,\,n}) + n\varepsilon_n\\
    &\le \sum_i I(M_0,B_{2\, i+1}^{\,\,n}, B_1^{i-1};B_{2i}) - I(B_1^{i-1},B_{2i}\vert M_0,B_{2\, i+1}^{\,\,n}) + n\varepsilon_n\\
&\stackrel{\text{(a)}}{=} \sum_i  I(M_0,B_{2\, i+1}^{\,\,n}, B_1^{i-1};B_{2i}) - I(B_{2\, i+1}^{\,\,n};B_{1\,i}\vert M_0,B_1^{i-1}) + n\varepsilon_n\\
&= \sum_i I(V_{2i};Y_{2i}) - I(V_{2i};Y_{1i}\vert U_i),
\end{align*}
where (a) follows from Csisz\'ar sum identity, replacing classical variables with quantum systems. The other inequality is straightforward.
\end{proof}
\begin{remark}
    Note that the aforementioned no-go region seems to be very different from the achievability region in Theorem \ref{general-two-degraded.t}. However, by summing these inequalities in an appropriate manner, we can observe a set of inequalities whose structure resembles those of the achievable region, although they are not equal.
\end{remark}
\begin{remark}
    The aforementioned outer bound reduces to the outer bound corresponding to the multilevel channel, Theorem \ref{converse-1}. To see this, recall that the defining property of a three-receiver multilevel quantum broadcast channel is that system $B_2$ is degradable with respect to system $B_1$. Then from quantum data processing \cite{Hayashi2017-cv,Wilde_2013}, we observe that
    \begin{align*}
        R_0&\leq I(V_2;B_2)-I(V_2;B_1\vert U)\\
         &\le I(V_2;B_2)-I(V_2;B_2\vert U)= I(U;B_2).     
    \end{align*}
Therefore, it is easily seen that the following region contains the converse region give above in Theorem \ref{converse-2}:
\begin{align*}
     R_0 & \leq \min\{I(U;B_2),I(V_3;B_3)\},\\
        R_1 & \leq I(X;B_1|U),\\
        R_0 +R_1 &\leq I(V_3;B_3)+ I(X;B_1\vert V_3).
\end{align*}
However, the achievability of this region was shown in Theorem \ref{two-degraded-multilevel} ; hence, the outer bound in Theorem \ref{converse-2} is tight in this sense.
\end{remark}

\bigskip

\section{General three-receiver quantum broadcast channel with three-degraded message set}
\label{general-three-degraded}
In this section we study the most general setting: The general three-receiver quantum broadcast channel $x\to\rho^{B_1B_2B_3}_x$ with three-degraded message sets. In this setup, a common message $M_0$ is intended for all three receivers, while $M_1$ is targeted for receivers $B_1$ and $B_2$, and $M_2$ is meant only for $B_1$. This configuration corresponds to Fig. \ref{three-degraded-bc}. 
We will present our results in the asymptotic setting, considering the scenario where the channel can be utilized $n$ times, with $n\to\infty$. For this reason, we don't define a one-shot explicitly. Instead, we define an asymptotic code for the three-receiver quantum broadcast channel with three-degraded message sets.

\begin{figure}[t]
\includegraphics[width=15cm]{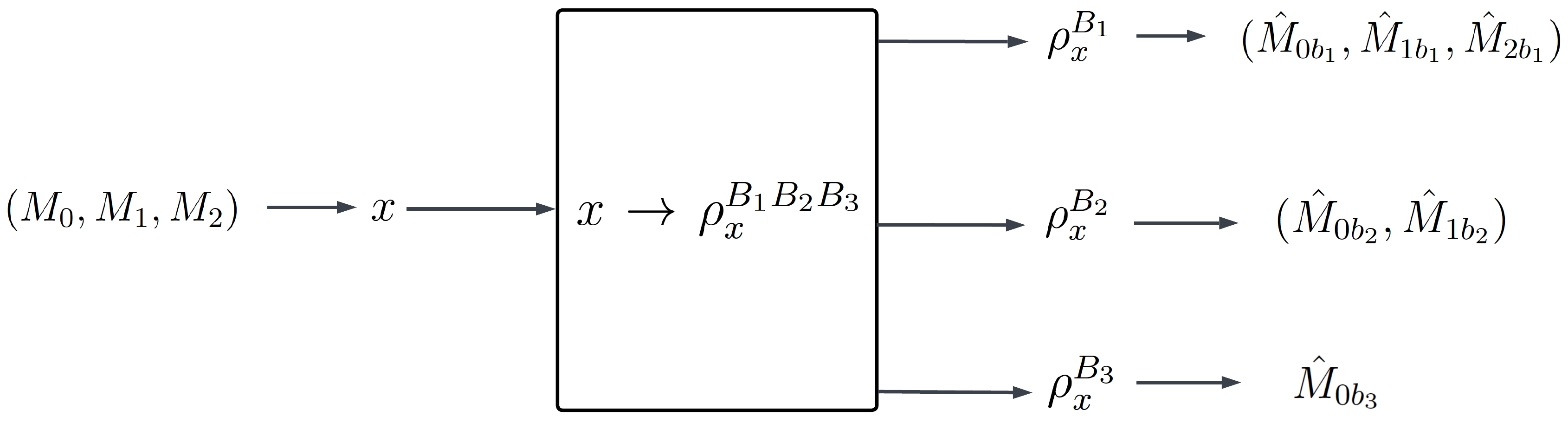}
\centering
\caption{General three-receiver quantum broadcast channel with three-degraded message set. There are three independent messages $(M_0,M_1,M_2)$ such that $M_0$ is a common message intended for all receivers, $M_1$ is intended for receivers $B_1$ and $B_2$ and $M_2$ is intended solely for receiver $B_1$. $\hat{M}_{jb_i}$ represents the estimate made by receiver $B_i$ for the message $j=0,1,2$.}
\label{three-degraded-bc}
\end{figure}

\begin{definition}\label{code-def}
A $(2^{nR_0},2^{nR_1},2^{nR_2},n,P^{(n)}_e)$ three-degraded message set code $\mathcal{C}_n$ for the three-receiver quantum broadcast channel $x\to\rho_x^{B_1B_2B_3}$ consists of the following components:
\begin{itemize}
    \item A triple of messages $(M_0,M_1,M_2)$ uniformly distributed over $[1:2^{nR_0}]\times[1:2^{nR_1}]\times[1:2^{nR_2}]$, where $R_0, R_1$ and $R_2$ represent the rates of the messages $M_0,M_1$ and $M_2$, respectively.
    \item An encoder that assigns a codeword $x^n(m_0,m_1,m_2)$ to each message triple $(m_0,m_1,m_2)\in[1:2^{nR_0}]\times[1:2^{nR_1}]\times[1:2^{nR_2}]$,
    \item Three POVMs, $\Lambda_{m_0m_1m_2}^{B_1^{n}}$ on $B_1^{\otimes n}$, $\Lambda_{m_0m_1}^{B_2^{n}}$ on $B_2^{\otimes n}$ and $\Lambda_{m_0^{n}}^{B_3}$ on $B_3^{\otimes n}$ (indicating which messages are intended for decoding by each receiver) which satisfy
    \begin{align}
\tr{\rho_{x^n(m_0,m_1,m_2)}^{B_1^{n}B_2^{n}B_3^{n}}\left(\Lambda_{m_0m_1m_2}^{B_1^{n}}\otimes\Lambda_{m_0m_1}^{B_2^{n}}\otimes\Lambda_{m_0}^{B_3^{n}}\right)}\geq 1-P^{(n)}_e(\mathcal{C}_n),
    \end{align}
    for every $(m_0,m_1,m_2)\in[1:2^{nR_0}]\times[1:2^{nR_1}]\times[1:2^{nR_2}]$. $P^{(n)}_e(\mathcal{C}_n)$ is the average probability of error defined as follows:
    \begin{align*}
        P^{(n)}_e(\mathcal{C}_n)= \text{Pr}\{\hat{M}_{0b_1},\hat{M}_{0b_2},\hat{M}_{0b_3}\neq M_0 \,\text{or}\,\hat{M}_{1b_1},\hat{M}_{1b_2}\neq M_1\,\text{or}\, \hat{M}_{2b_1}\neq M_2\},
    \end{align*}
 where $\hat{M}_{jb_i}$ represents the estimate made by receiver $B_i$ for the message $j$.
\end{itemize}
A rate triple $(R_0,R_1,R_2)$ is said to be achievable if there exists a sequence of $(2^{nR_0},2^{nR_1},2^{nR_2},n,P^{(n)}_e)$ three-degraded message set codes $\mathcal{C}_n$ such that $P^{(n)}_e(\mathcal{C}_n)\to 0$ as $n\rightarrow \infty$. The capacity region is the closure of the set of all such achievable rate triples. 
\end{definition}

The construction of one-shot code and the analysis of error probability follow a similar pattern to the preceding scenarios. Therefore, we only outline a high-level overview of the process.

\begin{theorem}\label{general-three-degraded.t}
    We assume a transmitter and three receivers have access to many independent uses of a three-receiver quantum broadcast channel $x\to\rho_x^{B_1B_2B_3}$. The set of rate triples $(R_0,R_1,R_2)$ becomes asymptotically achievable if
\begin{align*}
    R_0 &\le I(V_3;B_3),\\
    R_0 + R_1 & \leq \min\{I(V_2;B_2),I(V_2;B_2|U)+I(V_3;B_3)-I(V_2;V_3|U)\},\\
    2R_0+R_1 & \leq I(V_2;B_2)+ I(V_3;B_3)-I(V_2;V_3|U),\\
    R_0+R_1+R_2 &\leq \min\{I(X;B_1),I(V_2;B_2)+I(X;B_1|V_2),\\
    &I(V_3;B_3)+I(X;B_1|V_3),
    I(V_2;B_2|U)+I(V_3;B_3)+I(X;B_1|V_2,V_3)-I(V_2,V_3|U)\},\\
    2R_0+R_1+R_2 &\leq I(V_2;B_2) + I(V_3;B_3) + I(X;B_1|V_2,V_3)-I(V_2;V_3|U),\\
    2R_0+2R_1+R_2 &\leq I(V_2;B_2) + I(V_3;B_3) + I(X;B_1|V_3)-I(V_2;V_3|U),\\
     2R_0+2R_1+2R_2 &\leq \min\{I(V_2;B_2) + I(V_3;B_3) + I(X;B_1|V_2)+ I(X;B_1|V_3)-I(V_2;V_3|U),\\
     &I(V_2;B_2|U)+I(V_3;B_3)+I(X;B_1|U)+I(X;B_1|V_2,V_3)-I(V_2;V_3|U)\},
\end{align*}
for some probability distribution
\begin{align*}
    p(u,v_2,v_3,x) =& p(u)(v_2|u)p(x,v_3|v_2)\\
    =& p(u)(v_3|u)p(x,v_2|v_3),
\end{align*}
i.e. the Markov chains $U-V_2 - (V_3,X)$ and $U-V_3 - (V_2,X)$ are simultanoeusly satisfied,
 resulting in the cq-state $\rho^{UXV_2V_3B_1B_2B_3}$.
\end{theorem}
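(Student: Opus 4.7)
The plan is to combine the ingredients already developed in Sections \ref{marton-code}, \ref{multilevel-two-degraded}, and \ref{general-two-degraded}: Marton-style mutual covering between $V_2$ and $V_3$ given $U$, superposition of $X$ on top of $(V_2,V_3)$, rate splitting of the private messages, and non-unique decoding via nested pinching maps. Receiver $B_1$ will decode all four layers $(U,V_2,V_3,X)$ uniquely using a pretty-good measurement whose projectors come from nested pinchings of $\rho^{UV_2V_3XB_1}$; receiver $B_2$ decodes $(M_0,M_1)$ through $V_2$ while the inside-bin label remains non-unique; and receiver $B_3$ decodes $M_0$ through $V_3$ non-uniquely in both the bin index and the inside-bin label. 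The genuinely new feature compared to Theorem \ref{general-two-degraded.t} is that $B_2$ now recovers a private message $M_1$ in addition to the common message, so its POVM must distinguish the bin index carrying $M_1$ while staying non-unique inside that bin.

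First I would split $M_1$ into sub-messages $(M_{10},M_{12})$ of rates $(S_{10},S_{12})$, with $M_{10}$ riding on $U$ together with $M_0$ and $M_{12}$ indexing the $V_2$-bins at the $B_2$-decodable layer. Similarly I would split $M_2$ so that portions $(T_{22},T_{23})$ index the Marton bins on the $V_2$ and $V_3$ sides and the remainder $S_1$ rides on $X$. Then generate $u(m_0,s_{10})$ i.i.d.\ from $p(u)$; for each $u$ generate $2^{S_{12}+T_{22}+r_1}$ codewords $v_2(m_0,s_{10},s_{12},t_{22},k_1)$ conditionally i.i.d.\ from $p(v_2|u)$ and $2^{T_{23}+r_2}$ codewords $v_3(m_0,s_{10},t_{23},k_2)$ from $p(v_3|u)$, partitioned into product bins of respective sizes $2^{r_1}$ and $2^{r_2}$. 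Imposing $r_1+r_2\ge I_{\max}^{\varepsilon}(V_2;V_3|U)+2\log(1/\varepsilon)$ and invoking the mutual covering lemma as in Section \ref{general-two-degraded} yields a jointly typical pair in each product bin; for each admissible pair I generate $x(\cdot,s_1)$ conditionally i.i.d.\ from $p(x|v_2,v_3)$ to carry the remaining bits of $M_2$.

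For $B_1$'s decoder I mirror the construction underlying \eqref{POVM-general-two}: the nested pinchings $\mathcal{E}^{B_1},\mathcal{E}^{UB_1}_1,\mathcal{E}^{UV_2B_1}_2,\mathcal{E}^{UV_3B_1}_3,\mathcal{E}^{UV_2V_3B_1}_4$ produce five projectors $\Theta_0,\ldots,\Theta_4$ assembled into a pretty-good measurement; Hayashi--Nagaoka (Lemma \ref{hayashi-nagaoka}) followed by the operator union bound splits the error into ten trace terms, which collapse via Lemmas \ref{hp-testing} and \ref{petz-sandwich} into five R\'enyi mutual-information constraints on the composite rates. For $B_2$ I use a projector $Q^{V_2B_2}\coloneqq\{\mathcal{E}^{B_2}(\rho^{V_2B_2})\ge 2^{R_0+S_{10}+S_{12}+T_{22}+r_1}\rho^{V_2}\otimes \rho^{B_2}\}$, but the non-unique sum in the analogue of \eqref{non-unique-povm-2} runs only over $(t_{22},k_1)$ and \emph{not} over $s_{12}$, so the bin index carrying $M_1$ must be recovered exactly; this forces an additional uniqueness constraint that was absent at $B_2$ in Theorem \ref{general-two-degraded.t}. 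For $B_3$ I reuse the Section \ref{general-two-degraded} construction verbatim, collapsing both $(t_{23},k_2)$.

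Taking expectations, applying Lemmas \ref{hp-testing} and \ref{petz-sandwich}, and invoking the polynomial eigenvalue bounds of Proposition \ref{pinching-asymptotic} before letting $n\to\infty$ and $\alpha\to 0$ produces single-letter conditions of the form $R_0+S_{10}+S_{12}+T_{22}+r_1\le I(V_2;B_2)$ from the outermost layer at $B_2$, $R_0+S_{10}+T_{23}+r_2\le I(V_3;B_3)$, $r_1+r_2\ge I(V_2;V_3|U)$, and the five layered bounds from $B_1$ on $S_1$, $S_1+T_{22}$, $S_1+T_{23}$, $S_1+T_{22}+T_{23}+S_{12}$, and $R_0+S_{10}+S_{12}+T_{22}+T_{23}+S_1$. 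The main obstacle, and the step I expect to be most delicate, is the Fourier--Motzkin elimination of $(S_{10},S_{12},T_{22},T_{23},S_1,r_1,r_2)$ to recover the full list of nine inequalities in the statement: the hybrid terms $I(V_2;B_2|U)+I(V_3;B_3)-I(V_2;V_3|U)$ and the doubly loaded combinations such as $2R_0+2R_1+2R_2$ arise only after one carefully pairs the $B_1$ constraints against the $B_2$ and $B_3$ constraints so that the covering budget $r_1+r_2$ is eliminated against $I(V_2;V_3|U)$, and the split among $(S_{10},S_{12},T_{22},T_{23})$ has to be chosen so that $M_1$ can pass simultaneously through the $U$-layer and through the $V_2$-bin layer without obliging $B_3$ to decode either.
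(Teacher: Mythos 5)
Your overall architecture --- Marton covering between $V_2$ and $V_3$ conditioned on $U$, superposition of $X$ on top of the chosen pair, nested pinching projectors assembled into pretty-good measurements, Hayashi--Nagaoka (Lemma \ref{hayashi-nagaoka}) followed by Lemmas \ref{hp-testing} and \ref{petz-sandwich}, and the polynomial eigenvalue counts of Proposition \ref{pinching-asymptotic} for the asymptotics --- is exactly the paper's, and your treatment of receivers $B_1$ and $B_3$ matches the paper's construction. The genuine gap is at receiver $B_2$, and it is not the Fourier--Motzkin step you flag as delicate.

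You correctly observe that $B_2$ must now recover the bin index carrying $M_1$ uniquely, but the decoder you write down uses only the single unconditional projector $Q^{V_2B_2}=\{\mathcal{E}^{B_2}(\rho^{V_2B_2})\ge 2^{R_0+S_{10}+S_{12}+T_{22}+r_1}\rho^{V_2}\otimes\rho^{B_2}\}$ and produces only one rate constraint from $B_2$. That projector can only control the error events in which the $U$-layer index $(m_0,s_{10})$ is wrong, because only then are the competing $v_2$-codewords generated from an independent $u'$, so that the codebook expectation of $\tr Q_{v_2'}\rho^{B_2}_{v_2}$ collapses to $\tr Q^{V_2B_2}(\rho^{V_2}\otimes\rho^{B_2})$. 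The new error event created by the three-degraded setting --- correct $(m_0,s_{10})$ but wrong $s_{12}$ --- involves competing codewords $v_2'$ drawn from $p(v_2|u)$ with the \emph{same} $u$ as the transmitted one; averaging over the codebook then yields $\tr Q\,\rho^{V_2-U-B_2}$, which the unconditional projector does not control. The paper handles this with a two-layer decoder at $B_2$: a second, nested projector $\Phi_1^{UV_2B_2}$ defined relative to $\mathcal{E}^{B_2}(\rho^{V_2-U-B_2})$ via the pinching $\mathcal{E}^{UB_2}_1$, which yields the additional constraint $R_{11}+S_2+r_1\le I(V_2;B_2|U)$ (in your notation $S_{12}+T_{22}+r_1\le I(V_2;B_2|U)$). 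This constraint is indispensable for the statement: paired with the $B_3$ constraint and $r_1+r_2\ge I(V_2;V_3|U)$, it is what generates the terms $I(V_2;B_2|U)+I(V_3;B_3)-I(V_2;V_3|U)$ appearing in the bounds on $R_0+R_1$, on $R_0+R_1+R_2$, and on $2R_0+2R_1+2R_2$. With your single projector the $B_2$ error probability is not shown to vanish for that event, and the elimination cannot reproduce the stated region.

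Two minor points. First, your second layered $B_1$ constraint should read $S_{12}+T_{22}+S_1\le I(X;B_1|V_3)$: the $M_1$-component $S_{12}$ of the $V_2$-bin index must appear wherever $T_{22}$ does, since a wrong $V_2$-bin means a wrong $(s_{12},t_{22})$. Second, you drop the paper's component $S_0$ of $M_2$ riding on $U$; this is harmless only because the corresponding direction of the region can be recovered a posteriori by the generic common-to-private rate-transfer argument discussed after Theorem \ref{Marton}, but it is worth saying so explicitly.
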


\bigskip

Our proof consists in creating a one-shot code and then determining the rate region through asymptotic analysis. We break down the process and go through the details in distinct subsections.

\subsection{One-shot code construction}
The asymtotic analysis of the code developped here recovers Theorem \ref{general-three-degraded.t}.
Consider a three-receiver  cq-broadcast channel $x\to\rho_x^{B_1B_2B_3}$, with corresponding marginal channels 
$x\to \rho^{B_1}_x$, $x\to \rho^{B_2}_x$, and $x\to \rho^{B_3}_x$. We define a three-degraded message set $\{M_0,M_1,M_2\}$, where
$M_0$ is to be sent to all three receivers, $M_1$ is to be sent to receivers $B_1$ and $B_2$, and $M_2$ is to be sent to $B_1$ only. We constrcut a one-shot code for this communication scenario where we use pinching technique to construct POVMs and bound the decoding errors and use Sen's mutual covering lemma for Marton coding \cite[Fact 2]{Sen2021}.  

\subsubsection{Rate splitting}
The message $M_1$ is split into two independent messages $(M_{10},M_{11})$ of corresponding rates $(R_{10},R_{11})$. Similarly, $M_2$ is split into four independent messages $\{M_{2i}\}_{i=0,1,2,3}$ with respective rates $\{S_{i}\}_{i=0,1,2,3}$, such that $R_1=S_0 + S_1 + S_2 + S_3$. We denote the realizations of these latter messages by the lowercase letter $s$, where each $s_i\in M_{2i}$ is uniformly distributed in $[1:2^{S_i}]$, for $i=0,1,2,3$. (However, we use lowercase $m$ for the realizations of the messages $M_0$ and $M_1$ with corresponding subscripts.) Using superposition and Marton coding, the message triple $(m_0,m_{10},s_{0})$ is represented by $U$, the pair $(m_{11},s_2)$ by $V_2$, the message $s_3$ by $V_3$, and message $s_1$ by $X$. Receiver $B_1$ find the message tuple $(m_0,m_{10},m_{11},s_0,s_1,s_2,s_3)$ by decoding for $X$, receiver $B_2$ find $(m_0,m_{10},m_{11})$ by decoding $(U,V_2)$ and $B_3$ find $m_0$ by non-unique decoding of $U$ from $V_3$ (so only receiver $B_3$ employs non-unique decoding). Note that the code resembles that of the scenario from the previous section, where the message $M_1$, intended for receivers $B_1$ and $B_2$, is split into two messages. One is encoded with variable $U$ as a common message, and the other is encoded into $V_2$ along with message $M_{22}$.

\hfill\\
\vspace{-0.4cm}

\subsubsection{Codebook generation}
Let $T_2\geq S_2$ and $T_3\geq S_3$ and fix a probability distribution satisfying the Markov chain $U-V_2-(X,V_3)$ and $U-V_3-(X,V_2)$, i.e. $p(u,v_2,v_3,x) = p(u)(v_2|u)p(x,v_3|v_2)
    = p(u)(v_3|u)p(x,v_2|v_3)$.
A code $\mathcal{C}_1$ is composed of tuples, each containing subsets of elements from $\mathcal{U}$, $\mathcal{V}_2,\mathcal{V}_3$, and $\mathcal{X}$. These tuples are generated as follows:
Randomly and independently generate $2^{R_0+R_{10}+S_0}$ elements 
$\{u(m_0,m_{10},s_0)\}$, $m_0 \in [1:2^{R_0}],m_{10} \in [1:2^{R_{10}}],s_0 \in [1:2^{S_0}]$ according to the pmf $P_U(u)$.
For each $u(m_0,m_{10},s_0)$, generate $(\text{a})$ $2^{R_{11}+S_2+r_1}$ elements
$\{v_2(m_0,m_{10},s_0,s_2,k_1)\}$, $_{(s_2,k_1)\in [1:2^{S_2}]\times[1:2^{r_1}]}$, randomly and conditionally independent from $P_{v_2|u}$, $(\text{b})$
$2^{S_3+r_2}$ elements
$\{v_3(m_0,m_{10},s_0,t_3)\}$, $_{(s_3,k_2)\in [1:2^{S_3}]\times[1:2^{r_2}]}$, randomly and conditionally independent from $P_{v_3|u}$.
Randomly partition $2^{R_{11}+S_2+r_1}$ elements $v_2(m_0,s_0,s_2,k_1)$ into $2^{R_{11}+S_2}$ equal size bins and $2^{S_3+r_2}$ elements $v_2(m_0,m_{10},s_0,s_3,k_2)$ into $2^{S_3}$ equal size bins.
To ensure that each product bin $((m_{11},s_2),s_3)$ contains a suitable pair $(v_2(m_0,m_{10},s_0,s_2,k_1),v_2(m_0,m_{10},s_0,s_3,k_2))$ for encoding with high probability, we require that
\begin{align}\label{covering-condition}
    r_1+r_2\geq I_{\text{max}}^{\varepsilon}(V_2;V_3|U)+2\log\frac{1}{\varepsilon}.
\end{align}
This follows from mutual covering lemma \cite[Fact 2]{Sen2021}.
Roughly speaking, this requirement means that there exists at least one pair $(v_2(m_0,m_{10},s_0,s_2,k_1),v_2(m_0,m_{10},s_0,s_3,k_2))$ in each product bin $((m_{11},s_2),s_3)$ whose joint distribution is close to $p(v_2,v_3|u)$ although they are generated from $p(v_2|u)p(v_3|u)$).
Finally, for each pair $(v_2(m_0,m_{10},s_0,s_2,k_1),v_3(m_0,m_{10},s_0,s_3,k_2))$ in each product bin $((m_{11},s_2),s_3)$ that satisfies the condition of mutual covering lemma, randomly and conditionally independent generate $2^{S_1}$ elements $x(m_0,m_{10},s_0,m_{11},s_2,s_3,s_1)$, $s_1 \in [1:2^{S_1}]$, each according to $P(x|v_2,v_3)$. This completes the codebook generation.

\subsubsection{Encoding} To send message triple $(m_0\in M_0,m_1\in M_1,m_2\in M_2)$, we express $m_1$ by the pair $(m_{10},m_{11})$ and $m_2$ by the quadruple $(s_0,s_1,s_2,s_3)$. The sender looks up a suitable pair $$(v_2(m_0,m_{10}, s_0,s_2,k_1),v_3(m_0,m_{10},s_0,s_3,k_2))$$ in the product bin $((m_{11},s_2),s_3)$ and eventually sends the codeword $x(m_0,m_{10},s_0,m_{11},s_2,s_3,s_1)$. 

\subsubsection{Decoding}
The failure of encoder contributes a constant additive term $f(\varepsilon)$ to the ultimate error probability. We go through the details of the decoding errors below assuming that the encoding was successful.. 
When the codeword $x(m_0,m_1,s_0,s_1,s_2,s_3)$, each receivers obtains its subsystem $\rho^{B_1B_2B_3}_{x(m_0,m_1,s_0,s_2,s_3,s_1)}$.
Subsequently, each receiver applies a POVM on its system to decode the intended messages. These POVMs are specifically designed to extract messages simultaneously. The high-level details are outlined below.

\medskip

\textbf{\emph{Receiver $B_1$:}} Let $(\hat{m}_0,\hat{m}_{10},\hat{m}_{11},\hat{s}_0,\hat{s}_2,\hat{s}_3,\hat{s}_1)$ denote the estimates made by the receiver $B_1$ for the messages. $B_1$ declares that $(\hat{m}_0,\hat{m}_{10},\hat{m}_{11},\hat{s}_0,\hat{s}_2,\hat{s}_3,\hat{s}_1)=(m_0,,m_{10},m_{11},s_0,s_1,s_2,s_3)$ is sent if it uniquely, but also up to $k_1$ and $k_2$, corresponds to the tuple
\begin{align*} 
\big(u(\hat{m}_0,\hat{s}_0,\hat{m}_{10}),v_2(\hat{m}_0,\hat{m}_{10},\hat{m}_{11},\hat{s}_0,\hat{s}_2,k_1),v_3(\hat{m}_0,\hat{m}_{10},\hat{s}_0,\hat{s}_3,k_2),x(\hat{m}_0,\hat{m}_{10},\hat{m}_{11},\hat{s}_0,\hat{s}_2,\hat{s}_3,\hat{s}_1)\big).
\end{align*}
Notice that $(m_{11},s_2)$ and $s_3$ are the product bin indices of $v_2(m_0,m_{10},s_0,s_2,k_1)$ and $v_3(m_0,m_{10},s_0,s_3,k_2)$, respectively. 
The five pinching maps as components of the decoding POVM for this receiver are defined similar to those in Sec. \ref{general-two-degraded-error} for receiver $B_1$. These maps are: $\mathcal{E}^{B_1}$ with respect to the spectral decomposition of $\rho^{B_1}$, $\mathcal{E}^{UB_1}_1$ with respect to the spectral decomposition of $\mathcal{E}^{B_1}(\rho^{XV_2V_3-U-B_1})$, $\mathcal{E}^{UV_2B_1}_2$ with respect to the spectral decomposition of $\mathcal{E}^{UB_1}_1(\rho_{UV_3X-V_2-B_1})$, $\mathcal{E}^{UV_3B_1}_{3}$ with respect to the spectral decomposition of $\mathcal{E}^{UB_1}_1(\rho_{UV_2X-V_3-B_1})$, and $\mathcal{E}^{UV_2V_3B_1}_4$ with respect to the spectral decomposition of $\mathcal{E}^{UB_1}_1(\rho_{UX-V_2V_3-B_1})$. 
We employ these pinching maps to define five projectors as follows:  
\begin{align*}
\Xi_{4}^{UV_2V_3XB_1}&\coloneqq\Big\{ \mathcal{E}^{UV_2V_3B_1}_4(\rho^{UV_2V_3XB_1}) \ge 2^{S_1}\mathcal{E}^{UB_1}_1(\rho^{UX- V_2V_3-B_1})\Big\},\\
\Xi_{3}^{UV_2V_3XB_1}&\coloneqq\Big\{ \mathcal{E}^{UV_3B_1}_3(\rho^{UV_2V_3XB_1}) \ge 2^{R_{11}+S_2+S_1}\mathcal{E}^{UB_1}_1(\rho^{UXV_2- V_3-B_1})\Big\},\\
\Xi_{2}^{UV_2V_3XB_1}&\coloneqq\Big\{ \mathcal{E}^{UV_2B_1}_2(\rho^{UV_2B_1}) \ge 2^{S_3+S_1}\mathcal{E}^{UB_1}(\rho^{UXV_3- V_2-B_1})\Big\},\\
\Xi_{1}^{UV_2V_3XB_1}&\coloneqq\Big\{\mathcal{E}^{UB_1}_1(\rho^{UV_2V_3X B_1}) \ge 2^{R_{11}+S_1+S_2+S_3}\mathcal{E}^{B_1}(\rho^{V_2 V_3 X- U-B_1})\Big\}, \\
\Xi_{0}^{UV_2V_3XB_1}&\coloneqq\Big\{ \mathcal{E}^{B_1}(\rho^{U V_2 V_3 XB_1}) \ge 2^{R_0+R_{1}+R_2}\rho^{UV_2V_3X}\otimes \rho^{B_1}\Big\}.
\end{align*}
The POVM is constructed in the same manner to that of the general three-receiver quantum broadcast channel with two-degraded message set, as depicted in Eq. \eqref{POVM-general-two}. The analysis of the average error probability also processes along the same lines; we use $\hat{M}_{ib_j}$ to denote the estimate made by receiver $B_j$ about the message $M_{i}$. We obtain:
\begin{align*}
\mathbb{E}_{\mathcal{C}_1}\big(P_{e,1}(\mathcal{C}_1)\big)&=\mathbb{E}_{\mathcal{C}_1}\big(\text{Pr}\{\hat{M}_{0b_1}\neq M_0,\hat{M}_{1b_1}\neq M_1,\hat{M}_{2b_1}\neq M_2|\mathcal{C}_1\}\big)\\
&\le \nu 2^{\alpha(R_0+R_1+R_2)+2-\alpha\tilde{I}_{\alpha}^{\uparrow}\left(U V_2 V_3 X; B_1\right)_{\rho^{U V_2 V_3 XB_1}}}\\
 &\hspace{1cm}+\nu_1 2^{\alpha(R_{11}+S_2+S_3+S_1)+2-\alpha\tilde{I}_{\alpha}^{\downarrow}\left(V_2 V_3 X; B_1|U\right)_{\rho^{U V_2 V_3 XB_1}|\rho^{U V_2 V_3 X}}}\\
 &\hspace{1cm}+\nu_2 2^{\alpha(S_3+S_1)+2-\alpha\tilde{I}_{\alpha}^{\downarrow}\left(U V_3 X; B_1|V_2\right)_{\rho^{U V_2 V_3 XB_1}|\rho^{U V_2 V_3 X}}}\\
 &\hspace{1cm}+\nu_3 2^{\alpha(R_{11}+S_2+S_1)+2-\alpha\tilde{I}_{\alpha}^{\downarrow}\left(U V_2 X; B_1|V_3\right)_{\rho^{U V_2 V_3 XB_1}|\rho^{U V_2 V_3 X}}}\\
 &\hspace{1cm}+\nu_4 2^{\alpha S_1+2-\alpha\tilde{I}_{\alpha}^{\downarrow}\left(U X; B_1|V_2,V_3\right)_{\rho^{U V_2 V_3 XB_1}|\rho^{U V_2 V_3 X}}}.
\end{align*}
where $\nu,\nu_1,\nu_2,\nu_3,\nu_4$ are the maximum number of distinct eigenvalues of the operators $\rho^{B_1},\{\mathcal{E}^{B_1}(\rho^{B_1}_{1|u})\}_u$, $\{\mathcal{E}^{B_1}_{1|u}(\rho^{B_1}_{v_2})\}_{u,v_2},\{\mathcal{E}^{B_1}_{1|u}(\rho^{B_1}_{v_3})\}_{u,v_3}$, and $\{\mathcal{E}^{B_1}_{1|u}(\rho^{B_1}_{v_2,v_3})\}_{u,v_2,v_3}$, respectively.

\bigskip

\textbf{\emph{Receiver $B_2$:}} Receiver $B_2$ extracts $M_0$ and $M_{10}$ by decoding $U$; it also inevitably decodes $M_{20}$ by decoding $U$, but may discard this information later. Unlike the case of two-degraded message set, here receiver $B_2$ does not use non-unique decoding because the variable $V_2$ includes the message $M_{11}$ that is intended for it. Similar to variable $U$, variable $V_2$ includes the message $M_{22}$ which is not intended for $B_2$, but will inevitably be decoded correctly. This information too can be discarded afterwards. Let $\mathcal{E}^{B_2}$ and $\mathcal{E}^{UB_2}_1$ be pinching maps with respect to the spectral decomposition of the operators $\rho^{B_2}$ and $\mathcal{E}^{B_2}(\rho^{V_2-U-B_2})$, respectively. We define the following projectors:
\begin{align*}
\Phi_1^{UV_2B_2} &\coloneqq \{\mathcal{E}^{UB_2}_{1}(\rho^{UV_2B_2})\ge2^{T_2}\mathcal{E}^{B_2}(\rho^{V_2-U-B_2})\}\\
    \Phi_0^{UV_2B_2} &\coloneqq
\{\mathcal{E}^{B_2}(\rho^{UV_2B_2})\ge2^{R_0+R_{10}+S_0+R_{11}+S_2+r_1}\rho^{UV_2}\otimes \rho^{B_2}\},
\end{align*}
and accordingly
\begin{align*}
   \Phi_{1\, u,v_2}^{B_2} &=\bra{u,v_2}\Phi_1^{UV_2B_2}\ket{u,v_2} \\
   \Phi_{0\, u,v_2}^{B_2} &=\bra{u,v_2}\Phi_0^{UV_2B_2}\ket{u,v_2},\\
   \Phi_{u,v_2}^{B_2} &=\Phi_{0\, u,v_2}^{B_2}\Phi_{1\, u,v_2}^{B_2}.
\end{align*}
The above projectors are not related to any code. However, as the decoder of $B_2$, we can build the following POVM for the constructed $\mathcal{C}_1$: 
\begin{align*}
&\Lambda_{m_0,m_{10},s_0,m_{11},s_2}^{B_2}(\mathcal{C}_1)\coloneqq\\
&\hspace{1.5cm}\left(\sum_{\substack{m_0',m_{10}',s_0',\\m_{11}',s_2'}} \Phi^{B_2}_{v_2(m_0',m_{10}',s_0',m_{11}',s_2')}\right)^{-\frac{1}{2}} \Phi^{B_2}_{v_2(m_0,m_{10},s_0,m_{11},s_2)}
\left(\sum_{\substack{m_0',m_{10}',s_0',\\m_{11}',s_2'}} \Phi^{B_2}_{v_2(m_0',m_{10}',s_0',m_{11}',s_2')}\right)^{-\frac{1}{2}},
\end{align*} 
where 
\begin{align}
\label{three-general.d}
\Phi^{B_2}_{v_2(m_0,m_{10},s_0,m_{11},s_2)}=\sum_{k_1'=1}^{2^{r_1}}\Phi^{B_2}_{v_2(m_0',m_{10}',s_0',m_{11}',s_2',k_1')}.
\end{align}
We can bound the average error probability as follows:
\begin{align*}
P_{e,2}(\mathcal{C}_1)&=\text{Pr}\{\hat{M}_{0b_2}\neq M_0, \hat{M}_{10b_2}\neq \hat{M}_{10},\hat{M}_{11b_2}\neq M_{11}\}\\
&\stackrel{\text{(a)}}{\le}
\text{Pr}\{\hat{M}_{0b_2}\neq M_0, \hat{M}_{10b_2}\neq \hat{M}_{10},\hat{M}_{11b_2}\neq M_{11},\hat{M}_{20b_2}\neq M_{20},\hat{M}_{22b_2}\neq M_{22}\}\\
&\coloneqq
 \frac{1}{2^{R_0+R_1+S_0+S_{2}}}\sum_{m_0,m_{10},s_0,m_{11},s_2} \tr (I-\Lambda_{m_0,m_{10},s_0,m_{11},s_2}^{B_2}(\mathcal{C}_1))
\rho^{B_2}_{v_2(m_0,m_{10},s_0,m_{11},s_2,k_1)}\\
&\stackrel{\text{(b)}}{\le}
\frac{1}{2^{R_0+R_1+S_0+S_{2}}}
 \sum_{m_0,m_{10},s_0,m_{11},s_2} \tr \Big(
 2(I- Q_{v_2(m_0,m_{10},s_0,m_{11},s_2)}^{B_2})
 \\&\hspace{4cm}+
 4\sum_{ \substack{(m_0',m_{10}',s_0',m_{11}',s_2')\neq\\ (m_0,m_{10},s_0,m_{11},s_2)}} Q_{v_2(m_0',m_{10}',s_0',m_{11}',s_2')}
\Big) \rho^{B_2}_{v_2(m_0,m_{10},s_0,m_{11},s_2)} \\
&\stackrel{\text{(c)}}{\le}
\frac{2}{2^{R_0+R_1+S_0+S_{2}}}
 \sum_{m_0,m_{10},s_0,m_{11},s_2} \tr
 (I- Q^{B_2}_{v_2(m_0,m_{10},s_0,m_{11},s_2,k_1)})\rho^{B_2}_{v_2(m_0,m_{10},s_0,m_{11},s_2,k_1)}\\
 &\hspace{2cm}+
\frac{4}{2^{R_0+R_1+S_0+S_{2}}}\sum_{ \substack{(m_{11}',s_2')\neq\\ (m_{11},s_2)}} \sum_{k_1'=1}^{2^{r_1}}Q^{B_2}_{v_2(m_0,m_{10},s_0,m_{11}',s_2',k_1')} \rho^{B_2}_{v_2(m_0,m_{10},s_0,m_{11},s_2,k_1)}\\
 &\hspace{2cm}+
 \frac{4}{2^{R_0+R_1+S_0+S_{2}}}\sum_{ \substack{(m_0',m_{10}',s_0')\neq\\ (m_0,m_{10},s_0),\\
 m_{11}',s_2'}} \sum_{k_1'=1}^{2^{r_1}}Q^{B_2}_{v_2(m_0',m_{10}',s_0',m_{11}',s_2',k_1')} \rho^{B_2}_{v_2(m_0,m_{10},s_0,m_{11},s_2,k_1)}\\
\end{align*}
where (a) follows by including the message $M_{10}$ (of rate $S_0$) which is encoded with the common message $M_0$ and will be inevitably decoded along the common message, even though it might be discarded later. (b) comes from Hayashi-Nagaoka inequality Lemma \ref{hayashi-nagaoka},
where in (c), for the first term, we use the definition in \eqref{three-general.d}
to discard all but one POVM element corresponding to $(m_0,m_{10},s_0,m_{11},s_2,k_1)$; and for the second term, we expanded the summand into two terms.
Next, we find the expectation of the average error probability over the choice of the random code:
\begin{align*}
    \mathbb{E}\{P_{e,2}(\mathcal{C}_1)\}& \leq \mu_12^{\alpha (R_{11}+S_2+r_1) +2 -\alpha \tilde{I}^{\downarrow}_{\alpha}(V_2;B_2|U)_{\rho^{UV_2B_2}|\rho^{UV_2}}} \\
    &+ \mu2^{\alpha(R_0+R_{10}+S_0+R_{11}+S_2+r_1)+2-\alpha\tilde{I}^{\uparrow}_{\alpha}(UV_2;B_2)_{\rho^{UV_2B_2}}},
\end{align*}
where $\mu$ and $\mu_1$ are the maximum number of distinct eigenvalues of the $\rho^{B_2}$ and $\mathcal{E}^{B_2}(\rho^{V_2-U-B_2})$, respectively.

\medskip

\textbf{\emph{Receiver $B_3$:}} Receiver $B_3$ is only interested in the common message $M_0$. However, it inevitably decodes $M_{10}$ and $M_{20}$ successfully. It decodes these messages non-uniquely via the information encoded into $V_3$. Let $\mathcal{E}^{B_3}$ be the pinching map with respect to the spectral decomposition of the operator $\rho^{B_3}$. We define the following projector:
\begin{align*}
    W^{B_3} \coloneqq \{\mathcal{E}^{UV_3B_3}\left(\rho^{UV_3B_3}\right)\ge 2^{R_0+R_{10}+S_0+S_3+r_2} \rho^{UV_3}\otimes \rho^{B_3}\},
\end{align*}
as well as $\bra{u,v_3}W^{UV_3B_3}\ket{u,v_3}=W^{B_3}_{u,v_3}$.
We now build the following POVM as the decoder of $B_3$: $\mathcal{C}_1$: 
\begin{align*}
\Lambda_{m_0,m_{10},s_0}^{B_3}(\mathcal{C}_1)\coloneqq
\left(\sum_{m_0',m_{10}',s_0'} Q^{B_3}_{v_3(m_0',m_{10}',s_0')}\right)^{-\frac{1}{2}} Q^{B_3}_{v_3(m_0,m_{10},s_0)}
\left(\sum_{m_0',m_{10}',s_0'} Q^{B_3}_{v_3(m_0',m_{10}',s_0')}\right)^{-\frac{1}{2}},
\end{align*} 
where 
\begin{align*}
Q^{B_2}_{v_3(m_0,m_{10},s_0)}=\sum_{s_3'=1}^{2^{S_2}}\sum_{k_2'=1}^{2^{r_2}}Q^{B_2}_{v_3(m_0,m_{10},s_0,s_3',k_2')}.
\end{align*}
We find an upper bound on the average error probability along the same lines as previous non-unique decoder (e.g. receivers $B_2$ and $B_3$ in the preceding section). Let $\hat{M}_{ib_3}$ denote the estimate made by receiver $B_3$ about the message $M_{i}$.We finally obtain:
 \begin{align*}
\E_{\mathcal{C}_1} \left(P_{e,3}(\mathcal{C}_1)\right) & =\mathbb{E}_{\mathcal{C}_1}\big(\text{Pr}\{\hat{M}_{0b_3}\neq M_0,\hat{M}_{10b_3}\neq M_{10},\hat{M}_{20b_3}\neq M_{20}|\mathcal{C}_1\}\big) \\
&\le \eta 2^{\alpha(R_0+R_{10}+S_0+S_3+r_2)+2} 
2^{-\alpha \tilde{I}_{\alpha}^{\uparrow}(UV_3;B_3)_{\rho^{UV_3B_3}}},
\end{align*}
where $\eta$ is the number of the distinct eigenvalues of the operator $\rho^{B_3}$.

\subsection{Asymptotic Analysis}

\begin{proofof}[Theorem \ref{general-three-degraded.t}]

Concerning the error probability of the encoder, leveraging Eq. \eqref{max-asymp}, we can conclude that if the condition $r_1 + r_2 \geq I(V_2;V_3\vert U)$ holds true, then a pair of codewords in each product bin can be identified, such that they mimic a joint distribution. This is further supported by the asymptotic mutual covering lemma \cite{1056302}, \cite[Lemma 8.1]{elgamal-kim}. As for the error probability of the decoders, 
the upper bounds on the expectation of the average error probability imply the existence of at least one good code $\mathcal{C}_1$ that satisfies these bounds. Consequently, we can assess the upper bounds in the following manner:
\begin{align*}
    P_{e,1}^{(n)}(\mathcal{C}_n)&\leq 
    4 (n+1)^{\alpha (d_{B_1}-1)} 2^{\alpha n(R_0 +R_1 + R_2)} 2^{-\alpha\tilde{I}_{1-\alpha}^{\uparrow}(U^n,V_2^n,V^n_3,X^n;B_1^n)_{(\rho^{UV_2V_3XB_1})^{\otimes n}}}\\
    &+4  (n+1)^{\alpha d_{U}(d_{B_1}+2)(d_{B_1}-1)/2} 2^{\alpha n(R_{11} + S_{2}+S_3+S_{1})} 2^{-\alpha \tilde{I}_{1-\alpha}^{\downarrow}\left(V_2^n,V_3^n,X^n;B_1^n|U^n\right)_{(\rho^{UV_2V_3XB_1})^{\otimes n}\vert(\rho^{UV_2V_3X})^{\otimes n}}}\\
    &+4  (n+1)^{\alpha d_{V_2}d_U(d_{B_1}+2)(d_{B_1}-1)/2} 2^{\alpha n(S_3+S_1)} 2^{-\alpha \tilde{I}_{1-\alpha}^{\downarrow}\left(U^n,V_3^n,X^n;B_1^n|V_2^n\right)_{(\rho^{UV_2V_3XB_1})^{\otimes n}\vert(\rho^{UV_2V_3X})^{\otimes n}}}\\
    &+4  (n+1)^{\alpha d_{V_3}d_U(d_{B_1}+2)(d_{B_1}-1)/2} 2^{\alpha n(R_{11}+S_2+S_1)} 2^{-\alpha \tilde{I}_{1-\alpha}^{\downarrow}\left(U^n,V_2^n,X^n;B_1^n|V_3^n\right)_{(\rho^{UV_3V_2XB_1})^{\otimes n}\vert(\rho^{UV_3V_2X})^{\otimes n}}}\\
    &+4  (n+1)^{\alpha d_{V_2}d_{V_3}d_U(d_{B_1}+2)(d_{B_1}-1)/2} 2^{\alpha nS_1} 2^{-\alpha \tilde{I}_{1-\alpha}^{\downarrow}\left(U^n,X^n;B_1^n|V_2^n,V_3^n\right)_{(\rho^{UV_3V_2XB_1})^{\otimes n}\vert(\rho^{UV_3V_2X})^{\otimes n}}}\\
  P_{e,2}^{(n)}(\mathcal{C}_n)&\leq 
4 (n+1)^{\alpha (d_{B_2}-1)} 2^{\alpha n(R_0+R_{10}+S_0+R_{11}+S_2+r_1)} 2^{-\alpha\tilde{I}_{1-\alpha}^{\uparrow}(U^n,V_2^n;B_2^n)_{(\rho^{UV_2B_2})^{\otimes n}}}\\
&+4 (n+1)^{\alpha d_U(d_{B_2}+2)(d_{B_2}-1)/2} 2^{\alpha n(R_{11}+S_2+r_1)} 2^{-\alpha\tilde{I}_{1-\alpha}^{\downarrow}(V_2^n;B_2^n\vert U^n)_{(\rho^{UV_2B_2})^{\otimes n}\vert(\rho^{UV_2})^{\otimes n} }},\\
  P_{e,3}^{(n)}(\mathcal{C}_n)&\leq 
4 (n+1)^{\alpha (d_{B_3}-1)} 2^{\alpha n(R_0+R_{10}+S_0+S_3+r_2)} 2^{-\alpha\tilde{I}_{1-\alpha}^{\uparrow}(U^n,V_3^n;B_3^n)_{(\rho^{UV_3B_3})^{\otimes n}}},
\end{align*}
where we have employed the polynomial bounds on the number of distinct eigenvalues from Proposition \ref{pinching-asymptotic}. We further obtain (notice that the number of distinct eigenvalues vanish because they are only polynomial and that R\'enyi relative entropy tends to von Nuemann relative entropy when its parameter tends to $1$):
\begin{align*}
  \lim_{n\to\infty }-\frac{1}{n}\log P_{e,1}^{(n)}(\mathcal{C}_n)\ge
  \min\bigg(&\alpha \big(\tilde{I}_{1-\alpha}^{\uparrow}(U,V_2,V_3,X;B_1)_{\rho^{UV_2V_3XB_1}}-(R_0+R_1+R_2)\big),\\
  &\alpha \big(\tilde{I}_{1-\alpha}^{\downarrow}\left(V_2,V_3,X;B_1|U\right)_{\rho^{UV_2V_3XB_1}|\rho^{UV_2V_3X}}-(R_{11}+S_{2}+S_3+S_{1})\big),\\
  &\alpha \big(\tilde{I}_{1-\alpha}^{\downarrow}\left(U,V_3,X;B_1|V_2\right)_{\rho^{UV_2V_3XB_1}|\rho^{UV_2V_3X}}-(S_3+S_1)\big)\\
  &\alpha \big(\tilde{I}_{1-\alpha}^{\downarrow}\left(U,V_2,X;B_1|V_3\right)_{\rho^{UV_2V_3XB_1}|\rho^{UV_2V_3X}}-(R_{11}+S_2+S_1)\big)\\
  &\alpha \big(\tilde{I}_{1-\alpha}^{\downarrow}\left(U,X;B_1|V_2,V_3\right)_{\rho^{UV_2V_3XB_1}|\rho^{UV_2V_3X}}-S_1\big)
  \bigg),\\
\lim_{n\to\infty }-\frac{1}{n}\log P_{e,2}^{(n)}(\mathcal{C}_n)\ge \min\bigg(
  &\alpha \big(\tilde{I}_{1-\alpha}^{\uparrow}(UV_2;B_2)_{\rho^{UV_2B_2}}-(R_0+R_{10}+S_0+R_{11}+S_2+r_1)\big),\\  
  &\alpha \big(\tilde{I}_{1-\alpha}^{\downarrow}(V_2;B_2\vert U)_{\rho^{UV_2B_2}\vert\rho^{UV_2}}-(R_{11}+S_2+r_1)\big) \bigg)\\
\lim_{n\to\infty }-\frac{1}{n}\log P_{e,3}^{(n)}(\mathcal{C}_n)\ge
  &\alpha \big(\tilde{I}_{1-\alpha}^{\downarrow}(UV_3;B_3)_{\rho^{UV_3B_3}\vert\rho^{UV_3}}-(R_0+R_{10}+S_0+S_3+r_2)\big).
\end{align*}
It now follows that as $n\to\infty$ and $\alpha\to 0$, there exists a sequence of codes $\mathcal{C}_n$ such that $P_{e,1}^{(n)}(\mathcal{C}_n),P_{e,2}^{(n)}(\mathcal{C}_n)$ and $P_{e,3}^{(n)}(\mathcal{C}_n)$ vanish, if
\begin{align*}
r_1 + r_2 &\geq I(V_2;V_3\vert U)\\
R_0 + R_1 + R_2&\leq I(X;B_1)_{\rho},\\
   R_{11}+ S_{3}+S_2+S_{1}&\leq I(X;B_1|U)_{\rho},\\
     S_{3}+S_1&\leq I(X;B_1|V_2)_{\rho},\\
     R_{11}+S_{2}+S_1&\leq I(X;B_1|V_3)_{\rho},\\
     S_1&\leq I(X;B_1\vert V_2,V_3)_{\rho}, \\
    R_0 + S_{0} + R_{10}+ S_2 +r_1 &\leq I(V_2;B_2)_{\rho},
    \\
    R_{11}+S_2+r_1 &\leq I(V_2;B_2\vert U)_{\rho},\\
    R_0 + S_{0} + R_{10}+S_3 +r_2 &\leq I(V_3;B_3)_{\rho},
\end{align*}
where the mutual information quantities are calculated for the state in the statement of the Theorem \ref{general-three-degraded.t}. Substituting $R_2=S_0 + S_2+ S_3+S_1, R_1=R_{10}+R{11}$ and using the Fourier–Motzkin elimination procedure, it follows from the union bound $P_{e}^{(n)}(\mathcal{C}_n)\le P_{e,1}^{(n)}(\mathcal{C}_n)+P_{e,2}^{(n)}(\mathcal{C}_n)+P_{e,3}^{(n)}(\mathcal{C}_n)$ that the probability of decoding error tends to zero as $n\to \infty$ if the inequalities in Theorem are satisfied. This completes the proof of Theorem \ref{general-three-degraded.t}. 
\end{proofof}

\medskip

\section{Discussion}
Increasing the number of receivers in broadcast channels naturally leads to new settings, and the emerging problems become at least as challenging with the addition of each receiver. Indeed, it is easy to see that straightforward superposition coding cannot establish the capacity of the three-receiver multilevel classical broadcast channel \cite{4557122,Nair-Elgamal}, and in case of quantum channels, it cannot reach the single-letter no-go region established in Theorem \ref{converse-1}. Specifically, we can immediately observe that straightforward superposition coding results in the following region for the three-receiver multilevel quantum broadcast channel $x\to\rho_{x}^{B_1B_2B_3}$ with two-degraded message set:
\begin{align*}
    R_0&\leq \min\{I(U;B_2),I(U;B_3)\},\\
    R_1&\leq I(X:B_1\vert U),
\end{align*}
for some state $\rho^{UXB_1B_2B_3}$.
For the case of classical outputs, it is shown by means of explicit examples that this bound is strictly smaller than capacity region achieved by non-unique decoding. It is an interesting question to find analogues examples to those in \cite[Sec. IV]{Nair-Elgamal}, showing that a similar result holds for quantum broadcast channels. 

  In superposition coding for the two-receiver broadcast channels, requiring the (potentially) stronger receiver to decode the cloud center (as a common message, even if that receiver is not interested in it), will not change the inner bound. However, this facilitates finding a converse. Indeed, the capacity of the two-receiver broadcast channel with two-degraded message sets is fully known in the classical case, whereas in the quantum case, it is understood up to conditioning on quantum systems. This is not the case for non-unique decoding \cite[Remark 8.2]{elgamal-kim}. In non-unique decoding corresponding to the three-receiver multilevel broadcast channel, if the receiver $B_3$ is required to decode part of the private message correctly, the inner bound will not change change, but only finding the converse will be difficult, meaning that changing the definition of the problem in the sense of the non-unique decoding does not change the converse, which is interesting. 

The transmission of classical information over quantum broadcast channels introduces new challenges that do not arise with classical broadcast channels. Transmitting quantum information over quantum broadcast channels amplifies these challenges compared to classical message transmission. Since quantum information cannot be copied, one needs to consider the effects of underlying environment, to which classical information may be leaked. In a future work, we are going to study the ultimate capabilities of quantum broadcast channels for quantum information communication. Another interesting problem is to show whether the single-letter no-go region in Theorem \ref{converse-1} corresponds to a strong converse, meaning that exceeding this region results in error probability approach $1$ exponentially fast. We anticipate this to be the case, and one might demonstrate it using the techniques outlined in \cite{hao-chung-strong}.

\medskip

\section{Appendix}
\label{appendix}

\subsection{Pinching maps}

We begin with the following lemma, known as the pinching inequality by Hayashi \cite{Hayashi_2002}, which presents a crucial inequality in quantum information and we frequently employ in our work. 
\begin{lemma}[Pinching inequality \cite{Hayashi_2002}, see also \cite{ogawa-hayashi}]\label{Hayashi-pinching}
Let $\rho$ and $\sigma$ be two quantum states and let $\mathcal{E}_\sigma$ be the pinching operation with respect to eigendecomposition of $\sigma$. We have
\begin{align*}
    \rho\leq\nu\mathcal{E}_{\sigma}(\rho),
\end{align*}
where $\nu$ is the number of distinct eigenvalues of the operator $\sigma$.
\end{lemma}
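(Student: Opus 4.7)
\medskip

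\noindent\textbf{Proof proposal.} The plan is to realize the pinching map as a uniform average of unitary conjugations indexed by the $\nu$-th roots of unity, and then note that one of those unitaries is the identity. Start from the spectral decomposition $\sigma=\sum_{i=1}^{\nu}\lambda_i P_i$, where the $P_i$ are the mutually orthogonal projectors onto the distinct eigenspaces of $\sigma$, so that $\sum_i P_i=\mathbbm{1}$ and the pinching map acts as $\mathcal{E}_\sigma(\rho)=\sum_i P_i\rho P_i$.

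Next, set $\omega\coloneqq e^{2\pi i/\nu}$ and define, for each $k\in\{0,1,\dots,\nu-1\}$, the operator
\begin{align*}
U_k\coloneqq \sum_{j=1}^{\nu}\omega^{jk}P_j.
\end{align*}
Because the projectors $P_j$ are orthogonal and sum to the identity, each $U_k$ is unitary: $U_kU_k^\dagger=\sum_j P_j=\mathbbm{1}$. The key computation is then
\begin{align*}
\frac{1}{\nu}\sum_{k=0}^{\nu-1} U_k\rho U_k^\dagger
=\sum_{j,j'}\left(\frac{1}{\nu}\sum_{k=0}^{\nu-1}\omega^{k(j-j')}\right)P_j\rho P_{j'}
=\sum_j P_j\rho P_j=\mathcal{E}_\sigma(\rho),
\end{align*}
where I used the standard orthogonality $\frac{1}{\nu}\sum_k \omega^{k(j-j')}=\delta_{jj'}$.

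Finally, observe that $U_0=\sum_j P_j=\mathbbm{1}$, so $U_0\rho U_0^\dagger=\rho$, and every other term $U_k\rho U_k^\dagger$ is positive semidefinite. Hence
\begin{align*}
\rho=U_0\rho U_0^\dagger\leq \sum_{k=0}^{\nu-1}U_k\rho U_k^\dagger=\nu\,\mathcal{E}_\sigma(\rho),
\end{align*}
which is precisely the claimed inequality. There is essentially no hard step here; the only place requiring a bit of care is verifying the roots-of-unity orthogonality that collapses the double sum to the diagonal, and checking that the $U_k$ really are unitaries (which is immediate from orthogonality of the spectral projectors of a Hermitian operator). The whole argument depends on $\nu$ being finite, which holds by assumption since $\sigma$ acts on a finite-dimensional Hilbert space.
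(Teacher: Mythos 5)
Your proof is correct, and it is the standard argument for the pinching inequality (the paper itself states this lemma with citations to the literature and does not reproduce a proof). The realization of $\mathcal{E}_\sigma$ as a uniform average of the $\nu$ unitaries $U_k=\sum_j\omega^{jk}P_j$, followed by dropping all but the $k=0$ term, is exactly the argument in the cited references; all steps (unitarity of the $U_k$, the roots-of-unity orthogonality collapsing the double sum, and positivity of the discarded terms) check out.
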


Next, we define nested pinching maps on a single Hilbert space. Consider the following tripartite states:
    \begin{align*}
         \rho^{UVB} = \sum_{u,v}p(u,v)\ketbra{u}\otimes \ketbra{v}\otimes\rho^{B}_{u,v},\\ 
   \rho^{V-U-B} = \sum_{u,v}p(u,v)\ketbra{u}\otimes \ketbra{v}\otimes\rho^{B}_{u},\\
   \rho^{U-V-B} = \sum_{u,v}p(u,v)\ketbra{u}\otimes \ketbra{v}\otimes\rho^{B}_{v},  
    \end{align*}
where $U$ and $V$ are classical, while $B$ is quantum. Let us denote the dimension of systems $U,V$ and $B$ by $d_U,d_V$, and $d_B$, respectively. We further define three pinching maps as follows: A pinching map with respect to the eigenspace decomposition of $\rho^{B}$, denoted by $\mathcal{E}^{B}$. If we apply this pinching map to the state $\rho^{V-U-B}$, we obtain
\begin{align*}
    \mathcal{E}^{B}(\rho^{V-U-B}) = \sum_{u,v}p(u,v)\ketbra{u}\otimes \ketbra{v}\otimes\mathcal{E}^{B}(\rho^{B}_{u}).
\end{align*}
We are now interested in a pinching map with respect to the spectral decomposition of $\mathcal{E}^{B}(\rho^{V-U-B})$. Notice that now we need to define a pinching map with respect to a collection of states $\{\mathcal{E}^{B}(\rho^{B}_u)\}_u$ instead of one state. In other words, we have $d_U$ different pinching maps rather than one. Hence, for every $u\in\mathcal{U}$, we define a pinching map with respect to the spectral decomposition of $\mathcal{E}^{B}(\rho^{B}_u)$ and denote it by $\mathcal{E}_{1|u}^{B}$. In our nested pinching technique, this pinching map is extended to a pinching map on $UB$ rather than a map on $B$ alone, even though its action on $U$ is trivial. Hence, we define the pinching map on $UB$ with respect to the spectral decomposition of $\mathcal{E}^{B}(\rho^{V-U-B})$ as follows:
\begin{align*}
\mathcal{E}_{1}^{UB}=\sum_{u}\ketbra{u}\otimes\mathcal{E}_{1|u}^{B}.
\end{align*}
Applied to an arbitrary state $\sigma^{UB}$, the pinching map $\mathcal{E}_{1|u}^{B}$ acts on $\sigma_u^{B}$. 
It is important to note that that the pinching map $\mathcal{E}_1^{UB}$ is an operator on composite systems $UB$. For pedagogical purposes, let us consider the action of this map on the \textit{tensor product state} $\sum_{u}q(u)\ketbra{u}\otimes\sigma^{B}$ (notice that this is not a cq-state, as $\sigma^B$ does not depend on $u$). We see that the resulting pinched state $\tilde{\sigma}^{UB}$ is generically a cq-state on $UB$ as follows:
\begin{align*}
   \tilde{\sigma}^{UB}=& \mathcal{E}_1^{UB}\Big(\sum_{u}q(u)\ketbra{u}\otimes\sigma^{B}\Big) \\
   =& 
\sum_{u}q(u)\ketbra{u}\otimes\mathcal{E}_{1|u}^{B}(\sigma^{B})\\
=& \sum_{u}q(u)\ketbra{u}\otimes\Tilde{\sigma}^{B}.
\end{align*}
In other words, pinching map turned the product states into a cq-state.
Let us consider pinching the state $\rho^{U-V-B}$ with the operator $\mathcal{E}_1^{UB}$. Note that in the latter state, the system $B$ is independent of $U$ conditioned on $V$. However, after the application of the pinching map, we obtain a generic cq-state on $UVB$ as follows:
\begin{align*}
    \tilde{\tau}^{UVB}=&\mathcal{E}_1^{UB}(\rho^{U-V-B}) \\
    =& \sum_{u,v}p(u,v)\ketbra{u}\otimes \ketbra{v}\otimes\mathcal{E}_{1|u}^{B}(\rho^{B}_{v})\\
    = & \sum_{u,v}p(u,v)\ketbra{u}\otimes \ketbra{v}\otimes\Tilde{\tau}_{u,v}^{B},
\end{align*}
where for every $u\in\mathcal{U}$ and $v\in\mathcal{V}$, we define $\Tilde{\tau}_{u,v}^{B}=\mathcal{E}_{1|u}^{B}(\rho^{B}_{v})$.
Now, we aim to define another pinching map with respect to the spectral decomposition of the state $\tilde{\tau}^{UVB}$. This pinching map is defined with respect to the spectral decomposition of the operators $\Tilde{\tau}_{u,v}^{B}=\mathcal{E}_{1|u}^{B}(\rho^{B}_{v})$ for every $(u,v)\in \mathcal{U}\times\mathcal{V}$. Let us denote the pinching map with respect to $\mathcal{E}_{1|u}^{B}(\rho^{B}_{v})$ by $\mathcal{E}_{2|u,v}^{B}$. This map is extended to a pinching map on $UVB$ as follows:
\begin{align*}
    \mathcal{E}_{2}^{UVB} = \sum_{u,v}\ketbra{u}\otimes \ketbra{v}\otimes \mathcal{E}_{2|u,v}^{B}.
\end{align*}
The number of distinct eigenvalues of an operator whose spectral decomposition gives rise to the pinching map plays a significant role in asymptotic analysis of our one-shot codes. In case of $\mathcal{E}^{B}$, this number corresponds to the distinct eigenvalues of the operator $\rho^{B}$. In case of $\mathcal{E}_1^{UB}$, we deal with operators $\{\mathcal{E}^{B}(\rho^{B}_u)\}_u$ and we are interested in the maximum number of distinct eigenvalues of these operators over $u\in\mathcal{U}$. Similarly, in case of $ \mathcal{E}_{2}^{UVB}$, we deal with operators $\{\mathcal{E}^{B}_{1|u}(\rho^{B}_v)\}_{u,v}$ and we are interested in the maximum number of distinct eigenvalues of these operators for $(u,v)\in(\mathcal{U},\mathcal{V})$. 

We can now proceed to construct analogous pinching maps when there are $n$ identical copies of the aforementioned operators.
But the important point is to show how the number of the distinct eigenvalues of the operators in $n$-fold space scale with $n$. More precisely, we want to find upper bounds on the maximum number of the distinct eigenvalues of the operators $(\rho^{B})^{\otimes n}$, $\{\mathcal{E}^{B^n}\left((\rho^{B}_u)^{\otimes n}\right)\}_{u^n}$, and $\{\mathcal{E}^{B^n}_{1|u^n}\left((\rho^{B}_v)^{\otimes n}\right)\}_{u^n,v^n}$, where $\mathcal{E}^{B^n}$ is the pinching map with respect to the spectral decomposition of $(\rho^{B})^{\otimes n}$, and  $\mathcal{E}^{B^n}_{1|u^n}$ is the pinching map with respect to the spectral decomposition of $\{\mathcal{E}^{B^n}\left((\rho^{B}_u)^{\otimes n}\right)\}_{u^n}$. We aim to demonstrate that the number of distinct eigenvalues in these scenarios is polynomial in the number of tensor product spaces. 

\medskip

The proof of these polynomial upper bounds on the number of distinct eigenvalues is based on the Weyl-Schur duality \cite[Sec. 4.4.1]{Hayashi2017-group1}, \cite[Sec. 6.2.1]{Hayashi2017-group2}. 
Schur-Weyl duality is a relation between the finite-dimensional irreducible representations of the general linear group $GL(\mathbb{C}^{d})$ and the symmetric group $S_n$.
Any finite-dimensional irreducible representation of the general linear group $GL(\mathbb{C}^{d})$
has the same form as
an irreducible representation of the special unitary group $SU_d$. 
It states that for any $n,d \in \mathbb{N}$, the $(n \times d)$-dimensional complex $n$-fold tensor product vector space
  $(\mathbb{C}^{d})^{\otimes n}$ decomposes into a direct sum of irreducible representations of the general linear group $GL(\mathbb{C}^{d})$ and the $n$-th symmetric group $S_n$.
These groups in fact identify each other via Young diagrams: When the vector of integers $\lambda = (\lambda_1, \lambda_2, \ldots, \lambda_d)$ satisfies the condition $\lambda_1 \geq \lambda_2 \geq \ldots \geq \lambda_d \geq 0$ and $\sum_{i=1}^{d} \lambda_i = n$, the vector $\lambda$ is called a Young diagram (frame) of size $n$ and depth $d$; the set of such vectors of size $n$ and depth (at most) $d$ is denoted as $Y_{d}^{n}$. It can be shown that 
\begin{align*}
    |Y_{d}^{n}|\le {n + d -1 \choose d-1} \le (n+1)^{d-1}.
\end{align*} 

\begin{lemma}[Schur-Weyl Duality]
\label{schur-weyl duality}
Using the Young diagram, the irreducible decomposition of $(\mathbb{C}^{d})^{\otimes n}$ can be described as follows:
\begin{align*} 
  (\mathbb{C}^{d})^{\otimes n} = \bigoplus_{\lambda \in Y_d^n} \mathcal{U}_{\lambda}(SU_d) \otimes \mathcal{U}_{\lambda}(S_n),
\end{align*}
where for an element $\lambda \in Y_{d}^{n}$,
$\mathcal{U}_{\lambda}(SU_d)$ is the irreducible representation space of the speacial unitary group $SU_d$ characterized by $\lambda$, and $\mathcal{U}_{\lambda}(S_n)$ is the irreducible representation space of the $n$-th symmetric group $S_n$ characterized also by $\lambda$.
\end{lemma}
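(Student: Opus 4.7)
The plan is to prove Schur--Weyl duality via the double centralizer theorem applied to two naturally commuting actions on $V^{\otimes n}$, where $V = \mathbb{C}^d$. First, I would set up the two actions: the diagonal action of $GL(V)$ given by $g \cdot (v_1 \otimes \cdots \otimes v_n) = gv_1 \otimes \cdots \otimes gv_n$, and the place-permutation action of $S_n$ given by $\sigma \cdot (v_1 \otimes \cdots \otimes v_n) = v_{\sigma^{-1}(1)} \otimes \cdots \otimes v_{\sigma^{-1}(n)}$. A direct computation verifies that these actions commute. Let $A \subset \mathrm{End}(V^{\otimes n})$ be the subalgebra generated by $\{g^{\otimes n} : g \in GL(V)\}$ and $B \subset \mathrm{End}(V^{\otimes n})$ the image of $\mathbb{C}[S_n]$.

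The heart of the argument is the mutual centralizer relation $A' = B$ and $B' = A$ (where prime denotes the commutant in $\mathrm{End}(V^{\otimes n})$). The inclusion $B \subseteq A'$ is immediate from commutativity of the actions; the nontrivial content is $A' \subseteq B$, i.e.\ every $GL(V)$-equivariant endomorphism of $V^{\otimes n}$ is a $\mathbb{C}$-linear combination of permutations. I would establish this by a polarization argument: since $GL(V)$ is Zariski-dense in $\mathrm{End}(V)$, the span of $\{g^{\otimes n} : g \in GL(V)\}$ equals the span of $\{X^{\otimes n} : X \in \mathrm{End}(V)\}$, and extracting multilinear coefficients identifies this span with the symmetric subspace $\mathrm{Sym}^n(\mathrm{End}(V))$ inside $\mathrm{End}(V)^{\otimes n} \cong \mathrm{End}(V^{\otimes n})$. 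That symmetric subspace is exactly the $S_n$-fixed part, which is $B' = A$; dualizing yields $A' = B$.

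Having established mutual centralization, I would invoke the double centralizer theorem for semisimple algebras acting on a finite-dimensional module. Both $A$ and $B$ are semisimple: $B \cong \mathbb{C}[S_n]$ by Maschke's theorem, and $A$ is semisimple because the action of the compact subgroup $U(d) \supset SU_d$ is completely reducible by the unitarian trick. The double centralizer theorem then yields a decomposition
\[
V^{\otimes n} \;=\; \bigoplus_{\lambda \in \Lambda} W_\lambda \otimes U_\lambda ,
\]
where $\Lambda$ is some common indexing set, the $W_\lambda$ are pairwise non-isomorphic irreducible $A$-modules (equivalently, irreducible $GL(V)$-summands of $V^{\otimes n}$), and the $U_\lambda$ are pairwise non-isomorphic irreducible $B$-modules.

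Finally, I would identify $\Lambda$ with $Y_d^n$. Using Young symmetrizers $c_\lambda \in \mathbb{C}[S_n]$ indexed by partitions $\lambda \vdash n$, which is the classical parametrization of $\widehat{S_n}$, the image $c_\lambda V^{\otimes n}$ is either zero or an irreducible $GL(V)$-module (a Weyl module) of highest weight $\lambda$; checking on the diagonal torus, the nonvanishing criterion is precisely that $\lambda$ has at most $d = \dim V$ nonzero parts, i.e.\ $\lambda \in Y_d^n$. Since the Weyl modules for distinct $\lambda \in Y_d^n$ have distinct highest weights, the $W_\lambda$ are pairwise non-isomorphic, matching them one-to-one with the $U_\lambda$. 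The main obstacle is the centralizer step $A' = B$ (the polarization/dimension argument); once this is in hand, everything else is bookkeeping using the classical classification of $\widehat{S_n}$ and highest-weight theory for $GL(V)$.
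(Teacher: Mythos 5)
The paper does not actually prove this lemma: Schur--Weyl duality is stated as a classical fact and delegated to the cited references (Hayashi's group-representation texts, Secs.~4.4.1 and 6.2.1), so there is no in-paper argument to compare against. Your proposal is the standard and correct textbook proof via the double centralizer theorem: the commuting actions are set up correctly, the polarization/Zariski-density argument correctly identifies the span of $\{g^{\otimes n}\}$ with $\mathrm{Sym}^n(\mathrm{End}(V))=(\mathrm{End}(V)^{\otimes n})^{S_n}=B'$, semisimplicity of both algebras is justified (Maschke for $\mathbb{C}[S_n]$, the unitarian trick for the $GL(V)$-span), and the indexing of the isotypic blocks by $Y_d^n$ via Young symmetrizers, with the at-most-$d$-rows nonvanishing criterion, is right. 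One small point worth making explicit, since the lemma is phrased for $SU_d$ rather than $GL_d$: the Weyl modules $W_\lambda$ for distinct $\lambda\in Y_d^n$ remain pairwise non-isomorphic upon restriction to $SU_d$, because $W_\lambda\cong W_\mu$ as $SU_d$-modules only when $\lambda-\mu$ is an integer multiple of $(1,\dots,1)$, which for two partitions of the same $n$ forces $\lambda=\mu$. With that remark added, the argument is complete.
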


\medskip

Schur-Weyl decomposition provides a profound insight into the structure of tensor product Hilbert spaces, revealing that any operator within this space can be expressed as irreducible representations of both $SU_d$ and $S_n$.  
Specifically, 
any tensor product state residing in $B^{\otimes n}$ takes the form:
\begin{align*}
(\rho^B)^{\otimes n} = 
\sum_{\lambda \in Y_{d_B}^{n}}
p_\lambda \rho_\lambda\otimes \rho_{\text{mix}, U_\lambda(S_{n})}
\end{align*} 
where $\rho_\lambda$ is a state on the space $\mathcal{U}_{\lambda}(SU_d)$
and $\rho_{\text{mix}, U_\lambda(S_{n})}$ is the completely mixed state on
$ \mathcal{U}_{\lambda}(S_n)$, and 
$p_\lambda$ 
is a probability distribution on $Y_{d_B}^{n}$.
In other words,
$(\rho^B)^{\otimes n}$ can be written as 
\begin{align}
\label{decomposition}
(\rho^B)^{\otimes n} = 
\sum_{\lambda \in Y_{d_B}^{n}}
X_\lambda\otimes I_{\mathcal{U}_\lambda(S_{n})},
\end{align} 
where $X_\lambda$ is 
an operator on $\mathcal{U}_\lambda(SU_{d_B})$
and 
$I_{\mathcal{U}_\lambda(S_{n})}$ is the identity operator on $\mathcal{U}_\lambda(S_{n})$.
We now present the polynomial upper bounds in the following lemma.
\begin{proposition}\label{pinching-asymptotic-appendix}
Consider the cq-states $\rho^{UVXB},\rho^{XV-U-B},\rho^{XU-V-B}$, and $\rho^{U-VX-B}$, classical on $U,V,X$ and quantum on $B$. Let $\nu,\nu_1,\nu_2$ and $\nu_3$ be defined as follows:
\begin{align*}
 \nu & = \left\{\text{distinct eigenvalues of}\hspace{0.1cm} (\rho^B)^{\otimes n}\right\},\\
    \nu_1 & = \max_{u^n\in\mathcal{U}^n} {\left\{\text{distinct eigenvalues of}\hspace{0.1cm} \mathcal{E}^{B^n}\big(\rho^{B^n}_{u^n}\big)\right\}},\\
     \nu_2 & = \max_{u^n\in\mathcal{U}^n,v^n\in\mathcal{V}^n} {\left\{\text{distinct eigenvalues of}\hspace{0.1cm} \mathcal{E}^{B^n}_{1|u^n}\big(\rho^{B^n}_{v^n}\big)\right\}},\\
     \nu_3 & = \max_{\substack{v^n\in\mathcal{V}^n,x^n\in\mathcal{X}^n\\u^n\in\mathcal{U}^n}} {\left\{\text{distinct eigenvalues of}\hspace{0.1cm} \mathcal{E}^{B^n}_{1|u^n}\big(\rho^{B^n}_{v^n,x^n}\big)\right\}}
\end{align*}
where $\rho^{B^n}_{u^n}= \rho_{u_1}^{B}\otimes \rho_{u_2}^{B}\otimes\cdots\otimes\rho_{u_n}^{B}$, $\rho^{B^n}_{v^n}= \rho_{v_1}^{B}\otimes \rho_{v_2}^{B}\otimes\cdots\otimes\rho_{v_n}^{B}$, $\rho^{B^n}_{v^n,x^n}=\rho_{u_1,v_1}^B\otimes\cdots,\otimes\rho_{u_n,v_n}^B$, and $\mathcal{E}^{B^n}$ and $\mathcal{E}^{B^n}_{1|u^n}$ are pinching maps corresponding to the spectral decomposition of the operators $(\rho^B)^{\otimes n}$ and $\mathcal{E}^{B^n}\left(\rho^{B^n}_{u^n}\right)$, respectively. We show that the number of distinct eigenvalues are polynomial in the number of tensor product Hilbert spaces. More precisely, we obtain:
\begin{align*}
\nu & \le (n+1)^{d_B-1} \\
\nu_1 & \le 
(n+1)^{d_U(d_B+2)(d_B-1)/2} \\
\nu_2 & \le (n+1)^{d_V d_U(d_B+2)(d_B-1)/2}\\
\nu_3 & \le (n+1)^{d_Xd_V d_U(d_B+2)(d_B-1)/2} .
\end{align*}
\end{proposition}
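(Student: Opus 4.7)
The plan is to treat the four bounds in a unified framework combining permutation equivariance of the nested pinchings with the Schur--Weyl decomposition of Lemma~\ref{schur-weyl duality}. For $\nu$, I would bypass Schur--Weyl altogether: the eigenvalues of $(\rho^B)^{\otimes n}$ are products $\prod_i p_i^{k_i}$, where $\{p_i\}$ are the at most $d_B$ distinct eigenvalues of $\rho^B$ and $k_i\ge 0$ with $\sum_i k_i = n$, so they are indexed by multisets of size $n$ over $d_B$ symbols, of which there are $\binom{n+d_B-1}{d_B-1}\le (n+1)^{d_B-1}$.

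For $\nu_1,\nu_2,\nu_3$, the key observation is that each pinching in the chain is defined from an $S_n$-invariant operator, hence is permutation equivariant: $\pi\,\mathcal{E}^{B^n}(\cdot)\,\pi^\dagger=\mathcal{E}^{B^n}(\pi(\cdot)\pi^\dagger)$ for any $\pi\in S_n$, and by induction $\pi\,\mathcal{E}^{B^n}_{1|u^n}(\cdot)\,\pi^\dagger=\mathcal{E}^{B^n}_{1|\pi(u^n)}(\pi(\cdot)\pi^\dagger)$, and analogously for deeper pinchings. This implies that the spectrum of $\mathcal{E}^{B^n}_{1|u^n}(\rho^{B^n}_{v^n})$ depends only on the joint type $t_{u,v}=|\{i:u_i=u,\ v_i=v\}|$. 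Permuting tensor factors to sort $(u^n,v^n)$ lexicographically, the operator is then invariant under the block-diagonal subgroup $H=\prod_{u,v}S_{t_{u,v}}\subset S_n$. Applying Schur--Weyl separately on each factor $B^{\otimes t_{u,v}}$, any $H$-invariant operator takes the form
\begin{align*}
\sum_{(\lambda_{u,v})}A_{(\lambda_{u,v})}\otimes\bigotimes_{u,v}I_{\mathcal{U}_{\lambda_{u,v}}(S_{t_{u,v}})},
\end{align*}
with $A_{(\lambda_{u,v})}$ acting on $\bigotimes_{u,v}\mathcal{U}_{\lambda_{u,v}}(SU_{d_B})$. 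Its number of distinct eigenvalues is therefore at most $\prod_{u,v}\sum_{\lambda\in Y^{t_{u,v}}_{d_B}}\dim\mathcal{U}_{\lambda}(SU_{d_B})$. Combining $|Y_{d_B}^{m}|\le(m+1)^{d_B-1}$ with the Weyl dimension bound $\dim\mathcal{U}_\lambda(SU_{d_B})\le(m+1)^{d_B(d_B-1)/2}$ for $|\lambda|=m$, each factor is at most $(t_{u,v}+1)^{(d_B-1)(d_B+2)/2}$. Since $t_{u,v}\le n$ and there are at most $d_Ud_V$ nonzero blocks, the product is $\le(n+1)^{d_Ud_V(d_B-1)(d_B+2)/2}$, which is $\nu_2$. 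The bounds for $\nu_1$ and $\nu_3$ are parallel: $\nu_1$ uses joint types over $d_U$ coordinates and $\nu_3$ over $d_Ud_Vd_X$ coordinates.

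The main hurdle will be carefully tracking the residual symmetry through each layer of pinching: because $\mathcal{E}^{B^n}_{1|u^n}$ is only equivariant under the stabilizer of $u^n$ rather than the full $S_n$, the invariance subgroup shrinks as more conditioning variables are added, and one must verify at each step that the nested pinching applied to the relevant i.i.d.\ conditional state preserves invariance under the product of symmetric groups indexed by the appropriate joint type. Once this chain of invariances is in place, the counting reduces to the standard polynomial bounds on $|Y_{d_B}^m|$ and on the dimensions of $SU(d_B)$ irreducibles, both of which depend polynomially on $m$ with degree determined solely by $d_B$.
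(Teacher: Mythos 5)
Your proposal is correct and follows essentially the same route as the paper: reduce to a sorted representative via permutation equivariance (equivalently, note the spectrum depends only on the joint type), apply Schur--Weyl blockwise on each $B^{\otimes t}$ factor, and multiply the bounds $|Y_{d_B}^{m}|\le(m+1)^{d_B-1}$ and $\dim\mathcal{U}_\lambda(SU_{d_B})\le(m+1)^{d_B(d_B-1)/2}$ over the at most $d_U$, $d_Ud_V$, or $d_Ud_Vd_X$ blocks. The invariance chain you flag as the main hurdle is exactly what the paper verifies by showing that each nested pinching commutes with the blockwise Schur--Weyl pinchings $\mathcal{F}_j$ and $\mathcal{R}_{ij}$.
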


\begin{proof}
The values of $\nu$ and $\nu_1$ are determined in \cite[Lemma 3]{AHW}. However, the evaluations of $\nu_2$ and $\nu_3$ are not provided in the previous work. To find the upper bounds $\nu_2$ and $\nu_3$, we build on the techniques used to find $\nu$ and $\nu_1$. So we begin with the latter numbers.
The number of eigenvalues of $(\rho^{B})^{\otimes n}$
is upper bounded by $|Y_{d_B}^n|$, the cardinality of the Young diagram of size $n$ and depth at most $d_B$. The latter can be bounded by the number of combinations of $d_B-1$ objects from
$n+d_B-1$ elements, which
itself is upper bounded by $(n+1)^{d_B-1}$. This follows from counting the number of different type classes in information theory \cite{Csiszar_Körner_2011} (for further details between Schur duality and type classes see \cite[Sec. 3]{Hayashi2009-gj}, \cite[Sec. 6.2.1]{Hayashi2017-group2} and \cite[Chapter 6]{harrow2005applications}). Next, our focus turns to $\nu_1$. 
Suppose
$u^n$ takes the following form:
\begin{align}
\label{sample-u}
   u^n = (\underbrace{u_1, \ldots, u_1}_{n_1}, 
\underbrace{u_2, \ldots, u_2}_{n_2}, 
\ldots, 
\underbrace{u_{d_U}, \ldots, u_{d_U}}_{n_{d_U}}).
\end{align}
Accordingly, we have $\rho^{B^n}_{u^n}=(\rho^B_{u_1})^{\otimes n_1}\otimes \cdots \otimes 
(\rho^B_{u_{d_U}})^{\otimes n_{d_U}}$.
For $j=1, \ldots, d_U$, from Schur-Weyl duality given in Lemma \ref{schur-weyl duality},
the tensor product Hilbert space $B^{\otimes n_j}$
is decomposed as
\begin{align*}
    \bigoplus_{\lambda \in Y_{d_B}^{n_j}}
{\cal U}_\lambda (SU_{d_B}) \otimes \mathcal{U}_\lambda(S_{n_j}),
\end{align*}
where
$Y_d^n$ is the set of Young indexes of size $n$ and depth at most $d_B$.
Let ${\cal F}_j$ 
be the pinching map on $B^{\otimes n_j}$
with respect to the above Schur-Weyl decomposition. Hence, $\mathcal{F}_j$ is described by the following projectors:
\begin{align}
\label{pinching-n_j}
    \{
\mathbbm{1}_{{\cal U}_\lambda (SU_{d_B}) }\otimes \mathbbm{1}_{\mathcal{U}_\lambda(S_{n_j})}
\}_{\lambda \in Y_{d_B}^{n_j}},
\end{align}
where $\mathbbm{1}_{{\cal U}_\lambda (SU_{d_B}) }$ and $\mathbbm{1}_{\mathcal{U}_\lambda(S_{n_j})}$ are the identity operators on the subspaces corresponding to ${\cal U}_\lambda (SU_{d_B})$ and $\mathcal{U}_\lambda(S_{n_j})$, respectively.
Notice that this decomposition does not depend on $(\rho^{B}_{u_j})^{\otimes n_j}$, nor on the eigenvectors of $\rho^B_{u_j}$ (recall that Schur-Weyl duality is a statement generally on the tensor product space). Therefore, the pinching map $\mathcal{F}_j$ defined by operators in Eq. \eqref{pinching-n_j}, commutes with all tensor product operators residing in $B^{\otimes n_j}$, such as $(\rho^B_{u_j})^{\otimes n_j}$ or $(\rho^B)^{\otimes n_j}$.
Therefore, the pinching map $\mathcal{F}^{B^n} = {\cal F}_1 \otimes \cdots \otimes {\cal F}_{d_U}$ on $B^{\otimes n}$ has no effect on $(\rho^B)^{\otimes n}$. To put it another way, if $\{F_i\}_i$ and $\{E_j\}_j$ are the pinching projectors with respect to $\mathcal{F}^{B^n}$ and $\mathcal{E}^{B^n}$, respectively, $F_iE_j=E_jF_i$ for every $i$ and $j$.

In order to further understand the the pinching map with respect to the spectral decomposition of the operator $\mathcal{E}^{B^n}(\rho_{u^n}^{B^n})$, we rewrite this operator as follows:
\begin{align*}
    \mathcal{E}^{B^n}(\rho_{u^n}^{B^n})=
    \mathcal{E}^{B^{n_1}}\left((\rho_{u_1}^B)^{\otimes n_1}\right)\otimes\mathcal{E}^{B^{n_2}}\left((\rho_{u_2}^B)^{\otimes n_2}\right)\otimes\cdots\otimes\mathcal{E}^{B^{n_{d_U}}}\left((\rho_{u_{d_U}}^B)^{\otimes n_{d_U}}\right).
\end{align*}
For any $j=1,2,\ldots,d_U$, based on Eq. \eqref{decomposition}, we can express
\begin{align*}
    (\rho_{u_j}^B)^{\otimes n_j}&=\sum_{\lambda \in Y_{d_B}^{n_j}}
X_\lambda\otimes \mathbbm{1}_{U_\lambda(S_{n_j})},\\
(\rho^B)^{\otimes n_j}&=\sum_{\lambda \in Y_{d_B}^{n_j}}
Z_\lambda\otimes \mathbbm{1}_{U_\lambda(S_{n_j})},
\end{align*}
where $X_\lambda$ and $Z_\lambda$ are operators acting on $\mathcal{U}_\lambda(SU_{d_B})$. Moreover, since the pinching map $\mathcal{F}_j$ commutes with both operators, we have $\mathcal{E}^{B^{n_j}}\bigl(\mathcal{F}_j\big((\rho_{u_j}^B)^{\otimes n_j}\big)\bigl)=
\mathcal{F}_j\bigl(\mathcal{E}^{B^{n_j}}\big((\rho_{u_j}^B)^{\otimes n_j}\big)\bigl)$. This means that in fact
\begin{align*}
  \mathcal{E}^{B^{n_j}}\big((\rho_{u_j}^B)^{\otimes n_j}\big)=\sum_{\lambda \in Y_{d_B}^{n_j}}
T_\lambda\otimes \mathbbm{1}_{U_\lambda(S_{n_j})}
\end{align*}
for an operator $T_\lambda$ on $\mathcal{U}_\lambda(SU_{d_B})$. Now considering the tensor product operators, since
\begin{align*}
\mathcal{E}^{B^n}\big(\rho_{u^n}^{B^{\otimes n}}\big)
=
\mathcal{E}^{B^n} 
\bigl(({\cal F}_1 \otimes \cdots \otimes {\cal F}_{d_U})
\big(\rho_{u^n}^{B^{\otimes n}}\big)\bigl)
= 
({\cal F}_1 \otimes \cdots \otimes {\cal F}_{d_U})
\big(\mathcal{E}^{B^n}\big(\rho_{u^n}^{B^{\otimes n}}\big)\big),
\end{align*}
the operator $\mathcal{E}^{B^n}\big(\rho_{u^n}^{B^{\otimes n}}\big)
$ takes the form
\begin{align*}
\mathcal{E}^{B^n}\big(\rho_{u^n}^{B^{\otimes n}}\big)=
    \sum_{\lambda_1 \in Y_{d_B}^{n_1}, \ldots,
\lambda_{d_U} \in Y_{d_B}^{n_{d_U}}}
Q_{\lambda_1, \ldots, \lambda_{d_U}}
\otimes 
\mathbbm{1}_{{\cal U}_{\lambda_1}(S_{n_1})} \otimes \cdots
\otimes \mathbbm{1}_{{\cal U}_{\lambda_{d_U}}(S_{n_{d_U}})}
\end{align*}
with an operator 
$Q_{\lambda_1, \ldots, \lambda_{d_U}}$
on 
${\cal U}_{\lambda_1}(SU_{d_B})\otimes \cdots \otimes
{\cal U}_{\lambda_{d_U}}(SU_{d_B})$. We can now find an upper bound on the number of the eigenvalues of the state. Note that the identity operators $\{\mathbbm{1}_{{\cal U}_{\lambda_{i}}(S_{n_{i}})}\}_i$ do not increase the number of eigenvalues, therefore we focus on the operators $Q_{\lambda_1, \ldots, \lambda_{d_U}}$. First notice that the number of these operators in the summation equals the number of choices of 
$(\lambda_1, \ldots, \lambda_{d_U})$, which
is upper bounded by 
$(n+1)^{d_U (d_B-1)}$.
The dimension of 
${\cal U}_{\lambda_j}(SU_{d_B})$ is upper bounded by
$(n_j+1)^{d_B(d_B-1)/2}$ \cite[(6.16)]{Hayashi2017-group2}.
Since $(n_j+1)^{d_B(d_B-1)/2}
\le (n+1)^{d_B(d_B-1)/2}$,
the dimension of 
${\cal U}_{\lambda_1}(SU)\otimes \cdots \otimes
{\cal U}_{\lambda_{d_U}}(SU)$
is upper bounded by 
$(n+1)^{d_U d_B(d_B-1)/2}$.
Bringing together these findings,
the number of eigenvalues of $\mathcal{E}^{B^n}\big(\rho_{u^n}^{B^{\otimes n}}\big)$ is upper bound by
$(n+1)^{d_U d_B(d_B-1)/2} \cdot
(n+1)^{d_U (d_B-1)}
=(n+1)^{d_U(d_B+2)(d_B-1)/2} $.
A general sequence
$\tilde{u}^n$ can be written as 
the application of a certain permutation 
$g \in S_n$
to the above-mentioned sequence in Eq. \eqref{sample-u}.
In this case, the finding for that particular choice of $u^n$ also holds for a general $\tilde{u}^n$ after application of $g \in S_n$. Therefore, the pinching $\mathcal{F}^{B^n}$ becomes 
$U_g 
({\cal F}_1 \otimes \cdots \otimes {\cal F}_{d_U})
U_g^\dagger$.
That is, this pinching
depends only on an element of
$d_U$-nomial combinatorics
among $n$ elements.

We are now in a position to find an upper bound on $\nu_2$.
We again assume that $u^n$ has the particular form given by Eq. \eqref{sample-u}, which we repeat here for clarity: 
\begin{align}
\label{sample-u-repeat}
       u^n = (\underbrace{u_1, \ldots, u_1}_{n_1}, 
\underbrace{u_2, \ldots, u_2}_{n_2}, 
\ldots, 
\underbrace{u_{d_U}, \ldots, u_{d_U}}_{n_{d_U}}).
\end{align}
We next make an assumption about the structure of the sequence $v^n$. We suppose the first $n_1$ elements of 
$v^n$
are arranged as follows:
\begin{align*}
   (\underbrace{v_1, \ldots, v_1}_{n_{11}}, 
\underbrace{v_2, \ldots, v_2}_{n_{12}}, 
\ldots, 
\underbrace{v_{d_V}, \ldots, v_{d_V}}_{n_{1d_V}}),
\end{align*}
such that $\sum_{i=1}^{d_V}n_{1i}=n_1$.
We assume a similar structure for the subsequent $n_2$ elements, followed by the next $n_3$ elements, and so forth, up to the final $n_{d_U}$ elements.
So the sequence $v^n$ is written as follows: 
\begin{align}
\label{dissect}
    \big(
\overbrace{\underbrace{v_1, \ldots, v_1}_{n_{11}}, 
\underbrace{v_2, \ldots, v_2}_{n_{12}}, 
\ldots, 
\underbrace{v_{d_V}, \ldots, v_{d_V}}_{n_{1d_V}}}^{\text{first $n_1$ elements}},
\overbrace{\underbrace{v_1, \ldots, v_1}_{n_{21}}, 
\ldots, \underbrace{v_{d_V}, \ldots, v_{d_V}}_{n_{2d_V}}}^{\text{subsequent $n_2$ elements} },\ldots,\overbrace{
\underbrace{v_1, \ldots, v_1}_{n_{d_U1}}, 
\ldots, 
\underbrace{v_{d_U}, \ldots, v_{d_V}}_{n_{d_Ud_V}}}^{\text{final $n_{d_U}$ elements}}\big).
\end{align}
Consider the first $n_{11}$ elements of both $u^n$ and $v^n$.
Let $\mathcal{R}_{11}$ be the pinching map with respect to the Schur-Weyl decomposition of the tensor product space $B^{\otimes n_{11}}$. Analogous to Eq. \eqref{pinching-n_j}, $\mathcal{R}_{11}$ is described by the following projectors: 
\begin{align*}
    \{
\mathbbm{1}_{{\cal U}_\lambda (SU_{d_B}) }\otimes \mathbbm{1}_{\mathcal{U}_\lambda(S_{n_{11}})}
\}_{\lambda \in Y_{d_B}^{n_{11}}},
\end{align*}
where $\mathbbm{1}_{{\cal U}_\lambda (SU_{d_B}) }$ and $\mathbbm{1}_{\mathcal{U}_\lambda(S_{n_{11}})}$ are the identity operators on the subspaces corresponding to ${\cal U}_\lambda (SU_{d_B})$ and $\mathcal{U}_\lambda(S_{n_{11}})$, respectively.
If we denote the pinching map with respect to the spectral decomposition of $\mathcal{E}^{B^{n_{11}}}\big(\rho_{u_1^{n_{11}}}^{B^{{n_{11}}}}\big)$ by $\mathcal{E}^{B^{n_{11}}}_{1|u_{1}^{n_{11}}}$ (which is the notation we have introduced before), from the structure of the operators $\mathcal{E}^{B^{n_{11}}}_{1|u_{1}^{n_{11}}}\equiv\mathcal{E}^{B^{n_{11}}}\big(\rho_{u_1^{n_{11}}}^{B^{{n_{11}}}}\big)$ and $\rho_{{v_1}^{n_{11}}}^{n_{11}}$, we know that
\begin{align*}
\mathcal{E}^{B^{n_{11}}}_{1|u_{1}^{n_{11}}}\circ \mathcal{R}_{11} (\rho_{{v_1}^{n_{11}}}^{n_{11}}) 
=
\mathcal{R}_{11}\circ \mathcal{E}^{B^{n_{11}}}_{1|u_{1}^{n_{11}}}(\rho_{{v_1}^{n_{11}}}^{n_{11}}).
\end{align*}
We proceed with this procedure for each subset of elements within $v^n$, as indicated by the underbraces in Eq. \eqref{dissect}. 
Let $\mathcal{R}_{ij}$ denote the pinching map according to the Schur-Weyl decomposition of the tensor product Hilbert space $B^{\otimes n_{ij}}$, so a pinching map on $B^{\otimes n}$ is defined as ${\cal R}_{11} \otimes {\cal R}_{12} \otimes 
\cdots \otimes {\cal R}_{d_Ud_V}
$. We have:
\begin{align*}
\mathcal{E}^{B^n}_{1|u^n}\big(\rho_{v^n}^{B^{ n}}\big)
&=
\mathcal{E}^{B^n}_{1|u^n}\circ
(
{\cal R}_{11} \otimes {\cal R}_{12} \otimes 
\cdots \otimes {\cal R}_{d_Ud_V}
)\big(\rho_{v^n}^{B^{ n}}\big)\\
&=
(
{\cal R}_{11} \otimes {\cal R}_{12} \otimes 
\cdots \otimes {\cal R}_{d_Ud_V}
)\circ\mathcal{E}^{B^n}_{1|u^n}
\big(\rho_{v^n}^{B^{ n}}\big)
\end{align*}
We can now apply the same evaluation as for $\nu_1$.
We find that 
the number of eigenvalues of 
$\mathcal{E}^{B^n}_{1|u^n}\big(\rho_{v^n}^{B^{ n}}\big)$
 is upper bound by
$(n+1)^{d_U d_V(d_B+2)(d_B-1)/2} $.

Next, we obtain an upper bound on $\nu_3$, the maximum number of distinct eigenvalues of the operator $\mathcal{E}^{B^n}_{1|u^n}\big(\rho^{B^n}_{v^n,x^n}\big)$. We again assume that the sequences $u^n$ and $v^n$ have the structure given by Eqs. \eqref{sample-u-repeat} and \eqref{dissect}. Now, we assume that the first $n_{11}$ the sequence $x^n$ have the following structure:
\begin{align*}
   (\underbrace{x_1, \ldots, x_1}_{n_{111}}, 
\underbrace{x_2, \ldots, x_2}_{n_{112}}, 
\ldots, 
\underbrace{x_{d_X}, \ldots, x_{d_X}}_{n_{11d_X}}),
\end{align*}
such that $\sum_{i=1}^{d_X}n_{11i}=n_{11}$. Therefore, similar to Eq. \eqref{dissect}, we can write $x^n$ as follows:

\begin{align*}
& \overbrace{
\overbrace{\underbrace{x_1, \ldots, x_1}_{n_{111}}, 
\underbrace{x_2, \ldots, x_2}_{n_{112}}, 
\ldots, 
\underbrace{x_{d_X}, \ldots, x_{d_X}}_{n_{11d_X}}}^{\text{First $n_{11}$ elements of $v^n$}},
\overbrace{\underbrace{x_1, \ldots, x_1}_{n_{121}}, 
\ldots, \underbrace{x_{d_X}, \ldots, x_{d_X}}_{n_{12d_X}}}^{\text{subsequent $n_{12}$ elements of $v^n$} },\ldots,
\overbrace{
\underbrace{x_1, \ldots, x_1}_{n_{1d_V1}}, 
\ldots, 
\underbrace{x_{d_X}, \ldots, x_{d_X}}_{n_{1d_Vd_X}}}^{\text{final $n_{1d_V}$ elements of the first $n_{11}$}}}^{\text{First $n_1$ elements of $u^n$}}
\end{align*}
 where $\sum_{j=1}^{d_X}n_{11j}=n_{11}$, $\sum_{i=1}^{d_V}\sum_{j=1}^{d_X}n_{1ij}=n_{1}$, and $\sum_{j=1}^{d_U}n_{j}=n$.
Now, considering the first $n_{111}$ elements of $x^n$, we need to find an upper bound on the distinct number of eigenvalues of the operator $\mathcal{E}_{1|u^{n_{111}}}\left(\rho^{B^{n_{111}}}_{v_1,x_{1}^{n_{111}}}\right)$. By using the lessons we learned in the previous evaluations, we obtain
\begin{align*}
    (n_{111}+1)^{d_B-1}\times (n_{111}+1)^{d_B(d_B-1)/2}\le(n+1)^{d_B-1}\times (n+1)^{d_B(d_B-1)/2}=(n+1)^{(d_B-1)(d_B+2)/2}.
\end{align*}
This will be repeated for the next $(d_X-1)$ sub-sequences corresponding to $(x_2,\ldots,x_2),\ldots,(x_{d_X},\ldots,x_{d_X})$, so that for the first $n_{11}$ elements of $v^n$, the number of distinct eigenvalues are upper bounded by $(n+1)^{d_X(d_B-1)(d_B+2)/2}$. This will again be repeated for the subsequent $n_{12}$ elements of $v^n$, and so forth, yielding $(n+1)^{d_Vd_X(d_B-1)(d_B+2)/2}$. So far we have considered the first $n_1$ elements of $u^n$; eventually by taking into account the rest of $d_U-1$ sub-sequences of $u^n$, we obtain the upper bound $(n+1)^{d_Ud_Vd_X(d_B-1)(d_B+2)/2}$. This concludes the proof.
\end{proof}

\bigskip
\subsection{Binary hypothesis-testing}
We next study a binary hypothesis-testing problem between a null state $\rho$ and an alternative state $\sigma$, where the corresponding two-outcome POVM is constructed using pinching maps. For two Hermitian operators $T$ and $O$, consider the operator $(T-O)$ with the following eigenspace decomposition
\begin{align*}
    T - O=\sum_{i}g_i P_i,
\end{align*} 
where $P_i$ is the projector onto the eigenspace corresponding to the eigenvalue $g_i$.
We define the operator $\{T\ge O\}$ as 
\begin{align*}
    \{T\ge O\}\coloneqq\sum_{i:g_i\ge 0} P_i.
\end{align*}
We use this definition to define the projector $\Pi\coloneqq\{\mathcal{E}_{\sigma}(\rho)\geq M\sigma\}$, where $\mathcal{E}_{\sigma}$ is the pinching map with respect to the spectral decomposition of the state $\sigma$ and $M$ is a positive number. Projectors of this kind are naturally encountered in the derivation of coding theorems. In particular, when we employ Hayashi-Nagaoka inequality Lemma \ref{hayashi-nagaoka} in the coding theorem, two terms emerge. The first term denotes the probability of making an error in detecting the true state (type I error), while the second term encompasses all other codewords, represented collectively by $\sigma$ due to random coding, are erroneously identified as true (type II error). We now find an upper bound on combination of these errors using the aforementioned projector.

\begin{lemma}
\label{hp-testing}
Consider the binary POVM $
\{\Pi, I-\Pi\}$ constructed from the projector $\Pi\coloneqq\{\mathcal{E}_{\sigma}(\rho)\geq M\sigma\}$, where $\mathcal{E}_{\sigma}$ is the pinching map with respect to the spectral decomposition of the state $\sigma$ and $M$ is a positive number. The following holds for $\alpha\in(0,1)$:
 \begin{align*}
     \tr(I-\Pi)\rho+M\tr\Pi\sigma&\leq M^{\alpha}2^{-\alpha D_{1-\alpha}(\mathcal{E}_{\sigma}(\rho)\|\sigma)}.
 \end{align*}
\end{lemma}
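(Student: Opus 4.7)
The plan is to reduce the inequality to an essentially classical calculation by exploiting the fact that all operators involved commute. First I would observe that the projector $\Pi=\{\mathcal{E}_\sigma(\rho)\ge M\sigma\}$ is, by construction, a spectral projector of the Hermitian operator $\mathcal{E}_\sigma(\rho)-M\sigma$. Because $\mathcal{E}_\sigma(\rho)$ is the pinching of $\rho$ by the spectral decomposition of $\sigma$, it commutes with $\sigma$; consequently $\mathcal{E}_\sigma(\rho)-M\sigma$ commutes with $\sigma$ and so does $\Pi$. In particular, $\mathcal{E}_\sigma(\Pi)=\Pi$, and using the self-adjointness of the pinching channel we get $\tr(I-\Pi)\rho=\tr\mathcal{E}_\sigma(I-\Pi)\rho=\tr(I-\Pi)\mathcal{E}_\sigma(\rho)$. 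The left-hand side of the lemma therefore rewrites as
\begin{align*}
\tr(I-\Pi)\mathcal{E}_\sigma(\rho)+M\tr\Pi\sigma.
\end{align*}

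Next I would diagonalize all three commuting positive operators $\mathcal{E}_\sigma(\rho),\sigma,\Pi$ in a common eigenbasis. Letting $p_i$ and $q_i$ denote the eigenvalues of $\mathcal{E}_\sigma(\rho)$ and $\sigma$ respectively in that basis, the projector $\Pi$ acts as the indicator of the set $\{i:p_i\ge Mq_i\}$. The expression above then becomes
\begin{align*}
\sum_{i:\,p_i<Mq_i} p_i \;+\; M\sum_{i:\,p_i\ge Mq_i} q_i \;=\; \sum_i \min(p_i,Mq_i).
\end{align*}

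The remaining step is the elementary inequality $\min(a,b)\le a^{1-\alpha}b^{\alpha}$, valid for $a,b\ge 0$ and $\alpha\in(0,1)$, applied termwise with $a=p_i$, $b=Mq_i$. This yields
\begin{align*}
\sum_i\min(p_i,Mq_i)\;\le\; M^\alpha\sum_i p_i^{1-\alpha} q_i^\alpha \;=\; M^\alpha\tr\bigl[\mathcal{E}_\sigma(\rho)^{1-\alpha}\sigma^\alpha\bigr],
\end{align*}
where the last equality uses the common diagonalization. By the definition of the Petz Rényi divergence at parameter $1-\alpha$, $\tr[\mathcal{E}_\sigma(\rho)^{1-\alpha}\sigma^\alpha]=2^{-\alpha D_{1-\alpha}(\mathcal{E}_\sigma(\rho)\|\sigma)}$, which gives the claimed bound.

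There is not really a hard step here once one notices the commutativity; the main points to get right are the self-adjointness manipulation in the first paragraph (so that one can legitimately replace $\rho$ by $\mathcal{E}_\sigma(\rho)$ in the type-I term), and matching the sign convention in the definition of Petz Rényi to avoid an off-by-one in the exponent. Both are bookkeeping rather than substantive obstacles.
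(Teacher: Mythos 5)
Your proof is correct and is essentially the same argument as the paper's: both first use $\mathcal{E}_\sigma(\Pi)=\Pi$ and the self-adjointness of the pinching to replace $\rho$ by $\mathcal{E}_\sigma(\rho)$ in the type-I term, and then apply the bound $\min(a,b)\le a^{1-\alpha}b^\alpha$ to the two commuting operators. The only (cosmetic) difference is that the paper keeps this last step at the operator level via operator monotonicity of $x^\alpha$ and $x^{1-\alpha}$, whereas you pass to a common eigenbasis and argue termwise.
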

\begin{proof}
We find an upper bound on each of the terms on the left-hand-side as follows: Note that we adopt the notation $\mathcal{E}_{\sigma}(\rho)^{x}=\left(\mathcal{E}_{\sigma}(\rho)\right)^{x}$.
For the first term we obtain: 
\begin{align*}
    \tr(I-\Pi)\rho&\stackrel{\text(a)}{=}\tr\{\big(\mathcal{E}_\sigma(I-\Pi)\big)\rho\}\\
    &\stackrel{\text(b)}{=}\tr(I-\Pi)\mathcal{E}_\sigma(\rho)\\
    &=\tr(I-\Pi)\mathcal{E}_\sigma(\rho)^{1-\alpha}\mathcal{E}_\sigma(\rho)^{\alpha}\\
    &\stackrel{\text{(c)}}{\le} \tr(I-\Pi)\mathcal{E}_\sigma(\rho)^{1-\alpha}(M\sigma)^{\alpha},
\end{align*}
where (a) follows from the fact the the projector $\Pi$ commutes with the operator $\sigma$, thus the pinching map $\mathcal{E}_\sigma$ has no effect on the projector, (b) uses the cyclicity of the pinching map inside the trace, and (c) follows from the definition of the pinching projector as well as the operator monotonicity of the function $x^{\alpha}$ for $\alpha\in(0,1)$. For the second term, we obtain:
\begin{align*}
     \tr\Pi\sigma&=\tr\Pi\sigma^{1-\alpha}\sigma^\alpha\\
     &\le\tr\Pi(M^{-1}\mathcal{E}_\sigma(\rho))^{1-\alpha}\sigma^\alpha,
\end{align*}
where the inequality follows from the definition of the projector and the monotonicity of the function $x^{1-\alpha}$. Summing up, we obtain: 
\begin{align*}
    \tr(I-\Pi)\rho+M\tr\Pi\sigma\leq M^\alpha\tr\mathcal{E}_\sigma(\rho)^{1-\alpha}\sigma^\alpha=M^{\alpha}2^{-\alpha D_{1-\alpha}(\mathcal{E}_\sigma(\rho)\|\sigma)}.
\end{align*}
\end{proof}

\medskip

In the previous lemma, the error probability is upper bounded in terms of the Petz's R\'enyi entropy where the first argument involves a pinching map. The following lemma demonstrates that removing the pinching from the first argument is possible, but the Petz's R\'enyi entropy transforms to the Sandwich R\'enyi entropy.

\begin{lemma}
\label{petz-sandwich}
    Let $\mathcal{E}_{\sigma}$ be the pinching map with respect to the spectral decomposition of the operator $\sigma$. The following inequaliy holds:
    \begin{align*}
        \widetilde{D}_{1-\alpha} (\rho\| \sigma) \le \log\nu +  D_{1-\alpha}(\mathcal{E}_{\sigma}(\rho)\|\sigma),
    \end{align*}
    where $\nu$ is the number of the distinct eigenvalues of the operator $\sigma$. It is useful to write this inequality equivalently as follows:
    \begin{align*}
        2^{-\alpha D_{1-\alpha}(\mathcal{E}_{\sigma}(\rho)\|\sigma)}\leq
        \nu^{\alpha} 2^{-\alpha \widetilde{D}_{1-\alpha} (\rho\| \sigma) }.
    \end{align*}
\end{lemma}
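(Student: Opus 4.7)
The plan is to reduce the claimed inequality to a trace estimate on the pinched operator, and then invoke a Schatten quasinorm subadditivity. Set $\beta \coloneqq 1-\alpha \in (0,1)$, $\gamma \coloneqq (1-\beta)/(2\beta)$, and $\tilde\rho \coloneqq \sigma^\gamma \rho \sigma^\gamma$. Two identifications are immediate. Because $\sigma^\gamma$ commutes with every spectral projector $P_i$ of $\sigma$, one has $\sigma^\gamma \mathcal{E}_\sigma(\rho)\sigma^\gamma = \mathcal{E}_\sigma(\tilde\rho)$; and because $\mathcal{E}_\sigma(\rho)$ commutes with $\sigma$, the functional calculus gives $\tr[\mathcal{E}_\sigma(\rho)^\beta \sigma^{1-\beta}] = \tr[(\sigma^\gamma \mathcal{E}_\sigma(\rho)\sigma^\gamma)^\beta] = \tr[\mathcal{E}_\sigma(\tilde\rho)^\beta]$. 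Since also $\widetilde D_\beta(\rho\|\sigma) = \tfrac{1}{\beta-1}\log\tr\tilde\rho^\beta$ and $\beta-1<0$, the lemma is equivalent to the operator-level estimate
\begin{equation*}
\tr \mathcal{E}_\sigma(\tilde\rho)^\beta \;\le\; \nu^{1-\beta}\,\tr \tilde\rho^\beta.
\end{equation*}

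To prove this, I would represent the pinching map as a unitary twirl. With $\sigma = \sum_{j=1}^\nu s_j P_j$ the spectral decomposition and $U_k \coloneqq \sum_{j=1}^\nu e^{2\pi i jk/\nu} P_j$ for $k = 0,\ldots,\nu-1$, the orthogonality of characters $\sum_{k=0}^{\nu-1} e^{2\pi i (j-l) k/\nu} = \nu\,\delta_{j,l}$ yields $\mathcal{E}_\sigma(X) = \tfrac{1}{\nu}\sum_{k=0}^{\nu-1} U_k X U_k^\dagger$. Applying the Rotfel'd/McCarthy trace inequality, i.e.\ the quasinorm $p$-triangle inequality $\tr(\sum_k A_k)^\beta \le \sum_k \tr A_k^\beta$ valid for positive $A_k$ and $\beta \in (0,1)$, to $A_k \coloneqq \nu^{-1} U_k \tilde\rho U_k^\dagger$, combined with unitary invariance of $\tr(\cdot)^\beta$, yields
\begin{equation*}
\tr \mathcal{E}_\sigma(\tilde\rho)^\beta \;=\; \tr\Bigl(\sum_{k=0}^{\nu-1} \tfrac{1}{\nu}\, U_k \tilde\rho U_k^\dagger\Bigr)^{\!\beta} \;\le\; \sum_{k=0}^{\nu-1} \nu^{-\beta}\,\tr \tilde\rho^\beta \;=\; \nu^{1-\beta}\,\tr\tilde\rho^\beta,
\end{equation*}
which is exactly the required bound.

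The main obstacle is that the familiar pinching inequality $\rho \le \nu\,\mathcal{E}_\sigma(\rho)$ used throughout the rest of the appendix runs the \emph{wrong} direction for this lemma. Combined with operator monotonicity of $x^\beta$ on $\beta \in (0,1)$ it would only yield $\tr\tilde\rho^\beta \le \nu^\beta \tr\mathcal{E}_\sigma(\tilde\rho)^\beta$, which translates to a \emph{lower} bound $\widetilde D_\beta(\rho\|\sigma) \ge D_\beta(\mathcal{E}_\sigma(\rho)\|\sigma) - \tfrac{\beta}{1-\beta}\log\nu$ rather than the desired upper bound. Equivalently, the data-processing inequality for the sandwiched R\'enyi divergence already gives $\widetilde D_\beta(\rho\|\sigma) \ge D_\beta(\mathcal{E}_\sigma(\rho)\|\sigma)$ (since pinched states commute with $\sigma$), so the lemma is a non-trivial \emph{reverse} data-processing bound quantifying by $\log\nu$ how much information the pinching can lose. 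The essential new ingredient is therefore the non-convex trace inequality of Rotfel'd/McCarthy in the quasinorm regime $\beta \in (0,1)$, applied after the character-sum identity converts the pinching into a convex combination of unitary conjugations.
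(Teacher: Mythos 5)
Your proof is correct, and it takes a genuinely different route from the paper's. The paper also starts from the pinching inequality $\rho\le\nu\,\mathcal{E}_\sigma(\rho)$, but escapes the ``wrong direction'' problem you identify by applying it to the \emph{negative} power: it conjugates by $\sigma^{\alpha/(2(1-\alpha))}$, invokes operator anti-monotonicity of $x\mapsto x^{-\alpha}$ to get $\bigl(\nu\,\sigma^{\gamma}\mathcal{E}_\sigma(\rho)\sigma^{\gamma}\bigr)^{-\alpha}\le\bigl(\sigma^{\gamma}\rho\sigma^{\gamma}\bigr)^{-\alpha}$, and then splits $X^{1-\alpha}=X\cdot X^{-\alpha}$, using the self-adjointness of the pinching under the trace to move $\mathcal{E}_\sigma$ onto the untouched factor. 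So your claim that the pinching inequality is unusable here is slightly too strong; it is usable, just not via operator monotonicity of $x^\beta$. Your alternative replaces that machinery with the character-sum representation $\mathcal{E}_\sigma(X)=\nu^{-1}\sum_k U_kXU_k^\dagger$ together with the Rotfel'd/McCarthy subadditivity $\tr(\sum_k A_k)^\beta\le\sum_k\tr A_k^\beta$ for $\beta\in(0,1)$; the reduction to $\tr\,\mathcal{E}_\sigma(\tilde\rho)^\beta\le\nu^{1-\beta}\tr\tilde\rho^\beta$ is exact (the commutation steps check out), and the twirl makes the origin of the $\nu^\alpha$ factor transparent as the number of unitaries. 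Both arguments yield the same constant; the paper's is closer to the derivation in the cited reference [AHW], while yours avoids operator-function inequalities entirely. One small caveat in your surrounding commentary: the data-processing inequality for $\widetilde D_\beta$ is only known for $\beta\ge 1/2$, so your ``reverse DPI'' framing should be restricted to $\alpha\le 1/2$ — but this does not affect the proof itself.
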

\begin{proof}
From the pinching inequality in Lemma \ref{Hayashi-pinching}, 
$\rho\leq\nu \mathcal{E}_{\sigma}(\rho)$,
we have
\begin{align*}
\sigma^{\frac{\alpha}{2(1-\alpha)}}
\rho
\sigma^{\frac{\alpha}{2(1-\alpha)}}\leq\nu\sigma^{\frac{\alpha}{2(1-\alpha)}}
\mathcal{E}_{\sigma}(\rho)
\sigma^{\frac{\alpha}{2(1-\alpha)}}.
\end{align*}
Since $x \mapsto -x^{-\alpha}$ is operator monotone for $\alpha \in (0,1)$, we have
\begin{align}
\Big(\nu\sigma^{\frac{\alpha}{2(1-\alpha)}}
\mathcal{E}_\sigma(\rho)
\sigma^{\frac{\alpha}{2(1-\alpha)}}\Big)^{-\alpha}
\leq
\Big(\sigma^{\frac{\alpha}{2(1-\alpha)}}\rho
\sigma^{\frac{\alpha}{2(1-\alpha)}}\Big)^{-\alpha}.\label{BNR}
\end{align}
Similar to Eq. (65) of \cite{AHW}, we have
\begin{align*}
& 2^{-\alpha D_{1-\alpha} (\mathcal{E}_\sigma(\rho)\| \sigma) }
=\tr \mathcal{E}_\sigma(\rho)^{1-\alpha}
\sigma^\alpha\\
=&
\tr 
\Big(\sigma^{\frac{\alpha}{2(1-\alpha)}}
\mathcal{E}_\sigma(\rho)
\sigma^{\frac{\alpha}{2(1-\alpha)}}
\Big)^{1-\alpha}
\\
=&
\tr 
\Big(\sigma^{\frac{\alpha}{2(1-\alpha)}}
\mathcal{E}_\sigma(\rho)
\sigma^{\frac{\alpha}{2(1-\alpha)}}
\Big)
\Big(\sigma^{\frac{\alpha}{2(1-\alpha)}}
\mathcal{E}_\sigma(\rho)
\sigma^{\frac{\alpha}{2(1-\alpha)}}
\Big)^{-\alpha}
\\
\stackrel{\text(a)}{=}&
\tr 
\Big(\mathcal{E}_\sigma\Big(\sigma^{\frac{\alpha}{2(1-\alpha)}}
\rho
\sigma^{\frac{\alpha}{2(1-\alpha)}}
\Big)\Big)
\Big(\sigma^{\frac{\alpha}{2(1-\alpha)}}
\mathcal{E}_\sigma(\rho)
\sigma^{\frac{\alpha}{2(1-\alpha)}}
\Big)^{-\alpha}
\\
\stackrel{\text(b)}{=}  &
\tr 
\Big(\sigma^{\frac{\alpha}{2(1-\alpha)}}
\rho
\sigma^{\frac{\alpha}{2(1-\alpha)}}
\Big)
\Big(\sigma^{\frac{\alpha}{2(1-\alpha)}}
\mathcal{E}_\sigma(\rho)
\sigma^{\frac{\alpha}{2(1-\alpha)}}
\Big)^{-\alpha}
\\
\stackrel{\text(c)}{\le}  &
\nu^{\alpha} \tr 
\Big(\sigma^{\frac{\alpha}{2(1-\alpha)}}
\rho
\sigma^{\frac{\alpha}{2(1-\alpha)}}
\Big)
\Big(\sigma^{\frac{\alpha}{2(1-\alpha)}}
\rho
\sigma^{\frac{\alpha}{2(1-\alpha)}}
\Big)^{-\alpha}
\\
=&\nu^{\alpha} \tr 
\Big(\sigma^{\frac{\alpha}{2(1-\alpha)}}
\rho
\sigma^{\frac{\alpha}{2(1-\alpha)}}
\Big)^{1-\alpha}\\
= &
\nu^{\alpha} 2^{-\alpha \widetilde{D}_{1-\alpha} (\rho\| \sigma) },
\end{align*}
where
(a) follows from $
\sigma^{\frac{\alpha}{2(1-\alpha)}}
\mathcal{E}_\sigma(\rho)
\sigma^{\frac{\alpha}{2(1-\alpha)}}=\mathcal{E}_\sigma\left(\sigma^{\frac{\alpha}{2(1-\alpha)}}
\rho
\sigma^{\frac{\alpha}{2(1-\alpha)}}
\right)$, (b) follows from the cyclicity of pinching operator inside trace, and (c) follows from \eqref{BNR}.
\end{proof}

\begin{proposition}
Consider the following four cq-states 
    \begin{align*}
    \label{cq-states}
         \rho^{UVXB} &= \sum_{u,v,x}p(u,v,x)\ketbra{u}\otimes \ketbra{v}\otimes\ketbra{x}\otimes\rho^{B}_{u,v,x},\\ 
\rho^{VX-U-B} &= \sum_{u,v,x}p(u,v,x)\ketbra{u}\otimes \ketbra{v}\otimes\ketbra{x}\otimes\rho^{B}_{u},\\
\rho^{UX-V-B} &= \sum_{u,v,x}p(u,v,x)\ketbra{u}\otimes \ketbra{v}\otimes\ketbra{x}\otimes\rho^{B}_{v},\\
\rho^{UV-X-B} &= \sum_{u,v,x}p(u,v,x)\ketbra{u}\otimes \ketbra{v}\otimes\ketbra{x}\otimes\rho^{B}_{x}.
    \end{align*}
Let $\mathcal{E}^{B}$, $\mathcal{E}^{UB}_1$, $\mathcal{E}^{UVB}_2$, and $\mathcal{E}^{UVXB}_3$ be the pinching maps with respect to the spectral decompositions of the operators $\rho^{B}$, $\mathcal{E}^{B}\left(\rho^{V-U-B}\right)$, and $\mathcal{E}^{UB}_1\left(\rho^{U-V-B}\right)$, and $\mathcal{E}^{UVB}_2(\rho^{UV-X-B})$, respectively. The following inequalities hold:
    \begin{align*}
     2^{-\alpha D_{1-\alpha} \left({\cal E}_2^{UVB}(\rho^{VU XB})\| \mathcal{E}_1^{UB}(\rho^{UX-V- B})\right) }
&\le 
\nu_2^\alpha 2^{-\alpha \widetilde{D}_{1-\alpha} \left(\rho^{VU XB}\| \mathcal{E}_1^{UB}(\rho^{UX-V- B})\right) } \\
2^{-\alpha D_{1-\alpha} \left(\mathcal{E}_3^{UVXB}(\rho^{VU XB})\|
\mathcal{E}_2^{UVB}(\rho^{UV-X- B})\right) }  
&\le  \nu_3^\alpha 2^{-\alpha \widetilde{D}_{1-\alpha} \left(\rho^{VU XB}\|
\mathcal{E}_2^{UVB}( \rho^{UV-X-B})\right) },
\end{align*}
where $\nu_2$ and $\nu_3$ are the number of the distinct eigenvalues of the operators $\mathcal{E}^{UB}_1\left(\rho^{U-V-B}\right)$ and $\mathcal{E}^{UVB}_2(\rho^{UV-X-B})$, respectively.
\end{proposition}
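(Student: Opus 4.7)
The plan is to imitate the proof of Lemma \ref{petz-sandwich} in each case, with the role of $\sigma$ played by a pinched state and the role of $\mathcal{E}_\sigma$ played by the next pinching in the nested tower. The proof of Lemma \ref{petz-sandwich} used only three ingredients: the pinching inequality $\rho \le \nu \mathcal{E}_\sigma(\rho)$ (Lemma \ref{Hayashi-pinching}), the operator monotonicity of $x\mapsto -x^{-\alpha}$ for $\alpha\in(0,1)$, and the commutation identity $\sigma^{\beta} \mathcal{E}_\sigma(\rho)\sigma^{\beta} = \mathcal{E}_\sigma(\sigma^{\beta}\rho\sigma^{\beta})$ together with the cyclicity of the pinching inside the trace. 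I would verify that each of these ingredients transfers verbatim to the nested setting.

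For the first inequality, set $\sigma \coloneqq \mathcal{E}_1^{UB}(\rho^{UX-V-B})$, viewed as an operator on $UVXB$ that is trivial on $X$. By construction, $\mathcal{E}_2^{UVB}$ is the pinching associated to the spectral decomposition of $\mathcal{E}_1^{UB}(\rho^{U-V-B})$, and since $\sigma$ differs from this only by a classical label $X$ carrying $p(x|u,v)$, the spectral projectors of $\sigma$ on $UVXB$ are block-diagonal and are fixed by $\mathcal{E}_2^{UVB}$. Thus any power $\sigma^{\beta}$ commutes with the projectors of $\mathcal{E}_2^{UVB}$, giving the needed identity $\sigma^{\beta} \mathcal{E}_2^{UVB}(\rho^{UVXB}) \sigma^{\beta} = \mathcal{E}_2^{UVB}\!\left(\sigma^{\beta}\rho^{UVXB}\sigma^{\beta}\right)$. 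The pinching inequality then gives $\rho^{UVXB} \le \nu_2\, \mathcal{E}_2^{UVB}(\rho^{UVXB})$, where $\nu_2$ is the number of distinct eigenvalues of $\mathcal{E}_1^{UB}(\rho^{U-V-B})$. From this point the chain of manipulations
\[
2^{-\alpha D_{1-\alpha}(\mathcal{E}_2^{UVB}(\rho^{UVXB}) \| \sigma)}
= \tr\left(\sigma^{\frac{\alpha}{2(1-\alpha)}}\mathcal{E}_2^{UVB}(\rho^{UVXB})\sigma^{\frac{\alpha}{2(1-\alpha)}}\right)^{1-\alpha}
\]
can be rewritten as in Lemma \ref{petz-sandwich} by inserting a factor of $(\cdot)(\cdot)^{-\alpha}$, using cyclicity of the pinching under the trace to push $\mathcal{E}_2^{UVB}$ onto the sandwich $\sigma^{\beta}\rho^{UVXB}\sigma^{\beta}$, and then replacing $\mathcal{E}_2^{UVB}(\rho^{UVXB})$ inside the $(-\alpha)$-power by $\nu_2^{-1}\rho^{UVXB}$ via the pinching inequality and operator monotonicity of $-x^{-\alpha}$. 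The resulting expression equals $\nu_2^{\alpha}\, 2^{-\alpha \widetilde{D}_{1-\alpha}(\rho^{UVXB}\|\sigma)}$, which is the claim.

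For the second inequality, repeat the argument with $\sigma \coloneqq \mathcal{E}_2^{UVB}(\rho^{UV-X-B})$ and pinching $\mathcal{E}_3^{UVXB}$, whose projectors by construction commute with all powers of $\sigma$. The pinching inequality yields $\rho^{UVXB}\le \nu_3\, \mathcal{E}_3^{UVXB}(\rho^{UVXB})$, and the same six-line calculation produces the stated bound with prefactor $\nu_3^{\alpha}$. The main (only) subtlety worth checking carefully is the commutation of each sandwich factor $\sigma^{\beta}$ with the corresponding pinching projectors; this is a direct consequence of the defining spectral decompositions, so no new inequality is required beyond those already used in Lemma \ref{petz-sandwich}.
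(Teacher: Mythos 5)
Your proposal is correct and follows exactly the route the paper intends: the paper's own "proof" consists only of the remark that the argument resembles Lemma \ref{petz-sandwich} and is left to the reader, and your write-up is precisely that adaptation. You also correctly isolate the one point that actually needs checking — that each sandwich factor $\sigma^{\beta}$ commutes with the projectors of the nested pinching ($\mathcal{E}_2^{UVB}$ resp. $\mathcal{E}_3^{UVXB}$), which holds because those projectors refine the spectral projectors of the corresponding $\sigma$ block by block in the classical registers — so the chain of steps from Lemma \ref{petz-sandwich} transfers verbatim.
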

\begin{proof}
   The proofs resemble those of Lemma \ref{petz-sandwich}. We defer them to the interested reader.
\end{proof}

\bigskip

\textbf{Acknowledgement:}
We are grateful to Amin Aminzadeh Gohari and Chandra Nair for insightful discussions. 
This work originated from a discussion held at the ``Workshop on Information Theory and Related Fields: In Memory of Ning Cai'' in Bielefeld, Germany. We extend our sincere gratitude to the organizers, particularly Christian Deppe, for their warm hospitality and gracious invitation. FS gratefully acknowledges the support of the Walter Benjamin Fellowship, DFG project No. 524058134.
PH was supported by ARO (award W911NF2120214), DOE (Q-NEXT), CIFAR and the Simons Foundation. MH was supported in part by the National Natural Science Foundation of China under Grant 62171212.

\bibliographystyle{ieeetr}
\bibliography{3-broadcast.bib}

\end{document}